\theoremstyle{plain}
\newtheorem{theorem}{Theorem}[section]
\newtheorem{lemma}[theorem]{Lemma}
\newtheorem{claim}[theorem]{Claim}
\newtheorem{corollary}[theorem]{Corollary}
\newtheorem{fact}[theorem]{Fact}
\theoremstyle{definition}
\newtheorem{definition}[theorem]{Definition}
\theoremstyle{remark}
\newtheorem{remark}[theorem]{Remark}
\newtheorem*{remark*}{Remark}
\DeclareMathOperator{\R}{\mathbb{R}}
\DeclareMathOperator{\poly}{poly}
\DeclareMathOperator{\polylog}{polylog}
\DeclareMathOperator{\cost}{cost}
\DeclareMathOperator{\OPT}{OPT}
\DeclareMathOperator{\ddim}{ddim}
\DeclareMathOperator{\tw}{tw}
\DeclareMathOperator{\dist}{dist}
\newcommand{\diam}{\mathrm{diam}}
\newcommand{\dup}{\mathrm{dup}}
\newcommand{\multi}{\mathrm{dup}}
\newcommand{\inl}{\mathrm{in}}
\newcommand{\far}{\mathrm{far}}
\newcommand{\close}{\mathrm{close}}
\newcommand{\out}{\mathrm{out}}
\newcommand{\GOPT}{\widehat{\OPT}}
\newcommand{\eps}{\varepsilon}
\renewcommand{\epsilon}{\varepsilon}
\renewcommand{\phi}{\varphi}
\newcommand{\calA}{\mathcal{A}}
\newcommand{\calB}{\mathcal{B}}
\newcommand{\calC}{\mathcal{C}}
\newcommand{\calD}{\mathcal{D}}
\newcommand{\calE}{\mathcal{E}}
\newcommand{\calL}{\mathcal{L}}
\newcommand{\calM}{\mathcal{M}}
\newcommand{\calP}{\mathcal{P}}
\newcommand{\calQ}{\mathcal{Q}}
\newcommand{\calS}{\mathcal{S}}
\newcommand{\calT}{\mathcal{T}}
\newcommand{\calV}{\mathcal{V}}
\newcommand{\supp}{\mathrm{supp}}
\newcommand{\aux}{\mathrm{aux}}
\newcommand{\ball}{\mathrm{Ball}}
\newcommand{\ProblemName}[1]{\textsc{#1}}
\newcommand{\kMedian}{\ProblemName{$k$-Median}\xspace}
\newcommand{\kMeans}{\ProblemName{$k$-Means}\xspace}
\newcommand{\kzC}{\ProblemName{$(k,z)$-Clustering}\xspace}
\newcommand{\tzC}[1]{\ProblemName{$(#1,z)$-Clustering}\xspace}
\newcommand{\tzmC}[1]{\ProblemName{$(#1,m,z)$-Clustering}\xspace}
\newcommand{\kzmC}{\ProblemName{$(k,z,m)$-Clustering}\xspace}
\newcommand{\kRMedian}{\ProblemName{$(k,m)$-Median}\xspace}
\title{Coresets for Robust Clustering via Black-box Reductions to Vanilla Case}
\author{Anonymous Authors}
\date{}
\author{
     Shaofeng H.-C. Jiang\thanks{
     Email: \texttt{shaofeng.jiang@pku.edu.cn}
     }\\
     Peking University
     \and
     Jianing Lou
     \thanks{Email: \texttt{loujn@pku.edu.cn}}\\
     Peking University
 }
\begin{document}

\maketitle

\begin{abstract}

We devise $\varepsilon$-coresets for robust \kzC with $m$ outliers through black-box reductions to vanilla clustering. 
Given an $\varepsilon$-coreset construction for vanilla clustering with size $N$, 
we construct coresets of size 
$N\cdot \poly\log(km\eps^{-1}) + O_z\left(\min\{ km\varepsilon^{-1}, m \varepsilon^{-2z}\log^z(km\eps^{-1}) \}\right)$
for various metric spaces, where $O_z$ hides $2^{O(z\log z)}$ factors.
This increases the size of the vanilla coreset by a small multiplicative factor of $\poly\log(km\eps^{-1})$, and the additive term is up to a $(\varepsilon^{-1}\log (km))^{O(z)}$ factor to the size of the optimal robust coreset.
Plugging in recent vanilla coreset results of
[Cohen-Addad, Saulpic and Schwiegelshohn, STOC'21; 
Cohen-Addad, Draganov, Russo, Saulpic and Schwiegelshohn, SODA'25], 
we obtain the first coresets for \kzC with $m$ outliers with size near-linear in $k$ while previous results have size at least $\Omega(k^2)$
[Huang, Jiang, Lou and Wu, ICLR'23; 
Huang, Li, Lu and Wu, SODA'25].

Technically, we establish two conditions under which a vanilla coreset is as well a robust coreset.
The first condition requires the dataset to satisfy special structures -- it can be broken into ``dense'' parts with bounded diameter.
We combine this with a new bounded-diameter decomposition that has only $O_z(km \varepsilon^{-1})$ non-dense points to obtain the $O_z(km \varepsilon^{-1})$ additive bound. 
Another sufficient condition requires the vanilla coreset to possess an extra size-preserving property.
To utilize this condition, we further give a black-box reduction that turns a vanilla coreset to the one that satisfies the said size-preserving property,
and this leads to the alternative $O_z(m\varepsilon^{-2z}\log^{z}(km\eps^{-1}))$ additive size bound.

We also give low-space implementations of our reductions in the \emph{dynamic} streaming setting.
Combined with known streaming constructions for vanilla coresets
[Braverman, Frahling, Lang, Sohler and Yang, ICML'17; 
Hu, Song, Yang and Zhong, arXiv'1802.00459],
we obtain the first dynamic streaming algorithms for coresets for \kMedian (and \kMeans) with $m$ outliers,
using space $\tilde{O}(k + m) \cdot \mathrm{poly}(d\varepsilon^{-1}\log \Delta)$
for inputs on a discrete grid $[\Delta]^d$.

\end{abstract}

\section{Introduction}

\kzC is a fundamental problem that is well studied in both computer science and related areas such as operations research.
Given a metric space $(V, \dist)$ and a dataset $X \subseteq V$,
\kzC aims to find a center set $C \subseteq V$ of $k$ points, such that the clustering objective $\cost_z(X, C)$ is minimized, i.e.,
\[
    \cost_z(X, C) := \sum_{x \in X} (\dist(x, C))^z,
\]
where $\dist(x, C) := \min_{c \in C} \dist(x, c)$.
This formulation generally captures several well known variants of $k$-clustering,
including \kMedian (when $z = 1$) and \kMeans (when $z = 2$).

Unfortunately, datasets can often be noisy, and the objective of \kzC is very sensitive to the noise.
In fact, even adding a single \emph{outlier} point that is distant to every other point could significantly bias the clustering centers towards the outlier point. 
Hence, robust variants of clustering are much desired in order to combat this issue.
We consider a natural formulation suggested by~\cite{DBLP:conf/soda/CharikarKMN01},
called \kzC clustering with $m$ outliers, \kzmC for short,
where a parameter $m$ is introduced to denote the number of outliers.
The new cost function is denoted as $\cost_z^{(m)}(X, C)$, and it is evaluated
by first removing the $m$ points that are furthest to $C$ from $X$,
denoting the resultant points as $X'$, and compute $\cost_z(X', C)$ (i.e., using the vanilla \kzC objective).
Equivalently, 
\[
    \cost_z^{(m)}(X,C) := \min_{Y \in \binom{X}{m}} \cost_z(X \setminus Y,C).
\]

Coreset~\cite{DBLP:conf/stoc/Har-PeledM04} is a powerful technique for obtaining efficient algorithms for clustering.
Roughly speaking, an $\eps$-coreset is a tiny proxy of the dataset that approximates the clustering objective within $(1 \pm \eps)$ factor for every potential center set $C$.
Coresets are not only useful for obtaining fast algorithms,
but can also be converted into streaming~\cite{DBLP:conf/stoc/Har-PeledM04},
distributed~\cite{DBLP:conf/nips/BalcanEL13} and fully-dynamic algorithms~\cite{DBLP:conf/esa/HenzingerK20} via merge-and-reduce framework~\cite{DBLP:conf/stoc/Har-PeledM04}.
Tremendous progress has been made on finding small coresets for (vanilla) \kzC.
Take \kMedian ($z = 1$) in Euclidean $\mathbb{R}^d$ (i.e., $V = \mathbb{R}^d, \dist = \ell_2$) for example,
the initial coreset size $O(k\eps^{-d}\log n)$~\cite{DBLP:conf/stoc/Har-PeledM04} has been improved in a series of works~\cite{DBLP:journals/dcg/Har-PeledK07,Feldman11unified,Sohler18Strong,FeldmanSS20Turning,DBLP:conf/stoc/HuangV20,BJKW21,Cohen-addad2021New,Cohen-Addad22Towards},
all the way to $\tilde{O}(k \epsilon^{-3})$ via a novel sampling framework~\cite{Cohen-addad2021New,Cohen-Addad22Towards},
and this nearly matches a lower bound of $\Omega(k \epsilon^{-2})$~\cite{Cohen-Addad22Towards}. (Alternatively, in the regime of $\epsilon^{-1} \gg k$, a better bound of $\tilde{O}(k^{4/3} \epsilon^{-2})$ may be obtained~\cite{DBLP:conf/nips/Cohen-AddadLSSS22,Huang22OnOptimal}.)

The sampling framework proposed by~\cite{Cohen-addad2021New}
is modified to handle several variants of clustering~\cite{Braverman22Power},
and this modified framework is recently adapted to obtain an $O(m) + \tilde{O}(k^3 \eps^{-5})$ 
size coreset for \kMedian with $m$ outliers~\cite{Huang2022Near-optimal},
which exponentially improves a decade-old $(k+m)^{O(k + m)} \poly(d \eps^{-1} \log n)$ bound~\cite{feldman2012data}.
This bound is further improved to $O(m) + \tilde{O}(k^2 \eps^{-4})$ in a recent work~\cite{Huang2023General}.
However, even this improved bound is unlikely to be tight;
Specifically, there is a sharp transition 
from $m = 0$ (i.e., the vanilla case) to $m = 1$ (i.e., only one outlier is considered) that increases the size by at least a $k \epsilon^{-1}$ factor (compared with~\cite{Cohen-Addad22Towards}).
It is unclear if this transition is fundamental and whether or not it can be avoided.
Technically, existing approaches for robust coresets are built on frameworks designed for the vanilla case and is adapted in an ad-hoc way.
In case improved bounds are obtained for the vanilla case,
it is still technically nontrivial to adapt it to the robust case.

In this paper, we systematically address these issues
by proposing a new framework via black-box reductions to the vanilla case.
Specifically, we wish to start from any vanilla coreset algorithm (which may be the optimal one),
and figure out how to add only a few points to convert it into a robust coreset,
ideally to obtain a size bound that is very close to the vanilla case.
Indeed, this study helps to fundamentally understand ``the price of robustness'' for coresets, i.e., the exact gap in size bounds between vanilla clustering and robust clustering.
Since the coreset construction is via reduction,
it also opens the door to more applications,
such as coreset algorithms for robust clustering in sublinear models
(as long as the vanilla version can be constructed in the corresponding model).

\subsection{Our Results}

Our main result, stated in \Cref{thm:informal}, is a novel coreset construction for \kzmC via a reduction to the vanilla case.
Crucially, this reduction is \emph{black-box} style which works with any vanilla coreset construction without knowing its implementation detail.
As mentioned, this type of bound helps to fundamentally understand the ``price of robustness'' for coresets.
Our result works for general metric spaces, and we choose to state the result for the Euclidean case which is arguably the most natural setting for \kzC.
Overall, this size bound has linear dependence in $N$ (albeit $N$ is evaluated on slightly larger parameters), and the main increase in size is additive.

\begin{theorem}[Euclidean case]
    \label{thm:informal}
    Assume there is an algorithm that constructs an $\eps$-coreset for \kzC of size $N(d,k,\eps^{-1})$ for any dataset from $\R^d$.
Then, there is an algorithm that 
constructs an $\eps$-coreset for \kzmC of size 
    \begin{equation}
        \label{eq:intro_size}
        \min\left\{N(d,k,O(\eps^{-1})) + A_1, N\left(O(d),O(k\log^2(km\eps^{-1})),O(\eps^{-1})\right) + A_2\right\}
    \end{equation}
    for any dataset from $\R^d$, where $A_1 = O_z\left(km\eps^{-1}\right)$ and $A_2 = O_z\left( m\eps^{-2z} \log^z(km\eps^{-1})\right)$.
The two bounds in the $\min$ follow from~\Cref{thm:reduction I} and~\Cref{thm:reduction2}, respectively.
\end{theorem}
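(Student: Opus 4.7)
The plan is to establish the two sizes in the $\min$ of \eqref{eq:intro_size} by two independent black-box reductions, each corresponding to one of the two sufficient conditions flagged in the abstract, and to output the smaller of the two. In both reductions, the task is to take the given vanilla coreset $S_0$ for a suitably transformed input, and augment it with a small additive set so that the union becomes an $\eps$-coreset for \kzmC on the original $X$.

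For the first bound $A_1 = O_z(km\eps^{-1})$, the plan is to produce a decomposition $X = X_{\text{dense}} \sqcup X_{\text{sparse}}$ with $|X_{\text{sparse}}| \le O_z(km\eps^{-1})$, such that $X_{\text{dense}}$ decomposes into groups of diameter bounded relative to their $\cost_z(\cdot,C)$ for every candidate center set $C$. To build this decomposition I would start from an $O(1)$-approximate solution for \kpmzC (i.e., using $k+m$ centers), so that the $m$ extra centers act as local witnesses of ``outlier-like'' behavior. Scale-based ring partitioning around the $k+m$ centers, together with a careful peeling of the top contributors per ring, should yield the desired split. The key structural claim is then: on dense instances, any vanilla $\eps$-coreset automatically preserves $\cost_z^{(m)}(\cdot,C)$, because within a bounded-diameter group any $m$-subset of removed points contributes the average vanilla contribution up to $(1\pm\eps)$. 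Feeding $X_{\text{dense}}$ into the given vanilla algorithm and appending $X_{\text{sparse}}$ verbatim then gives the first bound.

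For the second bound $A_2 = O_z(m\eps^{-2z}\log^z(km\eps^{-1}))$, the plan follows the size-preserving route. Say a vanilla coreset $S$ is \emph{size-preserving} if, for every candidate $C$ and every center $c \in C$, the weighted mass of $S$ assigned to $c$ is within $(1\pm\eps)$ of the unweighted count on $X$. For such $S$, discarding the top-$m$ weighted points of $S$ approximates discarding the top-$m$ points of $X$, which is precisely what $\cost_z^{(m)}$ requires. To obtain a size-preserving coreset from an arbitrary vanilla one, I would invoke the black-box reduction announced in the abstract: it transforms the instance (replacing $k$ by $O(k\log^2(km\eps^{-1}))$ and blowing up $d$ by a constant) and then applies the given vanilla algorithm, producing the term $N(O(d),O(k\log^2(km\eps^{-1})),O(\eps^{-1}))$ in \eqref{eq:intro_size}. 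An extra set of $O_z(m\eps^{-2z}\log^z(km\eps^{-1}))$ ``reserve'' points, built by a geometric ring decomposition around an approximate solution combined with uniform sampling inside each ring, absorbs the slack of the top-$m$ approximation and delivers $A_2$.

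The hardest step is expected to be the first reduction: simultaneously (i) bounding $|X_{\text{sparse}}|$ by $O_z(km\eps^{-1})$, independently of the vanilla coreset size, and (ii) proving the uniform-over-$C$ transfer from vanilla to robust guarantees on $X_{\text{dense}}$. The difficulty is that ``which points are outliers'' depends on $C$, so the decomposition must accommodate all $C$ at once; I expect to handle this by arguing that for any $C$, the $m$ farthest points from $C$ are contained in $X_{\text{sparse}}$ together with tails of dense groups whose removal is $\eps$-absorbed by the bounded-diameter property. The second reduction is more modular, but the passage from size-preservation plus a top-$m$ approximation to an honest $(1\pm\eps)$ bound on $\cost_z^{(m)}$ still requires a careful case split on whether the outliers under $C$ fall into the reserve set or inside a bounded-diameter zone, and the black-box construction of a size-preserving coreset is itself a nontrivial ingredient whose parameter blow-up dictates the shape of the second term in \eqref{eq:intro_size}.
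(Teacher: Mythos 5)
Your high-level architecture matches the paper's (a dense/sparse decomposition plus ``vanilla coresets on dense data are robust'' for the first bound; a size-preserving route for the second), but both reductions as you sketch them have concrete gaps. For the first bound, starting from an $O(1)$-approximation to \kpmzC with $k+m$ centers does not give $A_1=O_z(km\eps^{-1})$: once you declare a cluster sparse when it has fewer than $(1+\eps^{-1})m$ points, you can have up to $k+m$ such clusters, so the sparse side can hold $\Theta\bigl((k+m)m\eps^{-1}\bigr)$ points, i.e.\ an extra $m^{2}\eps^{-1}$ term that is not absorbed when $m\gg k$. The paper instead uses a tri-criteria $(2^{O(z)},O(1),O(1))$-approximation $C^*$ (only $O(k)$ centers, allowed to miss $O(m)$ outliers), puts the $O(m)$ outliers and the at most $\cost_z^{(\gamma m)}(X,C^*)\cdot\lambda^{-z}=O_z(\eps^{-1}m)$ points at distance more than $\lambda$ from $C^*$ directly into the sparse side, and clusters the remainder around $C^*$, so each remaining part automatically has diameter $O(\lambda)$ with $\lambda=\Theta_z\bigl((\eps\OPT/m)^{1/z}\bigr)$. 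Your ring partition does not obviously produce parts of diameter $O(\lambda)$ (a ring at radius $r$ has diameter $\Theta(r)$), and it is exactly this diameter scale, combined with each part containing $\ge(1+\eps^{-1})m$ points, that yields the per-point bound $\dist^z(x,C)\le \frac{2\eps}{m}\cost_z^{(m)}(X,C)+O_z(\lambda^z)$ on which the dense-case transfer rests.

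For the second bound, your definition of size-preserving (per-center mass within $(1\pm\eps)$ for every candidate $C$) is not the property the argument needs, and the sentence ``discarding the top-$m$ weighted points of $S$ approximates discarding the top-$m$ points of $X$'' is the entire difficulty rather than a consequence: per-center mass preservation says nothing about how far coreset mass is displaced inside a (possibly huge) cluster, and your formulation contains no scale $\lambda$, so there is nothing to make the unavoidable additive error small. The paper's property is \emph{exact} mass preservation on a fixed $\lambda$-bounded partition $\calP$ (so the coreset is a transport of each point by at most $\lambda$), and even then the proof is not a top-$m$ exchange: it constructs an auxiliary center set $C^{\aux}$ on ``significant'' outliers and shows the gap $\cost_z^{(m)}(\cdot,C)-\cost_z(\cdot,C\cup C^{\aux})$ is preserved, which additionally forces the vanilla coreset to hold for $k+|\calP|$ centers. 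Moreover, in the paper $A_2$ is not a randomly sampled ``reserve'' set; it is simply the far/outlier point set taken at a finer scale (smaller by the $\poly(\eps^{-1}\log(km\eps^{-1}))$ blow-up incurred by the sparse-partition grouping), added verbatim, keeping the construction deterministic, and the $O(k\log^2(km\eps^{-1}))$ center count arises precisely from that grouping step — the ingredient your sketch defers to ``the black-box reduction announced in the abstract'' without supplying it.
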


As mentioned, our result in \Cref{thm:informal} is applicable to various other metric spaces besides Euclidean,
including doubling metrics, general finite metrics and shortest-path metrics for graphs that exclude a fixed minor.
The size bounds only needs to change the parameters of $N$ according to the metric and does not change the additive terms $A_1$ and $A_2$.
For instance, for doubling metrics, our bound simply replaces the $d$ in~\eqref{eq:intro_size} by the doubling dimension of the metric space (up to constant factor).
Detailed size bounds can be found in \Cref{sec:application}.

We start with justifying the tightness of our size bounds without plugging in any concrete vanilla coreset bound $N$. Observe that~\Cref{thm:informal} actually provides two size bounds, each corresponding to one of the two terms in the $\min$ of~\eqref{eq:intro_size}.
The first bound of $A_1 + N(d,k,O(\eps^{-1}))$ is more useful when $m$ is small.
In particular, it makes our result the first to achieve a smooth transition from $m = 0$ (vanilla case) to $m > 0$ (robust case) in an asymptotic sense, even when plugging in the optimal vanilla coreset.
Indeed, the $O_z(km \epsilon^{-1})$ bound becomes $O_z(k \epsilon^{-1})$ when $m = O(1)$,
and this is asymptotically dominated by a lower bound of $\Omega(k \epsilon^{-2})$ for vanilla coreset as shown in~\cite{Cohen-Addad22Towards}.
For the second bound, it needs to use $k$ that is $\poly\log(km\epsilon^{-1})$ larger in $N$, but under the typical case of $N(d,k,\eps^{-1}) = \poly(dk\eps^{-1})$, this only increases $N$ by a factor of $\poly\log(km\eps^{-1})$ which is minor.
Regarding its additive term $A_2$, it is actually up to only a $(\epsilon^{-1}\log (km))^{O(z)}$ factor to the optimal robust coresets,
which is nearly tight with respect to $m$
due to a lower bound of $\Omega(m)$ for robust coreset~\cite{Huang2022Near-optimal}.

\paragraph{Specific Coreset Size Bounds} 
Since our reduction is black box, the large body of results on coresets for vanilla clustering can be readily applied.
We start with listing in \Cref{tab:result} the concrete coreset bounds
obtained by plugging in recent results for \kzC~\cite{Cohen-Addad22Towards,DBLP:conf/nips/Cohen-AddadLSSS22,Huang22OnOptimal,Cohen-AddadD0SS25}.
These lead to the first coresets of near-linear size in $k$ for \kzmC in all the metric spaces listed,
thereby improving the previous $k^2$ dependency~\cite{feldman2012data,Huang2022Near-optimal,Huang2023General}.
We can also obtain improved coresets when plugging in results designed for specific parameters.
For instance, in low-dimensional Euclidean space, a coreset for $\tzC{1}$ of size $2^{O(z \log z)} \cdot \tilde{O}(\sqrt{d}\eps^{-1})$ was given by~\cite{DBLP:conf/icml/HuangHH023}.
Applying this result to \Cref{thm:informal}, we obtain a coreset for $\tzmC{1}$ of size $2^{O(z \log z)} \cdot \tilde{O}((m + \sqrt{d})\eps^{-1})$.

\begin{table}[ht]
    \centering
    \caption{\small List of relevant coreset bounds under various metric spaces.
    In all the bounds reported in the table we omit a minor multiplicative factor of $(z\log(km\eps^{-1}))^{O(z)}$.
    For our result listed in the last row, the exact form (without omitting the abovementioned minor factor) is
    $\min\{ \mathbf{N} + O_z(km\epsilon^{-1}), \mathbf{N} \cdot \polylog(km\epsilon^{-1}) + O_z(m\epsilon^{-2z}\log^z(km\eps^{-1}))  \}$,
    noting that the $\mathbf{N}$ in the second term of $\min\{\cdot,\cdot\}$ has a $\polylog(km\epsilon^{-1})$ factor.
    }
    \label{tab:result}
    \small
    \begin{tabularx}{\textwidth}{llll}
        \toprule
        metric space $M$ & problem & size & reference \\
        \midrule
        Euclidean $\mathbb{R}^d$ & vanilla ($\mathbf{N}$) &  $
        kd\eps^{-\max\{2,z\}}$ & \cite{Cohen-addad2021New} \\
        &vanilla ($\mathbf{N}$)& $k\eps^{-2-z}$ & \cite{Cohen-Addad22Towards}\\
        &vanilla ($\mathbf{N}$)& $k^{\frac{2z+2}{z+2}}\eps^{-2}$ & \cite{Huang22OnOptimal}\\
& robust, $z = 1$ & $(k+m)^{k+m}(\eps^{-1}d\log n)^2$ & \cite{feldman2012data} \\
         & robust & $m + k^3\eps^{-3z-2}$ & \cite{Huang2022Near-optimal} \\
         & robust        & $m+k^2\eps^{-2z-2}$ & \cite{Huang2023General}  \\
doubling metrics 
        & vanilla ($\mathbf{N}$) & $k\epsilon^{-\max\{2, z\}} \cdot \ddim(M)$ & \cite{Cohen-addad2021New} \\
& robust        & $m + k^2\eps^{-2z}\cdot(\ddim(M)+\eps^{-1})$ & \cite{Huang2023General}      \\
$n$-point metric space & vanilla ($\mathbf{N}$) & $k\eps^{-\max\{2,z\}}\cdot\log n$ & \cite{Cohen-addad2021New} \\
& robust, $z = 1$        & $(k+m)^{k+m}\eps^{-2}\log^4 n$ & \cite{feldman2012data} \\
         & robust        & $m + k^2\eps^{-2z}(\log n+\eps^{-1})$ & \cite{Huang2023General} \\
bounded treewidth graphs & vanilla ($\mathbf{N}$) & $k\eps^{-\max\{2,z\}}\cdot\tw$ & \cite{Cohen-addad2021New} \\
& robust        & $m + k^2\eps^{-2z-2}\cdot \tw$ & \cite{Huang2023General} \\
excluded-minor graphs & vanilla ($\mathbf{N}$), $z = 1$ & $k\eps^{-2}$ & \cite{Cohen-AddadD0SS25} \\
         & vanilla ($\mathbf{N}$) & $k\eps^{-2}\cdot \min\{\eps^{-z-1}, k\}$ & \cite{Cohen-AddadD0SS25} \\
         & robust        & $m + k^2\eps^{-2z-2}$ & \cite{Huang2023General}      \\
all the above & robust & $\mathbf{N} + \min\{km\epsilon^{-1}, m\epsilon^{-2z}\}$ & ours \\
        \bottomrule
    \end{tabularx}
\end{table}

It is worth mentioning that our entire algorithm is deterministic, provided that the given vanilla coreset algorithm is deterministic.
As highlighted in recent surveys~\cite{DBLP:journals/ki/MunteanuS18, DBLP:journals/widm/Feldman20},
deterministically constructing coresets is an important and less understood open question, even for vanilla clustering without outliers.
The best-known deterministic construction works only in low-dimensional Euclidean space~\cite{DBLP:journals/dcg/Har-PeledK07}, yielding a coreset of size $\poly(k) \cdot \eps^{-O(d)}$.
For Euclidean \kMeans, a coreset of size $k^{\poly(\eps^{-1})}$ can be constructed deterministically~\cite{FeldmanSS20Turning}, and the size 
can be improved to $\poly(k\eps^{-1})$ if we consider a slightly more general definition of coresets, known as coresets with offset\footnote{In the definition of coresets with offset, the original clustering objective is preserved by adding a universal constant (the offset) to the clustering objective on the coreset.}~\cite{Cohen-addad23Deterministic}. 
All these results can be plugged into our reduction to directly achieve a deterministic coreset construction for \kzmC, with the coreset size increasing as shown in \eqref{eq:intro_size}.
\footnote{For coresets with offset, our reductions and analysis can be easily adapted and still achieve asymptotically the same coreset size.}

\paragraph{Streaming Implementation of~\Cref{thm:informal}}

We also obtain reduction style coreset construction algorithms for \kzmC over a \emph{dynamic} stream of points in $\mathbb{R}^d$ using small space, which follows from a streaming implementation of~\Cref{thm:informal}.
We consider a standard streaming model proposed by~\cite{DBLP:conf/stoc/Indyk04}
which has been widely employed in the literature.
In this model, the input points come from a \emph{discrete} grid $[\Delta]^d$ for some integer parameter $\Delta$,
and they arrive as a dynamic stream with point insertions and deletions.
At the end of the stream, the algorithm should return a weighted set as the coreset.
We present in~\Cref{thm:intro_streaming} the result
for the case of $z = 1$ which is \kMedian with $m$ outliers.

\begin{theorem}[Informal version of \Cref{thm:dynamic coresets}]
    \label{thm:intro_streaming}
    Assume there is a streaming algorithm that constructs an $\eps$-coreset for \kMedian for every dataset from $[\Delta]^d$ presented as a dynamic stream, using space $W(d, \Delta, k,\epsilon^{-1})$ and with a failure probability of at most $1/\poly(d\log\Delta)$. 
Then, there is a streaming algorithm that constructs an $\eps$-coreset for \kRMedian with constant probability for every dataset from $[\Delta]^d$ presented as a dynamic stream,
    using space $\min\{W_1, W_2\}\cdot \poly(d\log\Delta)$, where 
    \begin{align*}
        W_1&= \tilde O(km\eps^{-1}) + W\left(d,\Delta, k, O(\eps^{-1})\right),\\
        W_2&= \tilde O(k+m\eps^{-2}) + W\left(d+ 1, \eps^{-1} \Delta^{\poly(d)}, k\poly(d), O(\eps^{-1})\right).
    \end{align*}
\end{theorem}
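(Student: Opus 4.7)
The plan is to implement the two black-box reductions behind~\Cref{thm:informal} (namely \Cref{thm:reduction I} and \Cref{thm:reduction2}) inside the standard dynamic-streaming framework over $[\Delta]^d$. The key observation is that both reductions are structural in the same way: a vanilla coreset is run on a ``dense'' part of the dataset, while the ``non-dense'' or outlier-like points are simply added as an additive term. If these two parts can be identified in dynamic streams using small space, then a vanilla streaming coreset algorithm (with failure probability boosted to $1/\poly(d\log\Delta)$ so that a union bound over polynomially many invocations still gives constant overall success) can be plugged in essentially black-box to obtain a robust coreset.

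First, I would set up a randomly shifted quadtree decomposition on $[\Delta]^d$ with $L = O(\log\Delta)$ levels, where a level-$i$ cell has side length $2^i$; this is standard in dynamic streaming for geometric problems, and the shift ensures that the bounded-diameter decomposition underlying~\Cref{thm:reduction I} is preserved up to $\poly(d\log\Delta)$ distortion. For each cell I would maintain an $\ell_0$-sampling / sparse-recovery sketch over $[\Delta]^d$ using $\poly(d\log\Delta)$ space per sketch, so that cells whose net point count falls below a threshold can be explicitly recovered at the end of the stream. To realize the $W_1$ bound, I would use these sketches to locate the $\tilde{O}(km\eps^{-1})$ non-dense points; the dense remainder is fed to the vanilla streaming sketch. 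Since the partition is only known at the end of the stream, I would actually maintain the vanilla sketch on the \emph{entire} input and, once the non-dense points are recovered, subtract them as deletions (the dynamic model allows this), exploiting linearity of both the vanilla sketch and the $\ell_0$-samplers. This yields $W_1 = \tilde{O}(km\eps^{-1}) + W(d,\Delta, k, O(\eps^{-1}))$ up to $\poly(d\log\Delta)$ factors.

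Second, for the $W_2$ bound I would implement~\Cref{thm:reduction2} by turning a vanilla streaming coreset into one with the size-preserving property. This involves a per-scale invocation of the vanilla algorithm on the points assigned to each quadtree level; the parameters change to $k\cdot\poly(d)$ centers, to an effective grid of side $\eps^{-1}\Delta^{\poly(d)}$ (reflecting rescaled nested subproblems and the precision needed for size-preservation), and to one additional coordinate encoding the scale or weight class (accounting for the $d+1$ in the theorem statement). The additive $\tilde{O}(k + m\eps^{-2})$ term captures explicit $\ell_0$-sampled representatives of the at most $\tilde{O}(m\eps^{-2})$ outlier-candidate points identified by the size-preserving bound in~\Cref{thm:reduction2}. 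Again the failure probability of each vanilla invocation is set to $1/\poly(d\log\Delta)$, so a union bound over the $\poly(d\log\Delta)$ parallel instances (one per level and per cell class) preserves constant overall success.

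The main obstacle is that the dense / non-dense split underlying both reductions is defined with respect to the \emph{final} dataset and is not known during the stream, so it cannot be used to route updates online. Handling this with arbitrary insertions and deletions forces every identification of outlier-like points to be done \emph{linearly} in the frequency vector on $[\Delta]^d$, through $\ell_0$-samplers and sparse recovery; combined with the linearity of the merge-and-reduce sketches for the vanilla coreset, one can then post-process at stream end to separate the two contributions. Verifying that the randomly shifted quadtree preserves the bounded-diameter decomposition of~\Cref{thm:reduction I} up to a $\poly(d\log\Delta)$ distortion (absorbed into the stated overhead), and that the size-preserving transformation of~\Cref{thm:reduction2} is realizable via linear sketches on the expanded grid $[\eps^{-1}\Delta^{\poly(d)}]^{d+1}$, are the two most delicate technical points.
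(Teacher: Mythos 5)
Your high-level skeleton (identify the outlier-like points with linear sketches, run the vanilla streaming algorithm on the whole stream and cancel those points by feeding deletions at the end, and tolerate the fact that the decomposition is only known at stream end) does match the paper's implementation, and replacing the paper's consistent hashing by a randomly shifted quadtree is not by itself fatal (though the paper deliberately avoids quadtrees: \Cref{lem:bounded size} uses the \emph{worst-case} sparsity guarantee of \Cref{lem:consistent hashing}, which a random shift gives only in expectation). However, there are two concrete gaps. First, for $W_1$ you cannot simply ``locate the $\tilde O(km\eps^{-1})$ non-dense points'' with per-cell sparse recovery: the space of such a two-level recovery scales as (number of non-empty cells) $\times$ (per-cell threshold), the threshold is $\Theta(\eps^{-1}m)$, and without further work the number of non-empty cells can only be bounded by $k\poly(d)+m\poly(d\eps^{-1})$ (each far point or outlier may occupy its own cell), giving $\Omega(m^2)$ space rather than $\tilde O(km\eps^{-1})$ — exactly the trap the paper flags in the remark after \Cref{lem:find light parts}. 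The paper's fix, absent from your proposal, is a preliminary stage that extracts a set $G$ of $\tilde O(m\poly(d\eps^{-1}))$ isolated points by a two-level $\ell_0$-sampler without replacement (\Cref{lem:isolated}, justified by \Cref{lem:bounded size}), after which only $k\poly(d)$ cells remain non-empty and the sparse recovery fits in $\tilde O(km\eps^{-1})$ space; $G$ itself is added verbatim to the coreset. Relatedly, the diameter threshold is $\lambda\approx\eps\OPT/m$, which is unknown; the paper runs $O(zd\log\Delta)$ geometric guesses in parallel and uses the $\perp$-output of the recovery routine to select a valid guess, a validation mechanism your proposal does not provide.

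Second, your description of the $W_2$ bound does not implement Reduction II. The parameters $d+1$, $\eps^{-1}\Delta^{\poly(d)}$ and $k\poly(d)$ do not come from ``per-scale invocations'' with a coordinate ``encoding the scale or weight class'': in the paper each point $x$ is lifted to $(x,\phi(x)\cdot w')$ where $\phi(x)$ is its bucket identifier (an integer in $[\Delta^{\poly(d)}]$) and $w'\approx z\eps^{-1}\Delta^{\poly(d)}$ is a separation weight, realizing the separated duplication of \Cref{lem:duplication for Euclidean case}; this artificial separation is what makes a \emph{single} vanilla coreset with $k'=k\poly(d)$ centers (the $\poly(d)$ arising from the sparsity parameter $\Lambda$ via the bridge-center-set count in \Cref{claim:bounded C'}) approximately size-preserving, after which the exact per-bucket counts are recovered by sparse recovery and the weights are recalibrated so that \Cref{cor:second} (size-preserving $\Rightarrow$ robust) applies. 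Without the separation embedding and the weight calibration, the size-preserving property — and hence robustness of the output — is never established, and the additive $\tilde O(m\eps^{-2})$ term is again the extracted isolated set $G$, not ``outlier-candidate points identified by the size-preserving bound.'' As written, the $W_2$ half of your argument would not go through.
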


Observe that we did not explicitly state the size of the coreset, and only focus on the space complexity,
since the coreset size is upper bounded by the space complexity of the streaming algorithm.
Moreover, once one obtains an $\eps$-coreset at the end of the stream, one can always run another existing coreset construction again on top of it to obtain a possibly smaller coreset.
Hence, it is only the space complexity that matters in streaming coreset construction.
We also note that once the coreset is obtained from our streaming algorithm,
one can continue to find a $(1 + \epsilon)$-approximation to \kRMedian without using higher order of space.

By combining \Cref{thm:intro_streaming} with an existing streaming coreset construction for \kMedian~\cite{Braverman2017Clustering}, we obtain a space complexity of $\tilde O\left(\min\{km\eps^{-1}, m\eps^{-2}\} + k\eps^{-2}\right)\cdot \poly(d\log\Delta)$.
Remarkably, the dependence on $k$ and $\eps$ matches that of~\cite{Braverman2017Clustering}, which is $k\eps^{-2}$.
A similar bound for $z = 2$, i.e., \kMeans with $m$ outliers, may be obtained by combining with a dynamic streaming algorithm for \kMeans (e.g.,~\cite{Song2018Nearly}).
These results are the first nontrivial dynamic streaming algorithms for robust $k$-clustering,
where the dependence on $\epsilon, k, d, \log \Delta$ is comparable to existing bounds for vanilla clustering (up to degree of polynomial),
and the linear dependence in $m$ is also shown to be necessary (see \Cref{claim:lb_m}).

In fact, our result is also the first dynamic streaming coreset construction for \emph{any} variant of \kzC.
Indeed, recent works manage to devise (offline) coresets for variants of clustering,
and they can even immediately imply insertion-only streaming algorithm via standard techniques such as merge-and-reduce~\cite{DBLP:conf/stoc/Har-PeledM04},
but they are not readily applicable to dynamic point streams.
Our new reduction style framework is a new approach that opens a door to obtaining dynamic streaming algorithms for other variants of clustering,
which may be of independent interest.

 \subsection{Technical Overview}
\label{sec:tech_overview}

Our main technical contributions are the establishment of two conditions for a vanilla coreset to become a robust coreset,
and black-box reductions that turn vanilla coresets to satisfy these conditions,
which yields the bounds mentioned in \Cref{thm:informal}.
The two conditions reveal new fundamental structures of the robust clustering problem, especially the structural relation to vanilla clustering,
which may be of independent interest.
Moreover, to make vanilla coresets satisfy the two conditions in a black-box way,
it requires us to devise a bounded-diameter decomposition called almost-dense decomposition,
a new separated duplication transform for metric spaces,
as well as adapting and refining a recently developed sparse partition~\cite{Jia05Universal,DBLP:conf/icalp/Filtser20} (which was also studied under the notion of consistent hashing~\cite{arxiv.2204.02095}).
The sparse partition/consistent hashing technique is also crucially used in our streaming implementation,
which suggests a framework fundamentally different from those based on quadtrees as in previous dynamic streaming algorithms for (vanilla) clustering~\cite{Braverman2017Clustering,Song2018Nearly}.

We focus on presenting these ideas for the \kRMedian problem (which is the $z = 1$ case) in Euclidean $\mathbb{R}^d$ (although most discussion already works for general metrics).
We also discuss how to implement the reductions in streaming.

\subsubsection{First Reduction: $\left(O(km\eps^{-1}) + N\right)$ Size Bound}
\label{sec:first_overview}

\paragraph{Condition I: Vanilla Coresets on Dense Datasets Are Robust}
Our first condition comes from a simple intuition:
we wish $\cost(X, C) \approx \cost^{(m)}(X, C)$ for every $C \subseteq \R^d$, 
which is perhaps the most natural way to ensure a vanilla coreset is automatically a robust coreset.
Now, a sufficient condition is to ensure the dataset is somewhat uniform, in the sense that for every center set $C \subseteq \R^d$, the contribution from every $m$ data points
is at most $\epsilon \OPT$ (or equivalently, each data point contributes $\epsilon \OPT / m$).
This way, whatever the $m$ outliers are, the change of objective from $\cost$ to $\cost^{(m)}$ is always within $\epsilon$ factor.

We thus consider datasets that can be broken into parts
such that each part has diameter $\lambda := O(\epsilon \OPT / m)$ and contains at least $\Omega(\epsilon^{-1} m)$ points.
In such datasets, every data point can find $\Omega(\epsilon^{-1}m)$ points within distance $\lambda = O(\epsilon \OPT / m) $,
hence its contribution is naturally charged to these nearby points, which is averaged to $O(\epsilon \OPT / m)$.
We call such a dataset \emph{dense},
and the above intuition leads to a formal reduction that for dense datasets, a vanilla coreset is as well a robust coreset (\Cref{thm:main1}).

\paragraph{Reduction I: Almost-dense Decomposition}
Of course, a general dataset may not be dense.
Nevertheless, we manage to show a weaker almost-dense decomposition exists for every dataset (\Cref{lem:decomposition}).
This decomposition breaks the dataset into two parts:
a subset $A$ that is guaranteed to be dense
and a remaining subset $B$ that only consists of $O(km\eps^{-1})$ points.
The subset $B$ actually has more refined property:
it can be broken into parts such that each part has diameter at most $\lambda$ (which is similar to the dense part)
and each contains at most $O(\epsilon^{-1}m)$ points (which is the ``complement'' property of the dense part).
We call subset $B$ the \emph{sparse} subset.

To construct such as decomposition,
we start with an optimal solution $C^*$ to the robust clustering
(in the algorithm it suffices to use a tri-criteria solution, see \Cref{def:tri_criteria}).
Let $\lambda = O(\epsilon \OPT / m)$ be the target diameter bound of the dense and sparse parts.
We first identify those isolated points $F$ consisting of both the $m$ outliers and those ``far away'' points with distance more than $\lambda$ from $C^*$.
The number of far away points is bounded by $\OPT/\lambda = O(\eps^{-1} m)$ using an averaging argument.
Now since every remaining point is within $\lambda$ to $C^*$,
we cluster all these remaining points, namely $X \setminus F$,
with respect to $C^*$.
Each cluster automatically has a diameter bound by $O(\lambda)$.
Finally, we take clusters with more than $\Omega(\epsilon^{-1} m)$ points as the dense subset, and the remainder, along with $F$, forms the sparse subset, which contains at most $O(|C^*| \epsilon^{-1}m) + O(\eps^{-1}m) + m = O(km\eps^{-1})$ points in total.

With such a decomposition, one can simply put the sparse subset who has $O(km\eps^{-1})$ points into the coreset,
and use the vanilla coreset on the dense subset. This yields our first $O(km\eps^{-1}) + N$ size bound (recall that $N$ is the size of the vanilla coreset) as in \Cref{thm:informal}.

\subsubsection{Second Reduction: $(m\eps^{-2} + N)\poly\log(km\eps^{-1})$ Size Bound}
\label{sec:second_overview}

\paragraph{Overview of Reduction II}
Indeed, in Reduction I, the step of simply adding sparse subset into the coreset may sound suboptimal.
Hence, our second algorithm is built on top of Reduction I, such that we first follow the steps in Reduction I to build a vanilla coreset on the dense subset,
and then run more refined new steps to the sparse subset.
In fact, the new steps can be viewed as a separate reduction that takes the sparse subset as input and generates a robust coreset.
We focus on these new steps and call it Reduction II (without the steps already in Reduction I). 
Crucially, Reduction II itself is standalone and can run on any dataset (and thus can be independent of Reduction I), 
albeit it would result in a coreset of size $O(\log n)$ (where $n$ is the size of its input dataset).
Luckily, we would eventually run it on the sparse subset which has $O(km\eps^{-1})$ points,
so the $\log n$ factor becomes $\log(km\epsilon^{-1})$,
and the overall size bound is competitive (which yields the other size bound in \Cref{thm:informal}).

Next, we start with describing our Condition II (which is also standalone), and discuss how our Reduction II make a vanilla coreset to satisfy this condition.

\paragraph{Condition II: Size-preserving Vanilla Coresets Are Robust} 
The second condition requires a stronger property in the definition of the coreset (instead of the dense property in the dataset as in Condition I).
We consider a \emph{size-preserving} property, which is defined with respect to some $\lambda$-bounded diameter decomposition -- a partition $\mathcal{P}$ of point set $X$ such that each part $P \in \mathcal{P}$ has diameter at most $\lambda$.
For a given $\lambda$, a (vanilla) coreset $S$ is size-preserving, if there \emph{exists} a $\lambda$-bounded diameter decomposition $\mathcal{P}$, such that for every $P \in \mathcal{P}$, $|P| = |S \cap P|$
(here we interpret the coreset, which is a weighted set, as a multiset).
This ``size-preserving property'' of coresets intuitively means the coreset could be interpreted as
moving each data point by at most $\lambda$ distance.
More formally, there is a one-to-one correspondence $g$ between dataset $X$ and coreset $S$ 
such that $\dist(x, g(x)) \leq \lambda$.

Our Condition II states that if a vanilla coreset $S$ satisfies the \emph{size-preserving property} then it is automatically a robust coreset (\Cref{cor:second}).
Next, we provide an overview of the difficulties we encountered in the proof of Condition II and the high-level ideas for resolving them.

\subparagraph{Naive Approach: Bounding the Gap between $\cost$ and $\cost^{(m)}$}

We explore the condition to make the gap between $\cost(X, C)$ and $\cost^{(m)}(X, C)$ is small.
Unlike Condition I, here we are now making no assumptions about the structure of $X$.
Observe that the precise difference between the two cost functions is the outliers:
Let $X_{\out}$ be the set of outliers (which is defined with respect to $C$),
then the vanilla $\cost(X, C)$ is larger and has an extra term $\sum_{x \in X_{\out}} \dist(x, C)$ compared with the robust $\cost^{(m)}(X, C)$.
An immediate idea to ``avoid'' this term in vanilla $\cost(X, C)$,
is to additionally put a center at each point in $X_{\out}$
so that points from $X_{\out}$ contribute $0$.
This works perfectly when each outlier is separated from every other point since each point in $X_{\out}$ forms a singleton cluster,
and this suggests a necessary condition:
the vanilla coreset $S$ needs to hold for $k' = k + m$ centers, i.e., it is a coreset for $k'$-\textsc{Median}.

Unfortunately, centers on outliers may not form singleton clusters, in which case $\cost(X, C \cup X_{\out})$ can underestimate $\cost^{(m)}(X, C)$. 
Even if it is possible to pick another $C'$ to make this said gap small,
it is still challenging to also prove that $\cost(S,C\cup C')$ is always close to $\cost^{(m)}(S,C)$ for an arbitrary $S$ and $C$
(which is needed to imply $\cost^{(m)}(X,C)\approx \cost^{(m)}(S,C)$).

\subparagraph{Key Observation: Vanilla Coresets Also Preserve the Gap}
Hence, instead of seeking for a specific center set $C'$ that makes the gap $\cost^{(m)}(X, C) - \cost(X, C \cup C')$ small for every $C$,
we show a weaker guarantee, such that the \emph{gap is preserved} between $X$ and the coreset $S$:
we allow a large gap, just that if the gap is large in $X$, then it is also large in $S$, and the quantity is comparable.
Specifically, we show that a coreset $S$ for $k'$-\textsc{Median} with size-preserving property can preserve this gap.
Namely,
\begin{equation}
    \label{eqn:gap_preserve}
    \cost^{(m)}(S, C) - \cost(S, C \cup C') = \cost^{(m)}(X, C) - \cost(X, C \cup C') \pm \epsilon \cost^{(m)}(X, C)
\end{equation}
for some $C'\subseteq X_\out$.
Indeed, this combining with the guarantee $\cost(S, C \cup C') \in (1 \pm \eps) \cost(X, C \cup C')$ (which follows from the $k'$-\textsc{Median} coreset guarantee of $S$) 
implies $\cost^{(m)}(S, C) \approx \cost^{(m)}(X, C)$ (which is the robust coreset guarantee).

To see why the gap can be preserved,
the intuition is that there are only a few points that contribute non-zero values to the gap $ \cost^{(m)}(X,C) - \cost(X,C \cup C') $.
To put it simply, let us first consider $C' = X_{\out}$. Then only those inliers $x$ of $X$ that are closer to outliers $X_{\out}$ than to $C$ can make $\dist(x, C) \neq \dist(x, C \cup C')$ hold. Let $D$ denote such inliers, and let us loosely assume that only $g(D)$ contributes non-zero values to the gap $\cost^{(m)}(S,C) - \cost(S,C \cup C')$ (recalling that $g$ is a one-to-one correspondence between the dataset $X$ and the coreset $S$, such that $\dist(x, g(x)) \leq \lambda$ for every $x \in X$).
Then the difference of these two gaps can be upper bounded by $|\cost(D,C) - \cost(g(D),C)| + |\cost(D,C\cup C') - \cost(g(D),C\cup C')|\le O(|D|\cdot \lambda)$ using the triangle inequality, which is sufficient if $|D|\le O(m)$ and we pick $\lambda = \eps\OPT/m$.
Unfortunately, the naive setup of $C' = X_\out$, as we do in the analysis above, may \emph{not} lead to a small $D$ (noticing that $D$ is defined with respect to $C'$). 
Moreover, $C'$ cannot be made small as well, as this could result in a large error of $\eps \cdot \cost(X, C \cup C')$ (recalling that \eqref{eqn:gap_preserve} combining the coreset guarantee $\cost(S,C\cup C')\in (1\pm \eps)\cdot \cost(X,C\cup C')$ forms the complete cost preservation).
Hence, it remains to find an appropriate $C'$ that satisfies both requirements, and we provide a simple algorithm (\Cref{alg:additional centers}) to precisely achieve this.

\paragraph{Reduction II: Making Vanilla Coresets Size-preserving}
To utilize Condition II, we need to show that, for a given $\lambda > 0$, we can build a vanilla coreset $S$ that is size-preserving using a generic vanilla coreset algorithm in a black-box way. Specifically the coreset should satisfy that there exists a $\lambda$-bounded parition $\calP=(P_1,\dots,P_t)$ of $X$
such that 
$|P_i| = |S\cap P_i|$ for $i\in [t]$ (recalling that we treat a coreset as a multiset).
We first assume a generic partition $\calP$ is given, and assume that $t=|\calP| = k$ for simplicity.

\subparagraph{Key Observation: Separated Instance is Easy} 

We make use of a simple observation: 
if the $k$ parts are well-separated, meaning they are sufficiently distant from each other, then a vanilla coreset for $k$-\textsc{Median} is already (nearly) size-preserving.
To see this, if we put one center at each of the $P_j$'s for $j \neq i$ but without any center at $P_i$,
the clustering cost is roughly $|P_i| \cdot \dist(P_i, X \setminus P_i)$ and this is up to $(1\pm\eps)$ factor
to $|S \cap P_i| \cdot \dist(P_i, X \setminus P_i)$ (by the coreset guarantee of $S$).
This implies a slightly relaxed $|S \cap P_i| \in (1 \pm \epsilon) |P_i|$, and it can be further adjusted to become strictly size-preserving.

\subparagraph{Key Idea: Ensuring Separation Artificially} 
A straightforward idea to deal with a generic $\calP$ is to artificially separate the parts. 
More precisely, for Euclidean $\R^d$, we define a mapping $h : P \to \mathbb{R}^{d + 1}$, such that for $x \in P_i$,
$h(x) := (x, i \cdot w)$ for some sufficiently large $w$ (which can be loosely considered as infinite).
We define $P_i' := h(P_i)$. 
Clearly, $\calP' := (P_1', \ldots, P_k')$ is a $\lambda$-bounded partition for $X' := h(X)$, and now the partition $\calP'$ is well-separated. Thus, one can apply the above argument for $\calP'$, obtain a size-preserving coreset $S'$ for $X'$,
and find the corresponding $S$ for $X$.

\subparagraph{New Issue: Ensuring Coreset Guarantee} However, a crucial issue arises: how can we ensure that $S$ is a coreset for $X$? In other words, we need to translate the coreset guarantee of $S'$ (on $X'$) to that of $S$ (on $X$). 
It seems that the only way is to find a ``bridge'' center set $C'$ for any fixed $C$ such that $\cost(X,C) \approx \cost(X',C')$ and $\cost(S,C) \approx \cost(S',C')$. If such $C'$ exists and $S'$ can handle at least $|C'|$ centers, then the coreset guarantee of $S$ follows. 
Now, let us focus on the requirement $\cost(X,C)\approx\cost(X',C')$. The tricky situation arises when some points in $X$ may have the same nearest center in $C$ but belong to different parts; for instance, assume that $x_i \in P_i$ and $x_j \in P_j$ both have the nearest center $c \in C$. After mapping through $h$, the images $x_i' = h(x_i)$ and $x_j' = h(x_j)$ can be ``infinitely'' far apart. Therefore, to maintain the clustering cost, we need to include two copies, $(c, i \cdot w)$ and $(c, j \cdot w)$, in $C'$. In the worst case, we would need to include $|\calP| = k$ copies of every center in $C$, resulting in a size of $k^2$ for $C'$, and thus $S'$ should be a coreset for \ProblemName{$k^2$-Median}. Unfortunately, this worst-case scenario is unavoidable if $\calP$ is generic, and the reduction to \ProblemName{$k^2$-Median}\xspace falls far from satisfaction, as our ultimate goal is near-linear size in $k$.

\subparagraph{Solution: Grouping Parts using Sparse Partition~\cite{Jia05Universal}} 

To address the above problem, we start by using a more careful way to ``prune'' centers in $C'$.
For a part $P_i \in \calP$, if it is far from $C$, specifically $\dist(P_i, C) > \eps^{-1}\lambda \ge \eps^{-1}\diam(P_i)$, then it suffices to include only one center in $C'$ for $P_i$, which is $c_i = \arg\min_{c \in C} \dist(c, P_i)$, and we can verify that $\cost(P_i, \{c_i\}) \in (1 \pm \eps) \cdot \cost(P_i, C)$. Through this pruning, these far parts can increase $C'$ by at most $k$, which is acceptable.

The challenging part is then handling the close parts. For a close part $P_i$, we denote by $C_i$ the centers that is relevant to it, i.e., $C_i$ contains the nearest centers to points in $P_i$. This implies that $\cost(P_i,C) = \cost(P_i,C_i)$, meaning we only need to include $\{(c,i\cdot w) : c \in C_i\}$ in $C'$ for every $i$.
As discussed earlier, $\sum_i |C_i|$ could be $\Omega(k^2)$ in the worst case.

To further reduce the total number of relevant centers, we propose a new idea: group these parts together to create a new partition $(G_1, \dots, G_l)$ of $X$. The relevant centers of $G_j$ are $\bigcup_{i: P_i \subseteq G_j} C_i$. Our goal then becomes to find a grouping strategy such that $\sum_j |\bigcup_{i: P_i \subseteq G_j} C_i|$ is small.

We achieve this using a more structured partition called \emph{sparse partition}~\cite{Jia05Universal}. Specifically, we can compute a partition $\calQ = (G_1, \dots, G_l)$ of $X$ such that each part $G_j$ has a diameter at most $O(\eps^{-1}\lambda \log n)$, and any subset with a diameter $O(\eps^{-1}\lambda)$ intersects at most $O(\log n)$ parts. For simplicity, assume each part in $\calP$ is entirely covered by exactly one part in $\calQ$ (though this is not generally true, the proof follows similarly). 
Then, $\calQ=(G_1,\dots,G_l)$ fits our needs perfectly. To see this, consider any center $c$. Notice that $c$ can appear in at most $O(\log n)$ of the sets $\bigcup_{i: P_i \subseteq G_j} C_i$, since otherwise the subset $\bigcup_{i: c \in C_i} P_i$, whose diameter is $O(\eps^{-1}\lambda)$, would intersect more than $O(\log n)$ parts, contradicting the guarantee provided by the sparse partition.
As a result, $\sum_j |\bigcup_{i: P_i \subseteq G_j} C_i| \le O(k \log n)$. This ensures that it suffices to construct a coreset for \ProblemName{$O(k \log n)$-Median}\xspace that is size-preserving with respect to $\calQ$.

An issue is that $\calQ$ is $O(\eps^{-1}\lambda\log n)$-bounded, but this can be adjusted through rescaling, resulting in a size increase by a factor of $\poly(\eps^{-1}\log n)$. Given that this coreset applies to a subset of size $n = \poly(km\epsilon^{-1})$ in the full algorithm, the $\log n$ term is acceptable.

\subsubsection{Streaming Implementations}

We provide an overview of how these steps can be adapted to the streaming setting using foundational streaming algorithmic tools.

\paragraph{$\lambda$-Bounded Partition}
Recall that both of our reductions rely on a $\lambda$-bounded partition $\calP$ (where $\lambda = \epsilon \OPT / m$, and we can guess this $\OPT$). Therefore, the key is to design a streaming algorithm that finds a $\lambda$-bounded partition. However, in the streaming setting, it is not possible to maintain the $\lambda$-bounded partition explicitly, as it requires $\Omega(n)$ space, which is not affordable.

Our strategy is to employ a hashing version of sparse partition defined in~\cite{arxiv.2204.02095} that is space efficient and data oblivious and thus suitable for streaming. This hashing partitions the entire $\mathbb{R}^d$ into buckets of diameter of $O(\lambda)$. Hence, after the stream ends, those non-empty buckets form a $\lambda$-bounded partition of $X$. Moreover, this hashing also guarantees that any subset of points with a diameter of $\lambda / \poly(d)$ intersects at most $\poly(d)$ buckets (similar to the sparse partition guarantee).
By combining with the existence of the small-sized partition in \Cref{lem:decomposition},
we can conclude there are very few, i.e., $\poly(d\epsilon^{-1}) \cdot (k  + m) $, distinct buckets after this hashing.

In the following, we fix such a hashing, and we outline how our reductions can be adapted to the streaming setting using foundational streaming algorithmic tools.

\paragraph{The First $O(km)\poly(\eps^{-1}d\log\Delta)$ Space Bound} 
We first show how to implement our Reduction I in the streaming setting. Our main task is to identify all points that lie in small buckets containing fewer than $O(\epsilon^{-1}m)$ points, which correspond to the sparse subset $B$, and construct a vanilla coreset for the points lying in other buckets, which correspond to the dense subset. The latter is easy; once the former is achieved, we can simply run a vanilla dynamic streaming algorithm on a stream representing $X \setminus B$ (which can be achieved using some tricks).

The former task can be translated into a streaming algorithm:
identify at most $a := \poly(d\epsilon^{-1}) \cdot (k  + m)$ buckets,
such that each contain at most $b := O(\eps^{-1}m)$ points.
This task is similar to sparse recover in streaming, but it is two-level which is different from the standard one, i.e., it should recover only from those buckets that contain small number of points.
Hence, we introduce a modified sparse recovery structure (\Cref{lem:sparse recovery}) for this specific task, and it takes about $O(ab)\poly(d\log\Delta) = O(km+m^2) \poly(\epsilon^{-1}d\log\Delta)$ space. 

However, this $m^2$ dependence is still suboptimal. 
To achieve a near-linear space bound in $m$, we show that there exists a subset $F \subseteq X$ of size $m \cdot \poly(d \eps^{-1})$ such that, after removing $F$, the remaining points span very few buckets, allowing us to set $a:=k\poly(d)$ (see \Cref{lem:bounded size}). This subset $F$ can be efficiently identified using a two-level $\ell_0$-sampler~\cite[Lemma 3.3]{arxiv.2204.02095} in the dynamic stream (see \Cref{lem:isolated}). We then use (modified) sparse recovery to identify the sparse subset after removing $F$, reducing the total space to $O(km)\poly(\eps^{-1}d\log\Delta)$.

\paragraph{The Second $\tilde{O}(k+m)\cdot \poly(\eps^{-1}d\log\Delta)$ Space Bound}

Our second algorithm first uses the same streaming algorithm to find $F$ as in the first bound, and recall that the non-empty buckets after removing $F$ is small, i.e., $|\phi(X\setminus F)|\le O(k\poly(d))$. This implies that $\phi(X\setminus F)$ gives rise to a small-sized $\lambda$-bounded partition, denoted by $\calP$. 
Our Condition II states that if we construct a vanilla coreset $S$ that is size-preserving, i.e., $|S \cap P| = |P|$ for every $P$, it is directly a robust coreset.
Hence, in summary, our next step is to implement the construction of a size-preserving vanilla coreset in the dynamic streaming setting (see Reduction II in \Cref{sec:tech_overview}). Specifically, we map every point $x$ in the stream into $(x,\phi(x)\cdot w)$, where $w$ is sufficiently large, and that $\phi(x)$ can be encoded in $\poly(d \log \Delta)$ bits and thus can be interpreted as an integer.
This transformed data is then fed to an instance of a streaming algorithm designed for vanilla coresets. 
At the end of the stream, we obtain a coreset from this instance and consider its pre-image under $\phi$ (which deletes the last coordinate) as the coreset for $X$, which is approximately size-preserving.
We run a sparse recovery algorithm to recover the sizes of each bucket to calibrate the weights (to make it strictly size-preserving).

 \subsection{Related Work}
\label{sec:related}
Besides clustering with outliers, researchers have extensively explored coresets for various other clustering variants. 
Two particularly interesting variants are clustering with capacity constraints
and clustering with fairness constraints, both of which have been investigated in~\cite{DBLP:conf/nips/Chierichetti0LV17, DBLP:conf/waoa/0001SS19, DBLP:conf/nips/HuangJV19,Cohen-Addad2019Fixed-Parameter,DBLP:conf/icalp/BandyapadhyayFS21, Braverman22Power} for coreset constructions.
Notably, the coresets for both these variants can be reduced to an assignment-preserving coreset~\cite{DBLP:conf/waoa/0001SS19, DBLP:conf/nips/HuangJV19}, for which several studies apply a hierarchical sampling framework~\cite{Chen09OnCoresets} to construct small-sized coresets with assignment-preserving properties~\cite{DBLP:conf/icalp/BandyapadhyayFS21, Braverman22Power}. Recently, Huang et al.~\cite{Huang2023General} introduced a general unified model to capture a wide range of clustering problems with general assignment constraints, including clustering with fairness/capacity constraints, clustering with outliers (which we consider), and a fault-tolerant variant of clustering~\cite{DBLP:journals/tcs/KhullerPS00}.

Coresets for other variants of clustering have also been considered, including projective clustering~\cite{Feldman11unified,FeldmanSS20Turning,DBLP:conf/aistats/Tukan0ZBF22}, clustering with missing values~\cite{DBLP:conf/nips/BravermanJKW21}, ordered-weighted clustering~\cite{DBLP:conf/icml/BravermanJKW19}, and line clustering~\cite{DBLP:conf/nips/MaromF19, DBLP:conf/nips/LotanSF22}. \section{Preliminaries}
\label{sec:prelim}

For integer $n \geq 1$, let $[n] := \{1,2,\dots,n\}$.
Given a mapping $\phi: X \to Y$, for a subset $Z \subseteq X$, denote $\phi(Z) := \{\phi(x) : x \in Z\}$, and for an image $y \in Y$, denote its inverse by $\phi^{-1}(y) := \{x \in X : \phi(x) = y\}$.
Let $M=(V,\dist)$ be an underlying metric space,
and we assume oracle access to $\dist$.
For a point set $X\subseteq V$, let $\diam(X) := \max_{x,y\in X}\dist(x,y)$ be the diameter of $X$.
For a set $C=\{c_1,\dots,c_t\} \subseteq V$ ($t>0$),
a clustering of $X$ with respect to $C$ is a partition $\{P_1,\ldots,P_t\}$ of $X$ such that for every $i\in [t]$ and every point $x\in P_i$, $\dist(x,c_i) \le \dist(x,C)$.

\paragraph{Weighted Set}
A set $S$ with an associated weight function $w_S : S \to \R_{\ge 0}$ a \emph{weighted set}.
When we talk about a weighted set $A$, we denote the weight function of $A$ by $w_A$ unless otherwise specified. 
For an unweighted set $B$, we interpret it as weighted set with unit weight function $w_{B}\equiv 1$.
We say a weighted set $Y$ is a weighted subset of $X$, denoted by $(Y,w_Y)\subseteq (X,w_X)$,
if $Y\subseteq X$ and for every $x\in Y$, $w_Y(x)\le w_X(x)$. 
For two weighted sets $X$ and $Y$, the weight of their union $Z := X \cup Y$ is defined as $w_Z := w_X + w_Y$. 
We denote by $X - Y$ the weighted set $Z$ with the weight function $w_Z$, where $w_Z = w_X - w_Y$, and $Z$ is the support of $w_Z$.
Here, for two weight functions $f:X\to \R$ and $g:Y\to \R$, we use $f + g$ to denote the weight function $h : X\cup Y\to \R$ such that for every $x\in X\cup Y$, $h(x) := f(x) + g(x)$. 
Likewise, we denote by $f-g$ the subtraction of weight functions.\footnote{If $x \not\in X$, we define $f(x) := 0$, and similarly, if $x \not\in Y$, we define $g(x) := 0$.}

\paragraph{Clustering with Outliers (Weighted)}
In our proof, we need to consider the weighted version of \kzmC, and we extend the definition as follows.
For a weighted set $X\subseteq V$, and a center set $C\subseteq V$ with $|C|=k$, we define the clustering objective for \kzC as $\cost_z(X,C):=\sum_{x\in X}w_X(x)\cdot \dist^z(x,C)$. 
Moreover, given real number $h \ge 0$, we denote by $\calL_X^{(h)}:=\{(Y,w_Y)\subseteq (X,w_X): w_Y(Y) = h\}$ the collection of all possible weighted subset of $X$ with a total weight equal to $h$. 
We define the objective for \kzmC as 
\begin{equation}
    \label{eq:costm}
    \cost_z^{(m)}(X,C):=\min_{(Y,w_Y)\in\calL_X^{(m)}} \cost_z(X - Y,C).
\end{equation}
One can verify that when $m=0$, the definition of $\cost_z^{(m)}$ is equivalent to $\cost_z$.
We call a weighted set $Y$ an $m$-outlier set of $X$ with respect to $C$, if $(Y,w_Y)\in\calL_X^{(m)}$ and $\cost_z^{(m)}(X,C) = \cost_z(X-Y,C)$. 
We use $\OPT_z^{(m)}(X)$ to denote the optimal objective value for \kzmC on $X$, i.e., $\OPT_z^{(m)}(X):=\min_{C\in V^k} \cost_z^{(m)}(X,C)$.

\paragraph{Coresets} 
Our definition of coreset is almost identical to the one in~\cite[Definition 2.1]{Huang2022Near-optimal}, except that the coreset preserves the objective for all potential number of outliers $0 \leq h \leq m$ up to $m$ simultaneously.\footnote{Although the definition of coresets in~\cite{Huang2022Near-optimal} does not require preserving objective for all $0\le h\le m$ simultaneously, their algorithm actually constructs such a coreset (see~\cite[Remark 3.2]{Huang2022Near-optimal}).}

\begin{definition}[Coresets]
    Given $0<\eps<1$ and a dataset $X\subseteq V$, an $\eps$-coreset of $X$ for \kzmC is a weighted set $S\subseteq X$ such that
    \begin{equation*}
        \forall C\in V^k, 0\le h\le m, \quad \cost_z^{(h)}(S,C) \in(1\pm\eps)\cdot \cost_z^{(h)}(X,C).
    \end{equation*} 
    A weighted set is called an $\eps$-coreset for the special case of \kzC
    if it is an $\eps$-coreset for \kzmC with $m = 0$.
\end{definition}

For analysis purposes, we also need a weaker notion of coreset, in \Cref{def:additive_coreset}, that allows for additive errors. 
Such notions were also widely used in previous coreset constructions (see e.g., ~\cite{Cohen-addad2021New,Cohen-Addad22Towards,Braverman22Power,Huang2022Near-optimal}).

\begin{definition}[Additive-error coresets]
    \label{def:additive_coreset}
    Given $0<\eps<1, \eta\ge 0$ and a dataset $X\subseteq V$, an $(\eps,\eta)$-coreset of $X$ for \kzmC is a weighted set $S\subseteq X$ such that
    \begin{equation*}
        \forall C\in V^k, 0\le h\le m, \quad \left|\cost_z^{(h)}(S,C) - \cost_z^{(h)}(X,C)  \right|\le \eps\cdot \cost_z^{(h)}(X,C) + \eta.
    \end{equation*} 
\end{definition}

An important property of coresets is that it is composable.
Roughly speaking, if $S_A$ is a coreset of $A$ and $S_B$ is a coreset of $B$, then $S_A\cup S_B$ is a coreset of $A\cup B$. 
It is well known that composability holds directly from the definition of vanilla coreset,
and here we verify that it also holds for our definition of coresets for \kzmC.
We present this fact with respect to the more general additive-error coreset,
and we provide a proof in the Appendix~\ref{sec:composability} for completeness.

\begin{restatable}[Composability of coresets]{fact}{ComposabilityCoresets}
    \label{fact:composability}
    For $0<\eps<1, \eta_1,\eta_2\ge 0$, and two datasets $X,Y\subseteq V$, if $S_{X}$ is an $(\eps,\eta_1)$-coreset of $X$ for \kzmC, and $S_{Y}$ is an $(\eps,\eta_2)$-coreset of $Y$ for \kzmC, then $S_X\cup S_Y$ is an $(\eps,\eta_1+\eta_2)$-coreset of $X\cup Y$ for \kzmC.
\end{restatable}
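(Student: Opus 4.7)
The plan is to establish a decomposition identity that expresses the robust cost on a union as a minimum over how to split the outlier budget between the two halves, and then apply the coreset guarantees to each half.

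Concretely, I will first prove that for any weighted sets $A, B$, any center set $C \in V^k$, and any real $h \ge 0$,
\[
\cost_z^{(h)}(A \cup B, C) = \min_{\substack{h_A + h_B = h \\ 0 \le h_A \le w_A(A),\, 0 \le h_B \le w_B(B)}} \bigl(\cost_z^{(h_A)}(A, C) + \cost_z^{(h_B)}(B, C)\bigr).
\]
For the ``$\le$'' direction, given optimal outlier sets $Z_A \in \calL_A^{(h_A)}$ and $Z_B \in \calL_B^{(h_B)}$, the weighted union $Z_A \cup Z_B$ lies in $\calL_{A \cup B}^{(h)}$ (since $w_{Z_A}(p) + w_{Z_B}(p) \le w_A(p) + w_B(p) = w_{A \cup B}(p)$ pointwise), and $(A \cup B) - (Z_A \cup Z_B)$ coincides with $(A - Z_A) \cup (B - Z_B)$ as a weighted set, so by the additive definition of $\cost_z$ the costs decompose. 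For the ``$\ge$'' direction, any optimal $h$-outlier set $Z$ of $A \cup B$ can be split pointwise into $w_Z(p) = w_{Z_A}(p) + w_{Z_B}(p)$ with $w_{Z_A}(p) \le w_A(p)$ and $w_{Z_B}(p) \le w_B(p)$, which is feasible since $w_Z(p) \le w_{A \cup B}(p) = w_A(p) + w_B(p)$.

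With this identity in hand, the composability claim follows by a short two-sided argument. For any $C \in V^k$ and any $0 \le h \le m$, I will let $(h_X^*, h_Y^*)$ attain the minimum for $X \cup Y$; applying the identity on the coreset side yields $\cost_z^{(h)}(S_X \cup S_Y, C) \le \cost_z^{(h_X^*)}(S_X, C) + \cost_z^{(h_Y^*)}(S_Y, C)$, and invoking the $(\eps, \eta_i)$-coreset guarantees of $S_X$ and $S_Y$ (which apply since $0 \le h_X^*, h_Y^* \le h \le m$) gives the upper bound $(1 + \eps) \cost_z^{(h)}(X \cup Y, C) + \eta_1 + \eta_2$. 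The matching lower bound is obtained symmetrically by starting from an optimal split for $S_X \cup S_Y$ and applying the coreset guarantees in the reverse direction.

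The main subtlety I anticipate is verifying that the chosen splits are feasible on the opposite side, i.e., that $h_X^* \le w_{S_X}(S_X)$ so that $\cost_z^{(h_X^*)}(S_X, C)$ is well defined (and analogously for $h_Y^*$, and for the splits on $X, Y$ in the lower-bound direction). This should be automatic: the coreset guarantee on $S_X$ at $h = m$ implicitly forces $w_{S_X}(S_X) \ge m$, since otherwise $\cost_z^{(m)}(S_X, C)$ would be infinite while $\cost_z^{(m)}(X, C)$ is finite for a generic $C$. Beyond this bookkeeping, I do not foresee further obstacles; the split identity and the coreset guarantees combine cleanly to yield the claimed additive-error composability.
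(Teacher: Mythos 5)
Your proposal is correct and follows essentially the same route as the paper: split the outlier budget $h$ into $h_1+h_2$ optimally on one side of the comparison, use the subadditivity of $\cost_z^{(h)}$ over the union on the other side, and apply the two per-part coreset guarantees, arguing both directions symmetrically. The only difference is presentational — you state and prove the exact min-over-splits decomposition identity (and the feasibility bookkeeping) explicitly, while the paper invokes these facts implicitly as "optimality of $\cost_z^{(h)}$."
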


\paragraph{Tri-criteria Approximation}

Similar to many previous works~\cite{Braverman22Power,Huang2022Near-optimal,Huang2023General}, we need a tri-criteria approximation algorithm for the \kzmC problem.
Here, an $(\alpha, \beta, \gamma)$-approximation is a set of $\beta k$ centers, whose cost is at most $\alpha$ times the optimal, allowing a violation $\gamma$ to the number of outliers.
\begin{definition}[$(\alpha,\beta,\gamma)$-Approximation]
    \label{def:tri_criteria}
    Given $\alpha,\beta,\gamma \ge 1$ and a dataset $X\subseteq V$, we say a center set $C\subseteq  V$ is an $(\alpha,\beta,\gamma)$-approximation solution to \kzmC on $X$ if it holds that $\cost_z^{(\gamma m)}(X,C)\le \alpha\cdot\OPT_z^{(m)}(X)$ and $|C|\le \beta k$.
\end{definition}
There exists a randomized algorithm that computes a $(2^{O(z)}, O(1), O(1))$-approximation in near-linear time with high probability~\cite{Bhaskara19Greedy} (which is also applicable to general metrics). For a deterministic approach,
a $(2^{O(z)}, O(1), 1)$-approximation solution can be obtained in polynomial time (e.g.,~\cite{DBLP:journals/talg/FriggstadKRS19}).
A more detailed discussion on other possible algorithms for such an approximation can be found in~\cite[Appendix A]{Huang2022Near-optimal}.

\begin{lemma}[Generalized triangle inequalities]
    \label{lem:triangle}
    Let $a,b\ge 0$ and $\epsilon\in (0,1)$, then for $z\ge 1$, the following holds.
    \begin{itemize}
        \item [1.] (Lemma A.1 of~\cite{Makarychev19Performance})$(a+b)^z\le (1+\epsilon)^{z-1}\cdot a^z + (1+\frac{1}{\epsilon})^{z-1}\cdot b^z$;
        \item [2.] (Claim 5 of~\cite{Sohler18Strong}) $(a+b)^z\le (1+\epsilon)\cdot a^z + (\frac{3z}{\epsilon})^{z-1}\cdot b^z$.
    \end{itemize}
\end{lemma}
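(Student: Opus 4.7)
The plan is to prove part (1) directly from the convexity of $t \mapsto t^z$ (which holds for $z \ge 1$), and then to derive part (2) as a quick consequence of part (1) via a parameter rescaling. Both inequalities are homogeneous of degree $z$ in $(a,b)$, so no normalization is required.

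For part (1), I would express $a+b$ as a convex combination using the weights $\frac{1}{1+\epsilon}$ and $\frac{\epsilon}{1+\epsilon}$, which sum to one:
\[
a+b = \frac{1}{1+\epsilon}\cdot (1+\epsilon)\,a + \frac{\epsilon}{1+\epsilon}\cdot \Big(1+\frac{1}{\epsilon}\Big)\,b.
\]
Applying Jensen's inequality to the convex map $t\mapsto t^z$ yields
\[
(a+b)^z \le \frac{1}{1+\epsilon}(1+\epsilon)^z a^z + \frac{\epsilon}{1+\epsilon}\Big(1+\frac{1}{\epsilon}\Big)^z b^z = (1+\epsilon)^{z-1} a^z + \Big(1+\frac{1}{\epsilon}\Big)^{z-1} b^z,
\]
where the last simplification uses $\frac{\epsilon}{1+\epsilon}\cdot(1+1/\epsilon) = 1$. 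This is exactly inequality (1). Equivalently, this is a one-line application of Hölder's inequality to two-vectors.

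For part (2), I would substitute $\epsilon \leftarrow \epsilon/(2z) \in (0,1)$ into inequality (1), which produces
\[
(a+b)^z \le \Big(1+\frac{\epsilon}{2z}\Big)^{z-1} a^z + \Big(1+\frac{2z}{\epsilon}\Big)^{z-1} b^z.
\]
The first coefficient satisfies $\bigl(1+\epsilon/(2z)\bigr)^{z-1} \le e^{(z-1)\epsilon/(2z)} \le e^{\epsilon/2} \le 1+\epsilon$, where the last step is an elementary calculus fact on $(0,1)$ (the function $g(\epsilon) := 1+\epsilon - e^{\epsilon/2}$ vanishes at $0$ and has positive derivative $1-\frac12 e^{\epsilon/2}$ on $(0,1)$). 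The second coefficient is bounded by $1+2z/\epsilon \le 3z/\epsilon$ since $\epsilon \le 1 \le z$, giving $(1+2z/\epsilon)^{z-1} \le (3z/\epsilon)^{z-1}$. Combining these two bounds yields inequality (2).

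I do not anticipate a substantive obstacle: both parts reduce to elementary inequalities. The only nontrivial choice is the rescaling $\epsilon \mapsto \epsilon/(2z)$ when passing from (1) to (2); this is forced by the requirement that the coefficient on $a^z$ shrink from $(1+\epsilon')^{z-1}$ down to $1+\epsilon$, and any $\epsilon' = \Theta(\epsilon/z)$ would work with the stated form of constants.
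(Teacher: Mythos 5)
Your proof is correct. Note that the paper itself does not prove this lemma at all: both parts are imported by citation (part (1) from Makarychev--Makarychev--Razenshteyn and part (2) from Sohler--Woodruff), so there is no in-paper argument to compare against. Your self-contained derivation is sound: the convex-combination/Jensen step for part (1) is exactly the standard argument behind the cited Lemma A.1 (the weights $\tfrac{1}{1+\epsilon}$ and $\tfrac{\epsilon}{1+\epsilon}$ are chosen correctly and the simplification $\tfrac{\epsilon}{1+\epsilon}\bigl(1+\tfrac1\epsilon\bigr)^z=\bigl(1+\tfrac1\epsilon\bigr)^{z-1}$ checks out), and your derivation of part (2) by the rescaling $\epsilon\mapsto\epsilon/(2z)$ is a clean alternative to the usual proof of the cited Claim 5, which typically proceeds by a case analysis on whether $b$ is small or large relative to $\epsilon a/z$. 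The two auxiliary estimates you need --- $e^{\epsilon/2}\le 1+\epsilon$ on $(0,1)$ and $1+2z/\epsilon\le 3z/\epsilon$ (using $\epsilon\le 1\le z$) --- are both verified correctly, and the substitution is legitimate since $\epsilon/(2z)\in(0,1)$. What your route buys is uniformity: part (2) becomes a corollary of part (1) rather than a separate argument, at the mild cost of invoking the elementary exponential bound; the case-analysis route avoids that calculus fact but handles the two parts by unrelated arguments.
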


\section{Reduction I: Density of Datasets}
\label{sec:Reduction-I}

\begin{theorem}
    \label{thm:reduction I}
    Suppose an underlying metric space $M=(V,\dist)$ is given.
    Assume there exists an algorithm $\calA$ that, given $0<\eps<1$, integers $k,z\ge 1$ and an $n$-point dataset from $M$ as input, constructs an $\eps$-coreset for \kzC of size $N(n,k,\eps^{-1})$ in time $T(n,k,\eps^{-1})$.
    Then there is an algorithm that, given $0<\eps<1$, integers $k,z,m\ge 1$, an $n$-point dataset $X$ from $M$ and a $(2^{O(z)},O(1),O(1))$-approximation solution $C^*\subseteq V$ to \kzmC on $X$ as input, constructs an $\eps$-coreset $S$ of $X$ for \kzmC.
    This algorithm invokes $\calA$ a constant number of times and runs in time $\tilde O(nk) + T(n,k,O(\eps^{-1}))$. The constructed coreset $S$ has size 
    $
        2^{O(z\log z)}\cdot O\left(km\eps^{-1}\right) +  N\left(n,k,O(\eps^{-1})\right).
    $
\end{theorem}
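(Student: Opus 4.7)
My plan is to (i) build, from the tri-criteria solution $C^*$, an \emph{almost-dense decomposition} $X=A\cup B$ in which $A$ admits a partition $\calP_A=\{P_1,\dots,P_t\}$ with $\diam(P_i)\le\lambda$ and $|P_i|\ge\tau$, and $|B|\le 2^{O(z\log z)}\cdot O(km\eps^{-1})$; (ii) invoke $\calA$ on $A$ with error parameter $\Theta(\eps)$ to produce a vanilla coreset $S_A$; (iii) output $S:=S_A\cup B$, where $B$ carries unit weights and serves as its own trivial robust coreset. Correctness then reduces to \emph{Condition I}: on such an almost-dense $A$, every vanilla $\eps$-coreset is automatically an $O(\eps)$-coreset for \kzmC. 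Applying \Cref{fact:composability} to $A$ and $B$ and rescaling $\eps$ by a constant closes the argument, and the size bound $|S|\le N(n,k,O(\eps^{-1}))+2^{O(z\log z)}\cdot O(km\eps^{-1})$ is immediate. Parameters: $\tau:=\Theta(m\eps^{-1})$ and $\lambda$ chosen so that $\lambda^z=\Theta\!\bigl(\eps\OPT_z^{(m)}(X)/(m\cdot 2^{O(z\log z)})\bigr)$, using the tri-criteria cost of $C^*$ as a constant-factor estimate of $\OPT_z^{(m)}(X)$.

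\paragraph{Almost-dense decomposition.}
Let $F_1\subseteq X$ be the $\gamma m$ points farthest from $C^*$, so $|F_1|=O(m)$ and the remainder $X\setminus F_1$ has cost at most $2^{O(z)}\OPT_z^{(m)}(X)$ with respect to $C^*$. Let $F_2:=\{x\in X\setminus F_1:\dist(x,C^*)^z>\lambda^z\}$; Markov gives $|F_2|\le 2^{O(z)}\OPT_z^{(m)}(X)/\lambda^z=2^{O(z\log z)}\cdot O(m\eps^{-1})$. Clustering $X\setminus(F_1\cup F_2)$ by nearest center in $C^*$ produces at most $|C^*|=O(k)$ parts of diameter $\le 2\lambda$; parts of size $\ge\tau$ go into $A$, while smaller parts (together with $F_1\cup F_2$) are dumped into $B$. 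Then $|B|\le|F_1|+|F_2|+|C^*|\tau=2^{O(z\log z)}\cdot O(km\eps^{-1})$, and the whole construction runs in $\tilde O(nk)$ time via nearest-center lookups and sorting by $\dist(\cdot,C^*)$.

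\paragraph{Condition I on the unweighted side.}
Fix any $C\in V^k$ and $0\le h\le m$, and let $Y_A\subseteq A$ be the top-$h$ outlier set. Since $h\le m\le\eps\tau$, $|Y_A\cap P_i|\le\eps|P_i|$ for every $i$. Applying \Cref{lem:triangle}(2) with a \emph{constant} slack $\delta=\Theta(1)$ -- crucially not $\eps$, so that the prefactor $(3z/\delta)^{z-1}=2^{O(z\log z)}$ carries no $\eps$-dependence -- gives for each $y\in Y_A\cap P_i$ and $x\in P_i\setminus Y_A$ that $\dist(y,C)^z\le(1+\delta)\dist(x,C)^z+2^{O(z\log z)}\lambda^z$. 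Averaging $x$ over the $\ge(1-\eps)|P_i|$ inliers of $P_i$ and summing over $y\in Y_A$ yields
\[
\cost_z(Y_A,C)\;\le\;O(\eps)\cdot\cost_z(A\setminus Y_A,C)+h\cdot 2^{O(z\log z)}\lambda^z.
\]
The additive term is $\le\eps\OPT_z^{(m)}(X)\le\eps\cost_z^{(h)}(A,C)$ in the nontrivial regime $\cost_z^{(h)}(A,C)=\Omega(\OPT_z^{(m)}(X))$; the degenerate regime is handled separately by comparing both objectives directly to $\OPT_z^{(m)}(X)$. Hence $\cost_z(A,C)-\cost_z^{(h)}(A,C)\le O(\eps)\cost_z^{(h)}(A,C)$, i.e., the robust and vanilla objectives on $A$ agree up to $(1\pm O(\eps))$.

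\paragraph{Main obstacle.}
The hardest step is carrying the same outlier-gap bound across to the weighted coreset $S_A$: $\cost_z(S_A,C)-\cost_z^{(h)}(S_A,C)\le O(\eps)\cost_z^{(h)}(S_A,C)+\eps\OPT_z^{(m)}(X)$, despite $S_A$ having no a priori density guarantee and its ``top-$h$ weighted outlier'' subset possibly differing from $Y_A$. My plan is to promote density to $S_A$ by using the vanilla coreset property itself, showing $w_{S_A}(S_A\cap P_i)\in(1\pm O(\eps))|P_i|$ for each part (for instance by probing $S_A$ against center configurations that isolate $P_i$ from the other parts), and then rerunning the same generalized-triangle averaging with $|\cdot|$ replaced by $w_{S_A}(\cdot)$ throughout. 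Combining the two gap bounds with the vanilla coreset guarantee $\cost_z(S_A,C)\in(1\pm\eps)\cost_z(A,C)$ then sandwiches $\cost_z^{(h)}(S_A,C)\in(1\pm O(\eps))\cost_z^{(h)}(A,C)$, and \Cref{fact:composability} with $B$ finishes the proof after a final constant rescaling of $\eps$. The $2^{O(z\log z)}$ factor in both the size bound and the hidden constants stems entirely from the $(3z/\delta)^{z-1}$ prefactor in the generalized triangle inequality.
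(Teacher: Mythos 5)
Your decomposition, parameter choices, and the dataset-side argument match the paper's Reduction I (Lemma~\ref{lem:decomposition} and the proof of \eqref{eq:coreset 1} via Lemma~\ref{lem:small dist}). The genuine gap is in your ``Main obstacle'' paragraph: the step you flag as hardest is in fact unnecessary, and the route you propose for it does not work. The paper never establishes any density or per-part weight preservation for $S_A$. Since a coreset is by definition a weighted \emph{subset} of the data, every point of $S_A$ lies in the dense set $A$, and the per-point bound of Lemma~\ref{lem:small dist} --- $\dist^z(x,C)\le \tfrac{2\eps}{m}\cost_z^{(m)}(A,C)+(3z\lambda)^z$ for \emph{every} $x\in A$ --- is stated in terms of the dataset's cost $\cost_z^{(m)}(A,C)$, not the coreset's. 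Hence whatever weight-$h$ outlier set ($h\le m$) is removed from $S_A$, the change in $\cost_z(S_A,C)$ is at most $h\cdot\bigl(\tfrac{2\eps}{m}\cost_z^{(m)}(A,C)+(3z\lambda)^z\bigr)\le 2\eps\,\cost_z^{(m)}(A,C)+2^{O(z\log z)}m\lambda^z$, which is exactly the coreset-side gap bound \eqref{eq:coreset 2}; it is symmetric to the dataset-side bound and needs no knowledge of how the coreset weights are distributed across parts.

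Your proposed fix --- showing $w_{S_A}(S_A\cap P_i)\in(1\pm O(\eps))|P_i|$ by ``probing $S_A$ against center configurations that isolate $P_i$'' --- would fail here: the parts $P_i$ are the clusters of $X\setminus F$ around the $O(k)$ centers of $C^*$, and adjacent clusters can be at distance zero from one another, so placing centers on the other parts does not make all of $P_i$ pay a comparable, dominant cost, and $\cost_z(A,C)\approx |P_i|\cdot\dist^z(P_i,A\setminus P_i)$ breaks down (indeed $\dist(P_i,A\setminus P_i)$ may be $0$). Making this probing sound is precisely the content of the paper's Reduction II, which has to embed the data into a $w$-separated duplication of $M$ and invoke a vanilla coreset for $O(k\log^2(\cdot))$ centers in that augmented metric --- assumptions not available in Theorem~\ref{thm:reduction I}, whose hypothesis only gives coresets for $k$ centers in $M$ itself. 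A further, smaller issue: you absorb the additive term against $\cost_z^{(h)}(A,C)$, but this quantity need not be $\Omega(\OPT_z^{(m)}(X))$ (the dense part alone can be cheap for a given $C$); the clean way, as in the paper, is to carry the additive error $\eta\le\eps\,\OPT_z^{(m)}(X)$ through composability (Fact~\ref{fact:composability}) and absorb it only at the end against $\cost_z^{(h)}(X,C)\ge\cost_z^{(m)}(X,C)\ge\OPT_z^{(m)}(X)$.
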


\paragraph{Construction Based on Bounded Partition}
As we mention, the reduction relies on a space partition that breaks the input into bounded partition. Here, by ``bounded'', we mean the diameter of each part in the partition is upper bounded by some parameter $\lambda > 0$. 
Formally, we introduce the notion of \emph{$\lambda$-bounded partitions} in \Cref{def:bounded partition}. 
\begin{definition}[$\lambda$-Bounded partition]
    \label{def:bounded partition}
For a dataset $Y \subseteq V$ and  $\lambda\ge 0$, we say a partition $\calP = \{P_i \}_i$ of $Y$ is \emph{$\lambda$-bounded} if for every $P_i \in \calP$, $\diam(P_i)\le \lambda$.
\end{definition}

We describe our algorithm in~\Cref{alg:construction based on bounded partition}, which additionally takes a $\lambda$-bounded partition of $X$ as input for some $\lambda > 0$. We note that the value of $\lambda$ would affect the accuracy of the final coreset, as shown in~\Cref{thm:main1}. Later on, we select an appropriate $\lambda$ to ensure that the output of~\Cref{alg:construction based on bounded partition} is an $\eps$-coreset.
Essentially, our algorithm computes a decomposition of the dataset, based on the $\lambda$-bounded partition $\calP$, into $X_{\calD}$ and $X_{\calS}$, where $X_{\calD}$ contains the points within dense parts, while $X_\calS$ contains those within sparse parts.
Here, ``dense'' and ``sparse'' refer to whether a part contains many or few points, respectively.
Then, it simply constructs a coreset $S_{\calD}$ of $X_{\calD}$ for \kzC and returns $S_{\calD} \cup X_{\calS}$ as the coreset for \kzmC of the dataset.

\begin{algorithm}
    \caption{Coreset construction for \kzmC of $X$ based on bounded partition}
    \label{alg:construction based on bounded partition}
    \begin{algorithmic}[1]
        \Require{a $\lambda$-bounded partition $\calP$ of $X$}
        \State let $\calD\gets\{P\in\calP: |P|\ge (1+\eps^{-1})m\}$ and $\calS\gets \calP\setminus\calD$ \label{alg 1 line:decomposition}
        \State construct an $\eps$-coreset $S_{\calD}$ for \kzC of $X_{\calD}:=\bigcup_{P\in\calD} P$ \label{alg 1 line:coreset for dense subset}
        \\
        \Return $S_{\calD}\cup X_{\calS}$, where $X_{\calS}:=\bigcup_{P\in\calS} P$ 
    \end{algorithmic}
\end{algorithm}

\begin{lemma}[Vanilla coresets on dense datasets are robust]
    \label{thm:main1}
For a dataset $X\subseteq V$, $0<\eps<1$, and integers $k,z,m\ge 1$, if there exists a $\lambda$-bounded partition $\calP$ of $X$ for some $\lambda\ge 0$ such that for every $P\in\calP$, $|P|\ge (1+\eps^{-1})m$, then an $\eps$-coreset of $X$ for \kzC
    is also an $\left(O(\eps), 2^{O(z\log(z+1))}\cdot m\lambda^z \right)$-coreset for \kzmC.
\end{lemma}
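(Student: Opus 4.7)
The plan is to reduce the lemma to the following local claim: for any center set $C \subseteq V$ and any weighted set $Z$ with support $Z \subseteq X$ and total weight $w_Z(Z) \le m$,
\[
    \cost_z(Z, C) \;\le\; O(\eps)\cdot \cost_z(X, C) + 2^{O(z\log z)}\cdot m\lambda^z.
\]
Once this is established, the lemma follows by routine manipulations: applying it to an $h$-outlier set $Y_X^*$ of $X$ with respect to $C$ gives $\cost_z(X,C) - \cost_z^{(h)}(X,C) \le O(\eps)\cost_z(X,C) + 2^{O(z\log z)} m\lambda^z$, which rearranges to $\cost_z(X,C) \le (1+O(\eps))\cdot\cost_z^{(h)}(X,C) + 2^{O(z\log z)} m\lambda^z$. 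Similarly, since an $h$-outlier set $Y_S^*$ of $S$ satisfies $Y_S^* \subseteq S \subseteq X$ with total weight $h \le m$, applying the local claim to $Y_S^*$ gives $\cost_z(Y_S^*, C) \le O(\eps)\cost_z(X,C) + 2^{O(z\log z)} m\lambda^z$, equivalently $\cost_z^{(h)}(S,C) \ge \cost_z(S,C) - O(\eps)\cost_z(X,C) - 2^{O(z\log z)} m\lambda^z$. Combining these two bounds with the vanilla coreset guarantee $\cost_z(S,C) \in (1\pm\eps)\cost_z(X,C)$ yields the two-sided bound $|\cost_z^{(h)}(S,C) - \cost_z^{(h)}(X,C)| \le O(\eps)\cdot\cost_z^{(h)}(X,C) + 2^{O(z\log z)} m\lambda^z$ for every $0 \le h \le m$, which is exactly what the lemma asks for.

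The main technical step is the local claim, which I plan to prove by a within-part averaging argument. Fix any part $P \in \calP$ and any $x \in Z \cap P$. For every $y \in P$, the $\lambda$-boundedness of $P$ gives $\dist(x,C) \le \dist(y,C) + \lambda$, so the generalized triangle inequality (\Cref{lem:triangle}, item 2, applied with parameter $1$) yields $\dist(x,C)^z \le 2\dist(y,C)^z + (3z)^{z-1}\lambda^z$. Averaging over $y \in P$ (treating $X$ as unweighted, so that $\cost_z(P,C)/|P|$ is the arithmetic average of $\dist(y,C)^z$) gives $\dist(x,C)^z \le 2\cost_z(P,C)/|P| + (3z)^{z-1}\lambda^z$. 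Multiplying by $w_Z(x)$, summing over $x \in Z \cap P$, and using the density bound $w_Z(Z\cap P)/|P| \le m/|P| \le \eps$ (which follows from $w_Z(Z) \le m$ and $|P| \ge (1+\eps^{-1})m$) gives $\cost_z(Z\cap P, C) \le 2\eps\cdot\cost_z(P,C) + w_Z(Z\cap P)\cdot(3z)^{z-1}\lambda^z$. Summing over $P \in \calP$ and using $\sum_P w_Z(Z\cap P) \le m$ completes the proof of the local claim.

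The delicate point that I expect to need the most care is arranging the constants so that the multiplicative error is $O(\eps)$ while the additive error is free of any $\eps^{-1}$ dependence. This is handled by applying the generalized triangle inequality with the constant parameter $1$ (rather than $\eps$, which would be the more natural choice): the resulting factor $2$ in the multiplicative term is harmlessly absorbed into the $O(\eps)$ coming from the density ratio $m/|P|\le\eps$, while the additive blow-up stays at $(3z)^{z-1} = 2^{O(z\log z)}$, matching the target.
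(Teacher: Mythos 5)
Your proposal is correct and takes essentially the same approach as the paper: both arguments charge the cost of any weighted outlier set of total weight at most $m$ to the many nearby points in its $\lambda$-bounded part via the generalized triangle inequality, apply this to the $h$-outlier sets of both $X$ and $S$, and combine with the vanilla coreset guarantee. The only cosmetic difference is that you average over the entire part to get an $O(\eps)\cdot\cost_z(X,C)$ bound, whereas the paper picks a single cheap non-outlier point in the part to bound each point's cost by $\frac{2\eps}{m}\cost_z^{(m)}(X,C)+(3z\lambda)^z$; both routes yield the stated $\left(O(\eps),\,2^{O(z\log(z+1))}\cdot m\lambda^z\right)$ guarantee.
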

\begin{proof}
    The proof can be found in~\Cref{sec:first reduction}.
\end{proof}
The intuition behind~\Cref{thm:main1} stems from the observation that,
for every point $x$, there are at least $\Omega(\epsilon^{-1}m)$ points
lie in the same part with $x$ and that these points are ``close'' to $x$ (because of the $\lambda$-bounded property). This implies that the contribution for every $x$ (for any center set $C$) is only roughly $\epsilon / m$ times the total cost of all points.
Hence, removing up to $m$ points as outliers only change the objective cost 
by $O(\epsilon)$ times.
This roughly means that the cost of outlier version and the vanilla version are very close. Since the vanilla coreset preserves the cost of the vanilla version, it naturally preserves the outlier version according to this observation.

Hence, if we are given a $\lambda$-bounded partition $\calP$ with sufficiently small $\lambda$, then $S_{\calD}$ becomes an $O(\eps)$-coreset of $X_{\calD}$ for \kzmC (with negligible additive error). By the composability of coresets and treating $X_{\calS}$ as a naive coreset of itself, we conclude that $S = S_{\calD}\cup X_{\calS}$ is an adequately accurate coreset of $X = X_{\calD}\cup X_{\calS}$ for \kzmC.

\paragraph{Size Bound: Construction of $\lambda$-Bounded Partition}
As shown in~\Cref{alg:construction based on bounded partition}, the resulting coreset $S$ contains $|S_{\calD}| + |X_{\calS}| = N(n,k,O(\eps^{-1})) + |X_{\calS}|$ points, where $|X_{\calS}|$ is the number of points contained in sparse parts. 
To obtain an upper bound for the size of the coreset, we show in \Cref{lem:decomposition} an algorithm that finds an \emph{almost-dense} $\lambda$-bounded partition (by ``almost-dense'' we mean the number of points in the sparse parts is small).

\begin{lemma}[Almost-dense bounded partition]
    \label{lem:decomposition}
Given $\lambda>0$, a dataset $X \subseteq V$ and an $(\alpha,\beta,\gamma)$-approximation solution $C^*$ to \kzmC on $X$, one can construct in $\tilde O(\beta n k)$ time a $2\lambda$-bounded partition $\calP$ of $X$. This partition can be decomposed into $\calP = \calD \cup \calS$ with the following properties:
    \begin{enumerate}
        \item  (Dense parts) Every part in $\calD$ is dense, i.e., for every $P\in\calD$, $|P|\ge (1+\eps^{-1})m$.
        \item  (Sparse parts) The number of sparse points is small, i.e., $|\bigcup_{P\in\calS} P|\le O(\beta k m\eps^{-1}) + \gamma m + \cost_z^{(\gamma m)}(X,C^*)\cdot \lambda^{-z}$. 
    \end{enumerate}

\end{lemma}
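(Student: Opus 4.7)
The plan is to use $C^*$ in two different roles simultaneously: as a cost certificate to bound how many points can be far from $C^*$, and as a seed for a Voronoi-style partition that automatically enforces the $2\lambda$-bounded property via the triangle inequality. Concretely, first extract an optimal $\gamma m$-outlier set $F_1 \subseteq X$ with respect to $C^*$, so that $\cost_z(X \setminus F_1, C^*) = \cost_z^{(\gamma m)}(X, C^*)$; these will become singletons in $\calS$ and contribute $\gamma m$ to the sparse count. Then define the ``far'' points $F_2 := \{x \in X \setminus F_1 : \dist(x, C^*) > \lambda\}$. Since every $x \in F_2$ contributes at least $\lambda^z$ to $\cost_z(X \setminus F_1, C^*)$, a one-line averaging argument gives $|F_2| \le \cost_z^{(\gamma m)}(X,C^*)\cdot \lambda^{-z}$; these too become singletons in $\calS$.

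Next, for the remaining points $X' := X \setminus (F_1 \cup F_2)$, which are all within $\lambda$ of $C^*$, compute a clustering $\{Q_c\}_{c \in C^*}$ of $X'$ with respect to $C^*$ (assign each point to its nearest center, ties broken arbitrarily). By the triangle inequality $\diam(Q_c) \le 2\lambda$ for every $c \in C^*$, so every $Q_c$ is a valid $2\lambda$-bounded part. Declare $\calD := \{Q_c : |Q_c| \ge (1+\eps^{-1})m\}$, and let $\calS$ consist of the remaining (``light'') $Q_c$'s together with a singleton part for each point in $F_1 \cup F_2$. Then $\calP := \calD \cup \calS$ is a $2\lambda$-bounded partition of $X$ satisfying the dense guarantee by construction.

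For the sparse count, the light $Q_c$'s contribute at most $|C^*|\cdot (1+\eps^{-1})m \le \beta k \cdot (1+\eps^{-1})m = O(\beta k m \eps^{-1})$ points, while the singletons contribute $|F_1| + |F_2| \le \gamma m + \cost_z^{(\gamma m)}(X,C^*)\cdot \lambda^{-z}$, giving the claimed bound. For the running time, the dominant cost is computing $\dist(x,C^*)$ for all $x \in X$, which takes $O(n|C^*|) = O(\beta n k)$ time and simultaneously yields the distance ranking needed to extract $F_1$ (via linear-time selection) and the nearest-center assignment needed to build $\{Q_c\}$; sorting for $F_1$ contributes the logarithmic overhead hidden in $\tilde O$.

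I do not anticipate a substantial obstacle here; the proof is essentially bookkeeping once one recognizes that a single set $C^*$ can be leveraged twice, once for the Markov bound on $|F_2|$ and once for the diameter bound on $Q_c$. The only point requiring mild care is the averaging step that bounds $|F_2|$: one must apply it to $\cost_z(X \setminus F_1, C^*)$ rather than to $\cost_z(X, C^*)$, since the latter may be dominated by the $\gamma m$ true outliers and would give a vacuous estimate. Using the former quantity gives precisely the $\cost_z^{(\gamma m)}(X,C^*)\cdot \lambda^{-z}$ term that appears in the statement.
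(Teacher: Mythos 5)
Your proposal is correct and follows essentially the same route as the paper: your $F_1$ and $F_2$ are exactly the paper's outlier set $L$ and far-inlier set $F\setminus L$, the Voronoi-style clustering of the remaining points with respect to $C^*$ and the $2\lambda$ diameter bound via the triangle inequality are identical, and the Markov-type bound on $|F_2|$ applied to $\cost_z^{(\gamma m)}(X,C^*)$ (rather than the full cost) is precisely the paper's averaging step. No gaps.
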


One may find the bound of the number of sparse points somewhat unnatural since it relates the number of points to the seemingly incomparable clustering objective. As mentioned earlier, in our applications we would set $\lambda$ sufficiently small, say, $\lambda \le \left((m^{-1}\cost_z^{(\gamma m)}(X,C^*)\right)^{1/z}$, so that the $\cost_z^{(\gamma m)}$ term is ``canceled out''.

\begin{proof}[Proof of \Cref{lem:decomposition}]
    We start with the construction.
    Let $L\subseteq X$ with $|L| = \gamma m$ denote the $\gamma m$-outliers set of $X$ with respect to $C^*$.
    Define $F := L\cup \{x\in X: \dist(x,C^*) > \lambda \}$,
    which includes all $\gamma m$ outliers and those inliers, i.e., $X \setminus L$, with distances more than $\lambda$ from $C^*$.
    Let $P_1,\dots, P_{\beta k}$ denote the clustering of $X\setminus F$ with respect to $C^*$.
    Finally, let $\calD:=\{P_i:i\in [\beta k], |P_i| \ge (1+\eps^{-1})m\}$ and $\calS:=\{P_i:i\in[\beta k], |P_i|<(1+\eps^{-1})m \}\cup \{\{x\}: x\in F \}$.

    From the construction, it is immediate that $\calD \cup \calS$ is a partition of $X$, and $\calD$ contains only dense parts. 
Since $F$ contains all points that have distance more than $\lambda$ from $C^*$,
    every point in any part of $\calD \cup \calS$ has distance at most $\lambda$ to $C^*$. This implies that $\calD\cup\calS$ is $2\lambda$-bounded by the triangle inequality.
It remains to prove the second property, i.e., the bounded number of sparse points.

    Note that $\calS$ contains two types of subsets: sparse subsets and singletons. For the sparse subsets $\{P_i: i\in [\beta k], |P_i|< (1+\eps^{-1})m\}$, the sum of their sizes is at most $\beta k \times (1+\eps^{-1})m = O(\beta k m \eps^{-1})$. As for the singletons, each corresponds to a point in $F$. Hence, we need to bound the size of $F$.

    Notice that $X \setminus L$ is the set of inliers.
    Let $\kappa>0$ denote the number of far inliers,
    i.e., inliers that are with distance more than $\lambda$ from $C^*$.
    Then we have $|F| \le |L| + \kappa = \gamma m + \kappa$. For the value of $\kappa$, we have 
    \begin{equation*}
        \kappa\cdot \lambda^z \le \sum_{x\in F\setminus L}\dist^z(x,C^*)\le  \cost_z^{(\gamma m)}(X,C^*),
    \end{equation*}
    which implies that $\kappa\le \cost_z^{(\gamma m)}(X,C^*)\cdot \lambda^{-z}$.
    Therefore, $|F|\le \gamma m + \cost_z^{(\gamma m)}(X,C^*)\cdot \lambda^{-z}$.
    This completes the proof.
\end{proof}

\paragraph{Proof of~\Cref{thm:reduction I}} 
Recall that $C^*$ is a given $(\alpha,\beta,\gamma)$-approximation solution to \kzmC on $X$, where $\alpha = 2^{O(z)}$ and $\beta = \gamma = O(1)$.
Let $\lambda:= (z+1)^{-\xi}\cdot \left(\frac{\eps\cdot \cost_z^{(\gamma m)}(X,C^*)}{\alpha m} \right)^{1/z}$ for sufficiently large constant $\xi > 0$. 
We first apply \Cref{lem:decomposition} to compute a $\lambda$-bounded partition $\calP=\calD\cup\calS$ of the dataset $X$, and then apply~\Cref{alg:construction based on bounded partition} to this partition $\calP$, which outputs a weighted set $S = S_{\calD}\cup X_{\calS}$.
Guaranteed by~\Cref{thm:main1}, $S_{\calD}$ is an $(O(\eps),\eta)$-coreset of $X_{\calD}$ for \kzmC with additive error
\begin{equation*}
    \eta\le 2^{O(z\log(z+1))}\cdot m\lambda^z\le \eps\cdot\alpha^{-1}\cdot\cost_z^{(\gamma m)}(X,C^*)\le \eps\OPT_z^{(m)}(X).
\end{equation*}
Since $X_{\calS}$ is an $\eps$-coreset of itself for \kzmC, by composability (\Cref{fact:composability}), we have that $S = S_{\calD}\cup X_{\calS}$ is an $(O(\eps),\eta)$-coreset of $X$ for \kzmC. Then, consider any $k$-point center set $C\subseteq V$ and any $0\le h\le m$, we have 
\begin{equation*}
    \left|\cost_z^{(h)}(X,C) - \cost_z^{(h)}(S,C) \right|\le O(\eps)\cdot \cost_z^{(h)}(X,C) + \eps\cdot\OPT_z^{(m)}(X)\le O(\eps)\cdot \cost_z^{(h)}(X,C),
\end{equation*}
hence certifying that $S$ is an $O(\eps)$-coreset of $X$ for \kzmC.
According to~\Cref{lem:decomposition}, the coreset size is $N(n,k,\eps^{-1}) + |X_{\calS}|\le N(n,k,\eps^{-1}) + O(\gamma + \beta k\eps^{-1} + 2^{O(z\log z)}\alpha\eps^{-1})\cdot m = N(n,k,\eps^{-1}) + 2^{O(z\log z)}\cdot O(km\eps^{-1})$. 
Regarding the runtime, our construction is merely the combination of 
the computation of a bounded partition (e.g., algorithm of~\Cref{lem:decomposition}) and~\Cref{alg:construction based on bounded partition} (with the bottleneck of runtime being the vanilla coreset construction).
Hence, the total running time is $\tilde O(nk) + T(n,k,\eps^{-1})$. To finish the proof, it remains to scale $\eps$.
\qed

      \subsection{Vanilla Coresets on Dense Datasets Are Robust: Proof of~\Cref{thm:main1}}
\label{sec:first reduction}

In this section, we prove~\Cref{thm:main1}, which demonstrates that a vanilla coreset on a dense dataset is also a robust coreset. As mentioned earlier, the proof is based on the observation that the contribution of every point is negligible. We formulate this in the following technical lemma.

\begin{lemma}
    \label{lem:small dist}
For a dataset $X\subseteq V$, if there exists a $\lambda$-partition $\calP$ of $X$ such that for every $P\in\calP$, $|P|\ge (1+\eps^{-1})m$, then
    for any subset $C\subset V$ and for every $x\in X$, it holds that 
    \begin{equation*}
        \dist^z(x,C)\le \frac{2\eps}{m}\cdot \cost_z^{(m)}(X,C) + (3z\lambda)^z.
    \end{equation*}
\end{lemma}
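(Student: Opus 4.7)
The plan is to do a case split on $\dist(x,C)$ relative to $\lambda$. If $\dist(x,C) \le 3z\lambda$, the claim is immediate since $\dist^z(x,C) \le (3z\lambda)^z$, which is absorbed by the additive term in the bound. The interesting case is $\dist(x,C) > 3z\lambda$, where I will exploit the density of the part $P \in \calP$ containing $x$: because $|P| \ge (1+\eps^{-1})m$, even if all $m$ outliers selected to realize $\cost_z^{(m)}(X,C)$ happen to lie in $P$, at least $|P| - m \ge \eps^{-1}m$ points of $P$ survive as inliers and therefore contribute to $\cost_z^{(m)}(X,C)$.

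For each surviving $y \in P$, the $\lambda$-bounded property yields $\dist(y,x) \le \lambda$, so by the triangle inequality $\dist(y,C) \ge \dist(x,C) - \lambda$. Under the assumption $\lambda < \dist(x,C)/(3z)$, Bernoulli's inequality $(1-t)^z \ge 1 - zt$ (valid for $z \ge 1$ and $t \in [0,1]$) gives
\[
\bigl(\dist(x,C) - \lambda\bigr)^z = \dist^z(x,C)\left(1 - \tfrac{\lambda}{\dist(x,C)}\right)^z \ge \dist^z(x,C)\cdot\left(1 - \tfrac{z\lambda}{\dist(x,C)}\right) \ge \tfrac{2}{3}\,\dist^z(x,C).
\]
Summing this pointwise lower bound over the $\ge \eps^{-1}m$ inliers in $P$ and viewing the resulting partial sum as a lower bound on $\cost_z^{(m)}(X,C)$ yields $\cost_z^{(m)}(X,C) \ge \tfrac{2}{3}\cdot\eps^{-1}m\cdot\dist^z(x,C)$, i.e., $\dist^z(x,C) \le \tfrac{3\eps}{2m}\cdot\cost_z^{(m)}(X,C) \le \tfrac{2\eps}{m}\cdot\cost_z^{(m)}(X,C)$. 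Combining the two cases gives the stated bound.

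I do not expect a substantial obstacle here; the argument is a one-point averaging inside the dense part. The only delicate points are (i) calibrating the threshold in the case split so that the ratio $z\lambda/\dist(x,C)$ is at most $1/3$, which is what lets the Bernoulli step produce a clean constant factor $2/3$ and thus a leading constant under $2$ in the final bound, and (ii) remembering that the density hypothesis is strong enough to tolerate the worst case in which all $m$ outliers fall inside the single part $P$ we care about, still leaving $\eps^{-1}m$ inliers to average against.
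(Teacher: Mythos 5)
Your proof is correct, and it shares the paper's core idea — the density hypothesis guarantees at least $\eps^{-1}m$ inlier weight inside the part $P$ containing $x$, all within distance $\lambda$ of $x$, and $x$'s cost is charged against these nearby inliers — but the finishing step runs in the opposite direction. The paper's proof extracts, by averaging, a single inlier $y\in P$ with $\dist^z(y,C)\le \frac{\eps}{m}\cost_z^{(m)}(X,C)$ and then pushes the cost from $y$ to $x$ via the generalized triangle inequality (\Cref{lem:triangle}), which is where the constants $2$ and $(3z)^{z-1}\lambda^z\le(3z\lambda)^z$ come from. You instead split on whether $\dist(x,C)\le 3z\lambda$ (absorbed by the additive term) or not, and in the far case push the cost from $x$ down to \emph{all} surviving inliers of $P$, lower-bounding each $\dist^z(p,C)\ge(\dist(x,C)-\lambda)^z\ge\tfrac{2}{3}\dist^z(x,C)$ via Bernoulli's inequality and summing; this yields $\dist^z(x,C)\le\tfrac{3\eps}{2m}\cost_z^{(m)}(X,C)$, which is within the stated constant. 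Your route is slightly more self-contained (no appeal to \Cref{lem:triangle}), at the cost of the explicit case split; the paper's route is a one-line application of a lemma it needs elsewhere anyway. One cosmetic caution: under the paper's weighted definition of $\cost_z^{(m)}$ the outlier set may be fractional, so your count ``at least $|P|-m\ge\eps^{-1}m$ points survive'' should be read as surviving \emph{weight} in $P$; the summation argument goes through verbatim with weights, so this is not a gap.
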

\begin{proof}
    Fix a center set $C$ and a point $x\in X$, let $L$ denote $m$-outlier set of $X$ with respect to $C$ and let $P\in \calP$ denote the part that contains $x$, i.e., $x\in P$. Since $|P|\ge (1+\eps^{-1})m$ and $\diam(P)\le \lambda$, we have
    \begin{equation*}
        \left|\ball_X(x,\lambda)\setminus L \right|\ge \left|P\setminus L \right|\ge \eps^{-1}m,
    \end{equation*} 
    which implies that there exists a point $y\in P\setminus L$ such that $\dist^z(y,C)\le \frac{\eps}{m}\cdot\cost_z^{(m)}(X,C)$.
    Then by the generalized triangle inequality (see~\Cref{lem:triangle}), we have 
    \begin{align*}
        \dist^z(x,C)\quad&\le\quad 2\cdot \dist^z(y,C) + \left(3z \right)^{z-1} \dist^z(x,y)\\
        &\le\quad \frac{2\eps}{m}\cdot \cost_z^{(m)}(X,C) +  \left(3z\lambda \right)^z,
    \end{align*}
    which completes the proof of \Cref{lem:small dist}.
\end{proof}

\begin{proof}[Proof of~\Cref{thm:main1}]
    It suffices to prove for every $k$-point center set $C\subseteq V$ and every $0\le h\le m$,
    \begin{equation}
        \label{eq:temp7}
        \left|\cost_z^{(h)}(X,C) - \cost_z^{(h)}(S,C) \right|\le O(\eps)\cdot\cost_z^{(h)}(X,C) + 2^{O(z\log(z+1))}\cdot m\lambda^z.
    \end{equation}
    Since $S$ is an $\eps$-coreset of $X$ for \kzC, it holds that 
    \begin{equation}
        \label{eq:dense coreset}
        \left|\cost_z(X,C) - \cost_z(S,C)\right|\le \eps \cdot\cost_z(X,C).
    \end{equation}
    We then prove the following two inequalities, which, combining with \eqref{eq:dense coreset}, directly imply~\eqref{eq:temp7}.
    \begin{equation}
        \label{eq:coreset 1}
        \left|\cost_z^{(h)}(X,C) - \cost_z(X,C) \right|\le O(\eps)\cdot \cost_z^{(m)}(X,C) + 2^{O(z\log(z+1))}\cdot m\lambda^z, 
    \end{equation}
    \begin{equation}
        \label{eq:coreset 2}
        \left|\cost_z^{(h)}(S,C) - \cost_z(S,C) \right|\le O(\eps)\cdot \cost_z^{(m)}(X,C) + 2^{O(z\log(z+1))}\cdot m\lambda^z.
    \end{equation}
    Here, we only provide the proof of~\eqref{eq:coreset 1}, and the proof of~\eqref{eq:coreset 2} is similar.
    Let the set $L$ denote an $h$-outlier set of $X$ with respect to $C$, then it holds that 
    \begin{align*}
        &\quad \cost_z(X,C)\\
        \ge&\quad \cost_z^{(h)}(X,C)\\
        =&\quad \cost_z(X,C) - \cost_z(L,C)\\
        =&\quad\cost_z(X,C) - \sum_{y\in L}w_L(y)\cdot\dist^z(y,C)\\
        \ge&\quad \cost_z(X,C) - \sum_{y\in L}w_L(y)\cdot \left(\frac{2\eps}{m}\cost_z^{(m)}(X,C) + (3z\lambda)^z \right) & \text{(by~\Cref{lem:small dist})}\\
        \ge&\quad \cost_z(X,C) - O(\eps)\cdot \cost_z^{(m)}(X,C) - m\cdot (3z\lambda)^z,
    \end{align*}
    which proves~\eqref{eq:coreset 1}. Once we obtain~\eqref{eq:coreset 1} and~\eqref{eq:coreset 2}, we have 
    \begin{align*}
        &\quad\left|\cost_z^{(h)}(X,C) - \cost_z^{(h)}(S,C) \right|\\
        \le&\quad \left|\cost_z(X,C) - \cost_z(S,C) \right| + \left|\cost_z^{(h)}(X,C) - \cost_z(X,C) \right| + \left|\cost_z^{(h)}(S,C) - \cost_z(S,C) \right|\\
        \le&\quad \eps\cdot \cost_z(X,C) + O(\eps)\cdot \cost_z^{(m)}(X,C) + 2^{O(z\log (z+1))}\cdot m\lambda^z\\
        \le&\quad O(\eps)\cdot \cost_z^{(m)}(X,C) + 2^{O(z\log(z+1))}\cdot m\lambda^z\\
        \le&\quad O(\eps)\cdot \cost_z^{(h)}(X,C) + 2^{O(z\log(z+1))}\cdot m\lambda^z,
    \end{align*}
    which completes the proof.
\end{proof}

\section{Reduction II: Size-preserving Property}
\label{sec:reduction2}

We present an alternative reduction in \Cref{thm:reduction2}.
The main difference to \Cref{thm:reduction I} is to avoid the multiplication of $k$ and $m$ in the coreset size.
This reduction requires a slightly stronger technical guarantee from the vanilla coreset algorithm,
such that the algorithm not only needs to construct a small coreset for metric $M$ that contains the dataset,
but also for a family of \emph{separated duplications} of $M$.
Roughly speaking, a separated duplication of $M$ ``copies'' $M$ into several disjoint sub-metrics, and makes the distance between two copies large.  We provide the formal definition below.

\begin{definition}[Seperated duplication of a metric space]
    \label{def:separated duplication}
    Given a metric space $M=(V,\dist)$, for real number $w\ge 0$ and integer $h\ge 1$, a metric space $M'=(V\times [h],\dist')$ is called an $w$-separated $h$-duplication of $M$, if 
    \begin{enumerate}
        \item $\forall x,y\in X, i\in [h]$, it holds that $\dist'((x,i),(y,i)) = \dist(x,y)$; and
        \item $\forall x,y\in X, i,j\in [h]$ such that $i\neq j$, it holds that $\dist'((x,i),(y,j))\ge \max\{w,\dist(x,y)\}$.
    \end{enumerate}
    For the special case of $h = 1$ and $w = 0$, we say $M'$ is a $w$-separated $h$-duplication of $M$ if and only if $M'=M$.
\end{definition}

\begin{theorem}
    \label{thm:reduction2}
    Suppose an underlying metric space $M=(V,\dist)$ and a family $\calM^\dup=\left\{M_{h,w}^\dup\right\}_{h\ge 1,w\ge 0}$ of $w$-separated $h$-duplication of $M$ are given. 
Assume there exists an algorithm $\calA$ such that, for every $0<\eps<1$, integers $k,z\ge 1$, metric $M^\dup\in\calM^\dup$ and  $n$-point dataset from $M^\dup$, it runs in time $T(M^\dup, n, k, \eps^{-1})$ to construct an $\eps$-coreset of size $N(M^\dup, n, k, \eps^{-1})$ for \kzC on $M^\dup$.
Then, there is an algorithm that, given $0<\eps<1$, integers $k,z,m\ge 1$, an $n$-point dataset $X\subseteq V$
    and a $(2^{O(z)},O(1),O(1))$-approximation $C^*\subseteq V$ to \kzmC on $X$ 
    as input, 
    constructs $\eps$-coreset for \kzmC of size 
    \begin{equation*}
        A + N(M,n,k,O(\eps^{-1})) + N\left(M^\dup_{h',w'}, O(km\eps^{-1}), O\left(k\log^2(km\eps^{-1})\right), O(\eps^{-1})\right),
    \end{equation*}
    where 
        $A = 2^{O(z\log z)}\cdot O\left(m \eps^{-2z} \log^z(km\eps^{-1})\right)$,
    $h' = \min\{O(k\log(km\eps^{-1})),n\}$ and $w' = O(z\eps^{-1}\cdot \diam(X)\cdot n^{1/z})$.
    This algorithm invokes $\calA$ a constant number of times and runs in time 
    \begin{equation*}
        \tilde O(nk) + \poly(km\eps^{-1}) + T(M,n,k,\eps^{-1}) + T\left(M^\dup_{h',w'}, h', O\left(k\log^2(k m\eps^{-1})\right),O(\eps^{-1})\right).
    \end{equation*}

\end{theorem}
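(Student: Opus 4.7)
The plan is to combine Reduction~I with a new Reduction~II that handles the ``leftover'' sparse subset produced by Reduction~I. Concretely, I first invoke \Cref{lem:decomposition} on $X$ at a scale $\lambda=\Theta_z\bigl((\eps\cost_z^{(\gamma m)}(X,C^*)/m)^{1/z}/\log(km\eps^{-1})\bigr)$, producing a partition $X=X_\calD\cup X_\calS$ with every part of $X_\calD$ containing at least $(1+\eps^{-1})m$ points and $|X_\calS|\le O(km\eps^{-1})$. The dense piece is handled exactly as in \Cref{thm:reduction I}: feed $X_\calD$ into $\calA$ to obtain a vanilla $\eps$-coreset of size at most $N(M,n,k,O(\eps^{-1}))$, which \Cref{thm:main1} upgrades to an $(O(\eps),2^{O(z\log z)}m\lambda^z)$-coreset for \kzmC; the choice of $\lambda$ keeps the additive error below $\eps\OPT_z^{(m)}(X)$, so this piece composes cleanly via \Cref{fact:composability}.

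The main work is to build a robust $\eps$-coreset for $X_\calS$ using $\calA$ applied to a separated duplication $M^\dup_{h',w'}$. I first obtain a $\lambda$-bounded partition $\calP_\calS=(P_1,\ldots,P_t)$ of $X_\calS$ (again via clustering with respect to $C^*$ and isolating the $O(m)$ far points into singletons). Next I compute a sparse partition $\calQ=(G_1,\ldots,G_{h'})$ of $X_\calS$ with diameter $O(\eps^{-1}\lambda\log(km\eps^{-1}))$ per group, satisfying the property that every subset of diameter $O(\eps^{-1}\lambda)$ meets at most $O(\log(km\eps^{-1}))$ groups. Since $|X_\calS|\le O(km\eps^{-1})$, standard constructions give $h'\le\min\{O(k\log(km\eps^{-1})),|X_\calS|\}$, matching the theorem statement. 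I then embed $X_\calS$ into $M^\dup_{h',w'}$ via $\phi(x):=(x,j)$ for $x\in G_j$, picking $w'=\Theta(z\eps^{-1}\diam(X)\cdot n^{1/z})$ large enough that different groups never interact in nearest-center computations for any $C\subseteq V$ with $|C|=k$.

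I next invoke $\calA$ on $\phi(X_\calS)$ with center budget $k'=O(k\log^2(km\eps^{-1}))$, producing a vanilla $\eps$-coreset $S'$. The choice of $k'$ is driven by the sparse-partition intersection bound: for every $C\in V^k$ there exists a ``bridge'' set $C'\subseteq V\times[h']$ with $|C'|=O(k\log(km\eps^{-1}))$ and $\cost_z(X_\calS,C)\approx\cost_z(\phi(X_\calS),C')$, while one extra $\log$ factor absorbs the rescaled-diameter bookkeeping and the additional per-part witnesses needed to enforce exact size preservation. I would then calibrate the weights of $S'$ group by group to enforce $w_{S'}(\phi(G_j))=|G_j|$ for every $j$, and the pulled-back weighted set $S_\calS\subseteq X_\calS$ becomes a size-preserving vanilla $O(\eps)$-coreset of $X_\calS$. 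Invoking Condition~II (the size-preserving implication sketched in \Cref{sec:tech_overview}) then promotes $S_\calS$ to a robust $O(\eps)$-coreset of $X_\calS$ with additive error $2^{O(z\log z)}m(\eps^{-1}\lambda\log(km\eps^{-1}))^z$, which after substituting $\lambda$ expands to the claimed $A$.

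The hardest step will be the \emph{coreset-transfer} argument: showing that the vanilla coreset guarantee of $S'$ on $\phi(X_\calS)$ actually yields a size-preserving vanilla coreset guarantee on $X_\calS$ with only $k'=O(k\log^2(km\eps^{-1}))$ centers, rather than suffering the naive $k^2$ blowup. This hinges on using the sparse-partition intersection property to prune, for every candidate $C\in V^k$, the relevant duplicated centers down to one representative per close group, and checking that the resulting bridge set $C'$ is simultaneously valid across the entire family of weighted subsets appearing in $\cost_z^{(h)}$. A secondary difficulty is that size preservation must ultimately be enforced against the finer partition $\calP_\calS$ and not just $\calQ$; I would handle this by a local re-weighting inside each $G_j$ whose mass-movement distance is at most $\diam(G_j)$, which is already absorbed by $A$. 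Composing the dense and sparse coresets via \Cref{fact:composability} and rescaling $\eps$ gives the stated $\eps$-coreset of $X$ for \kzmC, with runtime dominated by the two calls to $\calA$ plus $\tilde O(nk)+\poly(km\eps^{-1})$ for the decomposition, sparse-partition, and calibration steps.
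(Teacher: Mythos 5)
Your overall route is the same as the paper's (dense part via \Cref{thm:main1}; sparse part via a sparse partition, a separated duplication, weight calibration, and Condition II), but as written it has two concrete gaps. First, you never split the sparse set $X_\calS$ into the at most $\beta k$ ``light but close'' clusters and the far/outlier points (the paper's $\calS_1$ versus $\calS_2$ in \Cref{claim:refined decopmosition}), and you never add the latter points verbatim to the coreset. That step is where the additive size term $A$ actually comes from: with a correctly chosen $\lambda$ the number of far points is $\Theta_z\bigl(m\eps^{-2z}\log^z(km\eps^{-1})\bigr)$, so your claims that $|X_\calS|\le O(km\eps^{-1})$ and that a sparse partition of \emph{all} of $X_\calS$ has only $h'\le O(k\log(km\eps^{-1}))$ groups are unjustified --- the far points alone can force that many groups/singleton parts, and then the bridge-center bound $|C'|\le (t+k)\Lambda$ in \Cref{lem:size-preserving} forces a center budget $k'=\Omega((k+|F|)\log(km\eps^{-1}))$ rather than $O(k\log^2(km\eps^{-1}))$. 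Only after removing $X_{\calS,2}$ does the remaining sparse set admit a $\lambda$-bounded partition with $O(k)$ parts, which is exactly what drives both $h'$ and $k'$ in the theorem.

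Second, your scale $\lambda$ is mis-set and your bookkeeping conflates the additive \emph{size} $A$ with the additive \emph{error} of Condition II. \Cref{cor:second} gives additive error $2^{O(z\log(z+1))}\, m\mu^z\eps^{1-z}$ (you dropped the $\eps^{1-z}$ factor), applied with $\mu=\Theta\bigl(\eps^{-1}\lambda\log(km\eps^{-1})\bigr)$; keeping this below $\eps\cdot\OPT_z^{(m)}(X)$ requires $\lambda\lesssim \eps^{2}\bigl(\OPT_z^{(m)}(X)/m\bigr)^{1/z}/\log(km\eps^{-1})$, as in the paper. Your choice $\lambda=\Theta_z\bigl((\eps\cost_z^{(\gamma m)}(X,C^*)/m)^{1/z}/\log(km\eps^{-1})\bigr)$ yields error of order $\eps^{2-2z}\cdot\cost_z^{(\gamma m)}(X,C^*)$, which already at $z=1$ is of order $\OPT_z^{(m)}(X)$ rather than $\eps\cdot\OPT_z^{(m)}(X)$. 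Moreover $A$ is not an error term at all: it is the point count $|X_{\calS,2}|=\gamma m+\cost_z^{(\gamma m)}(X,C^*)\cdot\lambda^{-z}$, which evaluates to the claimed $2^{O(z\log z)}\, m\eps^{-2z}\log^z(km\eps^{-1})$ only with the smaller $\lambda$. Both issues are repairable within your framework (introduce the $\calS_1/\calS_2$ split and retune $\lambda$), but as stated the size and accuracy claims do not follow.
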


Notice that we assume a family $\calM^\dup$ is \emph{given} alongside $M$ in \Cref{thm:reduction2},
since this family $\calM^\dup$ may depend on the specific type of $M$ (e.g., $M$ is Euclidean or shortest-path metric of a graph)
and may require some careful design.
Luckily, in most cases the separated duplication family $\calM^\dup$ needs not be much more ``complex'' than $M$,
and this particularly means as long as a vanilla coreset works on $M$, it can also deal with metrics in $\calM^\dup$.

Specifically, our main claim is that for metrics including Euclidean spaces, general finite metrics, doubling metrics and shortest-path metrics of minor-excluded graphs,
the complexity/dimension parameter (such as Euclidean dimension) of the separated duplication
only increases by a constant (factor).
For instance, for the $d$-dimension Euclidean metric $M = (\mathbb{R}^d, \ell_2)$,
we show that for every $w$ and $h$ there exists a $w$-separated $h$-duplication of $M$ that can be realized by a
$(d + 1)$-dimensional Euclidean space $(\mathbb{R}^{d + 1}, \ell_2)$ (\Cref{lem:duplication for Euclidean case}).
Hence, a vanilla coreset construction for those metrics automatically works for the separated duplication (with only a constant factor increase in coreset size).
We summarize the results for the separated duplication in \Cref{tab:separated_duplication}.

\begin{table}[ht]
    \centering
    \caption{Summary of the results for the separated duplication. For all cases of the original metric space $M$ that we list, we demonstrate the existence of a family $\calM^\dup$, such that each $M^\dup \in \calM^\dup$ possesses the property as shown in the table.}
    \label{tab:separated_duplication}
    \begin{tabular}{@{}cccc@{}}
    \toprule
    \multicolumn{2}{c}{space $M$}                        &  property of $\calM^\dup$ & reference \\ \midrule
    \multicolumn{2}{c}{$(\R^d, \ell_p)$ for $p\ge 1$}                                 & can be realized by $(\R^{d+1}, \ell_p)$     &   \Cref{lem:duplication for Euclidean case}        \\ \cmidrule(r){1-2}
    \multirow{2}{*}{general metric} &  doubling dimension $\ddim(M)$ & $\ddim(M^\dup)\le 2\ddim(M)+2$              &     \Cref{lem:duplication doubling}      \\
                                    &  ambient size $n$              &  ambient size $\le n^2$                 & \Cref{remark:finite}          \\ \cmidrule(r){1-2}
    \multirow{2}{*}{graph metric}   &  treewidth $\tw$               &  treewidth $\tw$                        &      \Cref{lem:duplication graph}     \\
                                    & excluding a fixed minor $H$        & excluding the same minor $H$                & \Cref{lem:duplication graph}          \\ \bottomrule
    \end{tabular}
    \end{table}

\subsection{Proof of~\Cref{thm:reduction2}}
\label{sec:reduction2_proof}

Recall that in~\Cref{thm:reduction I}, the additive term $O_z(km\eps^{-1})$ in the coreset size corresponds to the number of sparse points, i.e. $X_{\calS}$, since we directly add these points into the coreset.
In this proof, we still follow the steps of \Cref{thm:reduction I}, except that we use a more refined method to construct a coreset for $X_{\calS}$.

We do not attempt to leverage any structural property of $X_{\calS}$ since it may be quite arbitrary.
Instead, we suggest a new framework:
we show in \Cref{cor:second} that as long as a vanilla coreset construction algorithm additionally satisfies the ``size-preserving'' property,
then it is as well a coreset for clustering with outliers.
Of course, this size-preserving property is nontrivial, and we cannot simply assume a vanilla coreset algorithm to satisfy this property.
Hence, another important step (described in \Cref{alg:size-preserving}) is a general black-box reduction (albeit it requires the vanilla coreset algorithm also works for separated duplication of the underlying metric),
that turns a generic vanilla coreset algorithm into the one that is size-preserving.

Actually, the abovementioned reduction steps work for a general dataset (not only for $X_{\calS}$).
However, the caveat is that this reduction may lead to a $\poly\log(n)$ factor in the final coreset size where $n$ is the size of the dataset.
This is in general not acceptable, but luckily, since we only need to apply this reduction on $X_{\calS}$ which has only $O(km\epsilon^{-1})$ points,
this $\poly\log$ factor becomes negligible.
Indeed, the only special property we use from $X_{\calS}$ is that it has a small number of points.

\paragraph{Size-preserving (Vanilla) Coresets Are Robust}
We define a coreset (or more generally, a weighted set) $S \subseteq X$ to be size-preserving with a diameter bound $\lambda$ if there exists a $\lambda$-bounded partition $\calP$ of $X$ such that, for every $P \in \calP$, $w_S(P \cap S) = |P|$. Similar to~\Cref{thm:main1}, we establish a relationship between the error of $S$ for being a robust coreset and the diameter bound of the partition $\calP$.

\begin{restatable}[Size-preserving vanilla coresets are robust]{lemma}{SizePreservingAreRobust}
    \label{cor:second}
    Suppose a metric $M=(V,\dist)$ is given.
For $0<\eps<1$, $\lambda>0$, integers $k,z,m\ge 1$ and a dataset $X\subseteq V$, if a weighted set $S\subseteq X$ satisfies that there exists a $\lambda$-bounded partition $\calP$ of $X$ such that
    \begin{enumerate}
        \item $S$ is an $\eps$-coreset of $X$ for \tzC{k+|\calP| }, and 
        \item  $\forall P\in\calP$, $|P| = w_S(S\cap P)$,
    \end{enumerate}
    then $S$ is an $\left(O(\eps), 2^{O(z\log (z+1))}\cdot m\lambda^z\eps^{1-z}\right)$-coreset of $X$ for \kzmC.
\end{restatable}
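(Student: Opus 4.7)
The plan is to prove the claimed additive-error coreset guarantee by fixing a test center set $C \in V^k$ and an outlier count $0 \le h \le m$, then constructing an auxiliary center set $C' = C'(C) \subseteq V$ with $|C'| \le |\calP|$ such that the outlier-induced gap
\[
\Delta_X := \cost_z^{(h)}(X, C) - \cost_z(X, C \cup C'), \qquad \Delta_S := \cost_z^{(h)}(S, C) - \cost_z(S, C \cup C')
\]
satisfies $|\Delta_X - \Delta_S| \le O(\eps) \cost_z^{(h)}(X, C) + 2^{O(z\log(z+1))} m \lambda^z \eps^{1-z}$. Since $|C \cup C'| \le k + |\calP|$, the first hypothesis gives $\cost_z(S, C \cup C') \in (1 \pm \eps) \cdot \cost_z(X, C \cup C')$, and since $\cost_z(X, C \cup C') \le \cost_z(X, C) \le \cost_z^{(h)}(X, C) + \cost_z(Y, C)$ for the optimal $h$-outlier set $Y$, a short accounting will upgrade this to $|\cost_z(S, C \cup C') - \cost_z(X, C \cup C')| \le O(\eps) \cost_z^{(h)}(X, C)$, whence a triangle-inequality combination yields the lemma. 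The second hypothesis $|P| = w_S(S \cap P)$ will be used through a weight-preserving fractional correspondence $\pi$ between $X$ and $S$ that transports mass only within parts of $\calP$, and therefore moves every unit of mass by distance at most $\lambda$.

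To build $C'$, I would include at most one representative $r_P \in P$ per part $P \in \calP$, guided by $C$ alone (independently of $Y$) so that the construction simultaneously accommodates the outlier sets of both $X$ and $S$. Concretely, I would include $r_P$ whenever $\dist(P, C) > \Theta(\lambda/\eps)$ (the ``far'' parts that might house outliers), so that (i) every point whose distance to $C$ is nontrivially large is within $\lambda$ of a center in $C \cup C'$, and (ii) the ``switched'' set $D := \{x \in X : \dist(x, C') < \dist(x, C)\}$ contributes total mass $O(m)$ up to the $h$-outlier budget. Property (i) controls outlier-mass discrepancies: any mass that is an outlier in $X$ but not in $S$ (or vice versa) lies within $O(\lambda)$ of $C \cup C'$, so it contributes $O(\lambda^z)$ per point to the discrepancy; property (ii) limits the number of inlier units whose gap contribution is nonzero.

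Using the fractional correspondence $\pi$, the discrepancy $|\Delta_X - \Delta_S|$ decomposes into a sum, over the mass in $D$ plus the symmetric difference of outlier mass, of terms $|\dist^z(x, C \cup C') - \dist^z(\pi(x), C \cup C')|$. Each such term is bounded via \Cref{lem:triangle} (Claim~5 of \cite{Sohler18Strong}) by $\eps \cdot \dist^z(x, C \cup C') + (3z/\eps)^{z-1} \lambda^z$. Summing, the multiplicative part contributes $\eps \cdot \cost_z(X, C \cup C') \le O(\eps) \cost_z^{(h)}(X, C)$ while the additive part contributes $|D| \cdot (3z/\eps)^{z-1} \lambda^z = 2^{O(z \log(z+1))} m \lambda^z \eps^{1-z}$, matching the target. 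The case $h = 0$ is trivially subsumed by the vanilla-coreset hypothesis with $C' = \emptyset$.

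The principal obstacle is the joint property of $C'$: it must be small enough that the $(k + |\calP|, z)$-coreset guarantee still yields a useful multiplicative error, yet rich enough that the switched mass $|D|$ and the outlier-symmetric-difference mass are both $O(m)$ for \emph{every} choice of $C$. Selecting $C'$ purely from the geometry of $C$ (rather than from the optimal outlier set $Y$, which differs between $X$ and $S$) is the key maneuver; the detailed verification that this geometric rule simultaneously bounds both the $X$- and $S$-side outliers will rely on the correspondence $\pi$ and an averaging argument over parts. Handling $\pi$ as a fractional transport (rather than a bijection) is a technical but routine issue, as every bound is expressed as an integral against the transport measure and the size-preserving equality $w_S(S \cap P) = |P|$ ensures all per-part accounting balances exactly.
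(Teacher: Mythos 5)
Your overall blueprint---fix $C$ and $h$, build an auxiliary center set $C'$ with $|C'|\le|\calP|$, invoke the \tzC{k+|\calP|} coreset guarantee on $C\cup C'$, and argue that the gap between $\cost_z^{(h)}(\cdot,C)$ and $\cost_z(\cdot,C\cup C')$ is preserved via a within-part transport justified by $w_S(S\cap P)=|P|$ and \Cref{lem:triangle}---is the same skeleton as the paper's proof of \Cref{cor:second}. The genuine gap is in your construction of $C'$ and in claim (ii) built on it. Placing one representative in every part with $\dist(P,C)>\Theta(\lambda/\eps)$ does \emph{not} make the switched set $D=\{x:\dist(x,C')<\dist(x,C)\}$ have mass $O(m)$: if, say, all of $X$ lies in parts far from the test center set $C$, then every point is within $\lambda$ of its part's representative, so essentially all $n$ points (minus the $h$ outlier units) are switched and $|D|=\Theta(n)$. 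Since your additive error is accounted as $|D|\cdot(3z/\eps)^{z-1}\lambda^z$, you obtain $n\lambda^z\eps^{1-z}$ rather than $m\lambda^z\eps^{1-z}$, which breaks the lemma and its downstream use (where $\lambda$ is chosen precisely so that $m\lambda^z\eps^{1-z}\lesssim\eps\OPT_z^{(m)}(X)$). Likewise, your assertion that any outlier-mass discrepancy lies within $O(\lambda)$ of $C\cup C'$ fails when the far parts carry less than $h$ mass: outlier mass can then sit in near parts, at distance $\Theta(\lambda/\eps)$ from $C\cup C'$, not $O(\lambda)$.

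This is exactly the difficulty the paper isolates: a $C'$ chosen from the geometry of $C$ alone by a distance threshold is not enough. The paper's $C^\aux$ is instead a $\lambda$-cover (one point per part) of the ``significant'' outliers $Z$ that survive an explicit fractional matching between outlier weight and nearby inlier weight (\Cref{alg:additional centers}); because that matching eliminates at most $m$ units of inlier weight, \Cref{claim:for significant} shows the inlier mass absorbable by $C^\aux$ is at most $m$, which is what yields the additive term $2^{O(z\log(z+1))}m\lambda^z\eps^{1-z}$, while \Cref{lem:for non-significant} bounds the cost of the eliminated outliers by $O(1)\cdot\cost_z(X_\inl,C)+2^{O(z\log(z+1))}m\lambda^z$. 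Your rule might be partially rescued by noting that every switched point is at distance $\Omega(\lambda/\eps)$ from $C$, so the per-unit additive error could be charged multiplicatively against that point's own contribution to $\cost_z^{(h)}(X,C)$---but that is a different accounting from the one you wrote and would still have to be carried out, together with aligning the per-part outlier budgets on the $X$- and $S$-sides (the paper does this via \Cref{claim:a sufficient condition} and by comparing both sides to common per-part reference values, rather than leaving it to an unspecified averaging argument over the transport $\pi$).
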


\begin{proof}
    The proof can be found in \Cref{sec:new proof}.
\end{proof}
\paragraph{Reduction from Vanilla Coresets to Size-preserving Coresets}
We next show how to turn a vanilla coreset to a stronger vanilla coreset that also satisfies the size-preserving property.
Our algorithm needs a metric decomposition called \emph{sparse partition}, which is introduced by~\cite{Jia05Universal}.
A sparse partition is a partition of the metric/dataset,
such that each part has bounded diameter and the partition additionally satisfies a sparsity property.
This sparsity property requires that any metric ball of small radius intersects only a few parts.
We notice that this sparsity property is different from the previously mentioned sparsity as in the almost-dense decomposition, which instead requires that each part contains only a few points and is not about intersection.

We provide a formal definition and a specific version that we would use~\cite{Jia05Universal}, restated using our language as follows.

\begin{definition}[Sparse partition~\cite{Jia05Universal}]
    \label{def:sparse partition}
    Given a metric space $M=(V,\dist)$ and a subset $X\subseteq V$, we say a partition $\calP$ of $X$ is a \emph{$(\mu,\Gamma,\Lambda)$-sparse partition} if it holds that 
    \begin{itemize}
        \item[a)] (Bounded diameter) for every part $P\in \calP$, $\diam(P)\le \mu$; and
        \item[b)] (Sparsity) for every $x\in X$, the ball $\ball_X(x,\mu/\Gamma)=\{y\in X:\dist(x,y)\le \mu/\Gamma\}$ intersects at most $\Lambda$ parts in $\calP$, i.e., $|\{P\in\calP: P\cap \ball_X(x,\mu/\Gamma)\neq\emptyset\}|\le \Lambda$.
    \end{itemize}
\end{definition}

\begin{theorem}[\cite{Jia05Universal}]
    \label{thm:sparse partition}
    Given a metric space $M=(V,\dist)$, an $n$-point dataset $X\subseteq V$ and $\mu > 0$, there is a $(\mu,\Gamma,\Lambda)$-sparse partition of $X$ with $\Gamma = O(\log n), \Lambda = O(\log n)$ that can be computed in time $\poly(n)$.
\end{theorem}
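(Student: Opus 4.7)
The plan is to build the partition by an iterative region-growing procedure, a classical technique in metric decomposition. Set the step size $\delta := \mu/(c \log_2 n)$ for a sufficiently large constant $c$, and initialize $S_0 := X$. In round $i = 1, 2, \ldots$, pick any $x_i \in S_{i-1}$ and define $r_i$ to be the smallest non-negative multiple of $\delta$ at which the doubling barrier $|\ball_{S_{i-1}}(x_i, r_i + \delta)| \leq 2\,|\ball_{S_{i-1}}(x_i, r_i)|$ fires. Assign $P_i := \ball_{S_{i-1}}(x_i, r_i)$ to $\calP$ and update $S_i := S_{i-1} \setminus P_i$, terminating when $S_i = \emptyset$. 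The algorithm runs in $\poly(n)$ time: each round does $O(n \log n)$ work sorting distances from $x_i$ and scanning until the barrier fires, and there are at most $n$ rounds.

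The diameter bound is immediate from a doubling argument. Each successful growth step at least doubles the ball's cardinality, so the number of steps before stopping is at most $\log_2 |S_{i-1}| \leq \log_2 n$. Hence $r_i \leq (\log_2 n)\,\delta = \mu/c$, which combined with the triangle inequality gives $\diam(P_i) \leq 2 r_i \leq \mu$ for $c \geq 2$.

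For the sparsity property, fix a test ball $B := \ball_X(x, \mu/\Gamma)$ with $\Gamma$ a sufficiently large constant times $\log_2 n$ (to be coordinated with $c$), and let $\calI := \{i : B \cap P_i \neq \emptyset\}$. I would bound $|\calI|$ by a potential argument on $|B \cap S_{i-1}|$. The core observation: if $y \in B \cap P_i$, then by the triangle inequality every $z \in B$ lies within $r_i + \diam(B)$ of $x_i$, so calibrating $\Gamma$ so that $\diam(B) \leq \delta$ places $B \cap S_{i-1}$ inside $\ball_{S_{i-1}}(x_i, r_i + \delta)$, which the doubling barrier bounds in size by $2|P_i|$. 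To translate this containment into a genuine halving of $|B \cap S_{i-1}|$, it suffices to enlarge the part definition to $P_i := \ball_{S_{i-1}}(x_i, r_i + \delta)$ so that $B \cap S_{i-1}$ is \emph{entirely absorbed} into $P_i$ whenever $B$ meets $P_i$; the diameter changes by at most a constant factor, absorbed into the choice of $c$. With this adjustment $B$ meets at most one part (even stronger than needed), and a slightly weaker analysis without the absorption still gives $\Lambda = O(\log n)$ after $O(\log n)$ halvings exhaust $B \cap S_{i-1}$.

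The principal obstacle is calibrating $c$, $\Gamma$, and the part definition consistently so that both the $\mu$-bounded diameter and the sparsity halving hold at constant factor cost. The naive choice $P_i := \ball(x_i, r_i)$ leaves a thin annulus of $B$ unaccounted for, and adversarial configurations (where the bulk of $B \cap S_{i-1}$ sits in this annulus while only a single point lies in $P_i$) can prevent a clean per-round halving of the potential. The standard fix --- absorbing the annulus into $P_i$, or choosing $x_i$ to minimize $r_i$ in each round rather than arbitrarily --- inflates constants but preserves the $O(\log n)$ asymptotics for both $\Gamma$ and $\Lambda$; the parameter bookkeeping is the only delicate part.
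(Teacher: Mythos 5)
The paper does not prove this statement at all---it is imported verbatim from Jia et al.~\cite{Jia05Universal}---so the only question is whether your region-growing argument is a valid substitute, and it is not: the diameter bound and the $\poly(n)$ running time are fine, but the sparsity analysis, which is the entire content of the theorem, has a genuine gap. The doubling barrier gives you $B\cap S_{i-1}\subseteq \ball_{S_{i-1}}(x_i,r_i+\delta)$ and $|\ball_{S_{i-1}}(x_i,r_i+\delta)|\le 2|P_i|$, which bounds the survivors of $B\cap S_{i-1}$ by $|P_i|$ --- a quantity consisting mostly of points outside $B$ and incomparable to $|B\cap S_{i-1}|$ --- so no per-round halving follows, and the fallback assertion that ``a slightly weaker analysis still gives $\Lambda=O(\log n)$'' is exactly the step that is missing. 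The absorption fix rests on a false claim: once $P_i=\ball_{S_{i-1}}(x_i,r_i+\delta)$, a point $y\in B\cap P_i$ only certifies $\dist(y,x_i)\le r_i+\delta$, so the rest of $B$ may sit at distance up to $r_i+2\delta$ and escape $P_i$; hence ``$B$ meets at most one part'' is simply wrong.

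Worse, the algorithm as you state it (arbitrary seed $x_i$, either part definition) provably fails to give $\Lambda=O(\log n)$. With enlarged parts, take $n=2m$ points: a cluster $p_1,\dots,p_m$ with pairwise distances $\varepsilon_0$ (arbitrarily small) and satellites $x_1,\dots,x_m$ with $\dist(x_i,p_i)=\delta$, $\dist(x_i,p_j)=\delta+\varepsilon_0$ for $j\neq i$, $\dist(x_i,x_j)=2\delta$ (a valid metric). Seeding at any remaining $x_i$, the barrier fires already at $r=0$ since $|\ball(x_i,\delta)|=|\{x_i,p_i\}|=2$, and the part $\ball(x_i,\delta)=\{x_i,p_i\}$ removes exactly one cluster point; after $m$ rounds a test ball of radius $\mu/\Gamma\gg\varepsilon_0$ around $p_1$ has met $m=n/2$ parts, and note that seeding to minimize $r_i$ selects precisely these satellites, so that fix does not help either. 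With the original parts $P_i=\ball(x_i,r_i)$, attach behind each $x_i$ a blob $Q_i$ of $2m$ points at distance $\delta$ from $x_i$ and more than $2\delta$ from everything else; then at seed $x_i$ the barrier fails at $0$ (the ball of radius $\delta$ has $2m+2$ points) and fires at $\delta$ (the ball of radius $2\delta$ adds at most $2m$ points), the part $\{x_i,p_i\}\cup Q_i$ again nicks a single cluster point, and the test ball meets $\Omega(\sqrt n)$ parts. So doubling-radius ball carving with arbitrary (or greedy) seeds is not the mechanism behind the $(O(\log n),O(\log n))$ guarantee; a correct proof needs a genuinely different construction or selection/charging scheme, which is why the paper cites~\cite{Jia05Universal} rather than reproving it.
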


We describe our algorithm in~\Cref{alg:size-preserving}, which takes as input an $n$-point dataset $X\subseteq V$, a real number $\mu>0$ representing the diameter bound, and an integer $k'\geq 1$ specifying the number of centers that a coreset can handle.
The algorithm first computes a sparse partition $\calQ=(X_1,\dots,X_l)$ of $X$ using \Cref{thm:sparse partition}. Then, it maps the dataset $X$ into some $M^\dup_{l,w}\in \calM^{\dup}$ with sufficiently large $w$ to ensure that the parts of $\calQ$ are well-separated. 
This can be achieved by mapping each point $x\in X_i$ to $(x,i)$. Guaranteed by the separation property of $M^\dup_{l,w}$, the distance between any pair of parts is at most $w$. We note that this well-separation property is crucial for ensuring the size-preserving property.
Finally, the algorithm constructs a coreset for \tzC{k'} on $M^\dup_{l,w}$, and returns the pre-image of this coreset.
We have the following lemma that demonstrates the correctness of \Cref{alg:size-preserving}.

\begin{algorithm}[ht]
    \caption{Reduction from size-preserving vanilla coresets to vanilla coresets}
    \label{alg:size-preserving}
    \begin{algorithmic}[1]
        \Require{an $n$-point dataset $X\subseteq V$, real number $\mu>0$ and an integer $k'\ge 1$}
        \State compute a $(\mu,\Gamma,\Lambda)$-sparse partition $\calQ=(X_1,\dots,X_l)$ of $X$ \label{alg line:sparse partition} \Comment{use algorithm of~\Cref{thm:sparse partition}}
        \State for every $i\in [l]$, let $X_i':=\{(x,i)\in V\times [l]: x\in X_i\}$, and let $X':=\bigcup_{i=1}^l X_i'$
        \label{alg line:mapping}
        \State construct an $\eps$-coreset $S'$ of $X'$ for \tzC{k'} on metric $M^\dup_{l,200z\eps^{-1}\cdot \diam(X)\cdot n^{1/z}}\in \calM^\dup$ \label{alg line:vanilla coreset}

        \Comment{use the assumed algorithm as a black-box}

        \State let $S:=\{x\in X: \exists i\in[l], (x,i)\in S'\}$, and define $w_S:S\to\R_{\ge 0}$ such that 
        $
            \forall (x,i)\in S, w_S(x) = w_{S'}((x,i))
        $
        \label{alg line:preimage}
        \\
        \Return $(S,w_S)$ and the partition $\calQ$
    \end{algorithmic}
\end{algorithm}
\begin{restatable}[Correctness of~\Cref{alg:size-preserving}]{lemma}{ReductionSizePreserving}
    \label{lem:size-preserving}
    If there exists a $\frac{\eps\mu}{1000z\Gamma}$-bounded partition of $X$ of size $t$ ($t\ge 1$) and $k'$ satisfies that $k' \ge (k+t)\Lambda$, then~\Cref{alg:size-preserving} returns a weighted set $S\subseteq X$ and a $\mu$-bounded partition $\calQ = (X_1,\dots,X_l)$ of $X$ with $l\le t\Lambda$ such that
    \begin{enumerate}
        \item $S$ is an $\eps$-coreset of $X$ for \kzC, and
        \item $\forall i\in [l]$, $w_S(S\cap X_i)\in (1+\eps)\cdot |X_i|$.
    \end{enumerate}
\end{restatable}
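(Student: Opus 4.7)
The plan is to establish the three conclusions of the lemma in order. Let $\calR$ denote the hypothesized $\tfrac{\eps\mu}{1000z\Gamma}$-bounded partition of $X$ of size $t$. For the bound $l\le t\Lambda$: each part $R\in\calR$ has diameter at most $\mu/\Gamma$, so fixing any $x\in R$ gives $R\subseteq\ball_X(x,\mu/\Gamma)$; sparsity (\Cref{def:sparse partition}) then implies that $R$ meets at most $\Lambda$ parts of $\calQ$, and summing over $\calR$ yields $l\le t\Lambda$.

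For the coreset guarantee, fix a $k$-point center set $C\subseteq V$. I will build a lifted center set $\tilde C\subseteq V\times[l]$ with $|\tilde C|\le(k+t)\Lambda=k'$ such that $\cost_z(X',\tilde C)\in(1\pm O(\eps z))\cost_z(X,C)$ and analogously $\cost_z(S',\tilde C)\in(1\pm O(\eps z))\cost_z(S,C)$; composing with the $\eps$-coreset property of $S'$ for \tzC{k'} on $M^{\dup}_{l,w}$ and rescaling $\eps$ then shows $S$ is an $\eps$-coreset of $X$ for \kzC. To build $\tilde C$, classify each $R\in\calR$ as \emph{close} (if $\dist(R,C)\le\eps^{-1}\diam(R)$) or \emph{far}. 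For each $j$ with $R\cap X_j\ne\emptyset$: if $R$ is close, add $(\arg\min_{c\in C}\dist(y,c),j)$ for every $y\in R\cap X_j$; if $R$ is far, add only $(c_R,j)$ with $c_R:=\arg\min_{c\in C}\dist(c,R)$. The far contribution has size $\le t\Lambda$ by the Step 1 sparsity argument, and the close contribution has size $\le k\Lambda$ since for each fixed $c\in C$ the union of close parts having $c$ as some point's nearest center has diameter at most $2(1+\eps^{-1})\cdot\tfrac{\eps\mu}{1000z\Gamma}\le\mu/\Gamma$ (triangle inequality) and thus meets at most $\Lambda$ parts of $\calQ$. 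Point by point: for $x\in R\cap X_j$ with $R$ close, $(\arg\min_{c\in C}\dist(x,c),j)\in\tilde C$ and the lower bound $\dist'((x,j),(c,j'))\ge\dist(x,c)\ge\dist(x,C)$ of \Cref{def:separated duplication} forces $\dist'((x,j),\tilde C)=\dist(x,C)$; for $R$ far, the in-layer pair $(c_R,j)$ gives $\dist'((x,j),\tilde C)\le\dist(x,c_R)\le(1+2\eps)\dist(x,C)$ by triangle inequality (using $\dist(R,C)>\eps^{-1}\diam(R)$), and the same lower bound yields $\dist'((x,j),\tilde C)\ge\dist(x,C)$. Raising to the $z$-th power and summing gives the $(1\pm O(\eps z))$ bound on $\cost_z(X',\tilde C)$, and verbatim the same argument bounds $\cost_z(S',\tilde C)$ versus $\cost_z(S,C)$.

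For the size-preservation claim, fix $i\in[l]$ and consider $\tilde D_1:=\{(q_j,j):j\in[l]\}$ for arbitrary $q_j\in X_j$ and $\tilde D_2:=\tilde D_1\setminus\{(q_i,i)\}$. Both satisfy $|\tilde D_\bullet|\le l\le k'$, so the $\eps$-coreset guarantee of $S'$ applies. Every $X_j$ has diameter $\le\mu$, and on $\tilde D_2$ the $i$-th layer is forced into the out-of-layer regime with distance $\ge w$; I use one of the concrete realizations in \Cref{tab:separated_duplication} (e.g.\ the Euclidean one from \Cref{lem:duplication for Euclidean case}), which enjoys the matching upper bound $\dist'((x,i),(y,j))\le\dist(x,y)+|i-j|w$ under an adjacent-layer embedding, so the $i$-th layer contribution to $\cost_z(X',\tilde D_2)$ is sandwiched between $|X_i|w^z$ and $|X_i|(w+2\diam(X))^z$. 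The choice $w=200z\eps^{-1}\diam(X)\cdot n^{1/z}$ makes $(w+2\diam(X))^z\le(1+\eps)w^z$ and $n\mu^z\le\eps|X_i|w^z$, so $\cost_z(X',\tilde D_2)\in(1\pm O(\eps))|X_i|w^z$ with the analogous identity for $S'$ (weights replaced). Invoking the $\eps$-coreset guarantee on $\tilde D_2$ and dividing by $w^z$ extracts $w_S(S\cap X_i)\in(1\pm O(\eps))|X_i|$, and rescaling $\eps$ by an absolute constant absorbs the $O(\cdot)$.

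The hardest part will be the two-sided control of out-of-layer distances needed to bound $\cost_z(X',\tilde D_2)$ in Step 3: \Cref{def:separated duplication} only supplies the \emph{lower} bound $\max\{w,\dist(x,c)\}$, so one must further exploit the specific realizations in \Cref{tab:separated_duplication} for the matching upper bound, and then verify that $w=\Theta(z\eps^{-1}\diam(X)n^{1/z})$ simultaneously makes $(w+\diam(X))^z\le(1+\eps)w^z$ and makes the $\mu^z$ terms from the other layers negligible against $|X_i|w^z$, so that the $\eps$-coreset guarantee cleanly reduces to an approximate mass identity on $X_i$.
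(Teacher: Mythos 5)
Your Step 1 (the bound $l\le t\Lambda$) and Step 2 (the coreset guarantee via a lifted ``bridge'' center set of size at most $(k+t)\Lambda\le k'$, with a close/far classification of the parts of the hypothesized $\lambda$-bounded partition and a sparsity-based count of at most $k\Lambda$ close lifted centers and $t\Lambda$ far ones) are essentially the paper's own argument, up to a cosmetic difference in how ``close'' is defined (you use a relative threshold $\dist(R,C)\le\eps^{-1}\diam(R)$ per part; the paper uses an absolute threshold $\dist(c,P_j)\le\mu/(4\Gamma)$ per center--part pair and always keeps one far center per part). Your $(1\pm O(\eps z))$ slack there is absorbable by the $1000z$ factor built into the premise, so that part is fine modulo constant bookkeeping.

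The genuine gap is in Step 3, and it is exactly the point you flag as ``the hardest part.'' \Cref{lem:size-preserving} must hold for whatever family $\calM^\dup$ is handed to \Cref{alg:size-preserving}, and \Cref{def:separated duplication} only guarantees the \emph{lower} bound $\dist'((x,i),(y,j))\ge\max\{w,\dist(x,y)\}$ on cross-copy distances; there is no matching upper bound in general (and even in the concrete realizations of \Cref{tab:separated_duplication} the distance between copies $i$ and $j$ is about $|i-j|\cdot w$, so your ``adjacent-layer'' bound $\dist(x,y)+|i-j|w\le w+2\diam(X)$ is a property of particular constructions, not of the abstract definition the lemma is stated for). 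Appealing to \Cref{lem:duplication for Euclidean case} therefore does not prove the lemma as stated. Moreover, even granting an upper bound, your ``analogous identity for $S'$ (weights replaced)'' is not justified: the non-$i$ copies' contribution to $\cost_z(S',\tilde D_2)$ cannot be bounded by $n\mu^z$, since the coreset weights are arbitrary; one must bound it through the coreset guarantee against the full center set $\tilde D_1$, i.e.\ $\cost_z(S',\tilde D_1)\le(1+\eps)\cost_z(X',\tilde D_1)\le O(\eps) w^z$ — you introduce $\tilde D_1$ but never use it this way. The paper's proof shows the two-sided control you are after is unnecessary: with $C=\{(x_j,j):j\ne i\}$, every point $p'\in X_i'$ satisfies $\dist_{l,w}(p',x_i')\le\diam(X)\le\frac{\eps}{200z}w\le\frac{\eps}{200z}\dist_{l,w}(x_i',C)$, so both $\cost_z(X',C)$ and $\cost_z(S',C)$ are, up to $1\pm\eps/10$, equal to $|X_i|\cdot\dist^z_{l,w}(x_i',C)$ and $w_{S'}(S'\cap X_i')\cdot\dist^z_{l,w}(x_i',C)$ respectively; the unknown (possibly huge) quantity $\dist^z_{l,w}(x_i',C)$ then cancels when the coreset guarantee for $C$ is applied, yielding $w_{S'}(S'\cap X_i')\in(1\pm\eps)|X_i|$ with no upper bound on cross-copy distances ever needed. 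Rewriting your Step 3 along these lines closes the gap.
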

\begin{proof}
    The proof can be found in \Cref{sec:reduction for size_preserving}.
\end{proof}

We note that the size-preserving property guaranteed by \Cref{lem:size-preserving} is actually weaker than what we need, as it only approximately preserves the sizes. However, we argue that this is still sufficient because we can calibrate the weights of this coreset, which only affects the accuracy by a factor of $1+\eps$.

As shown in~\Cref{lem:size-preserving}, the lower bound of $k'$ relies on two factors: the minimum size of the bounded partition of $X$ and the parameter $\Lambda$ of the sparse partition $\calQ$. Therefore, it is essential for both of these factors to be as small as possible.
According to~\Cref{lem:decomposition}, an $O(k)$-sized bounded partition exists for the majority of points. This suggests applying \Cref{alg:size-preserving} only to this majority and directly incorporating the remaining points into the coreset.
Regarding the parameter $\Lambda$, we assert that the result of~\Cref{thm:sparse partition}, which provides a sparse partition with $\Lambda = O(\log n)$, is sufficient for our needs. This is because we will apply the algorithm to the small subset, say, of size $\poly(km\eps^{-1})$, rendering such $\log n$ factor negligible.

\paragraph{Concluding~\Cref{thm:reduction2}} 

Recall that we are given an $(\alpha,\beta,\gamma)$-approximation $C^*$ to \kzmC on $X$, where $\alpha = 2^{O(z)},\beta = \gamma = O(1)$. We first apply~\Cref{lem:decomposition} to $X$ with the diameter bound 
\begin{equation*}
    \lambda:= (z+1)^{-\xi}\cdot \eps^{2}\cdot \left(\frac{ \cost_z^{(\gamma m)}(X,C^*)}{\alpha m} \right)^{1/z}\cdot \left(\log(km\eps^{-1})\right)^{-1}
\end{equation*}
for sufficiently large constant $\xi > 0$. 
Let $\calP = \calD\cup \calS$ denote the $\lambda$-bounded partition computed by the algorithm of~\Cref{lem:decomposition}, where for every $P\in\calD$, $|P|\ge (1+\eps^{-1})m$. Let $X_{\calD}:=\bigcup_{P\in\calD} P$.
We construct an $\eps$-coreset $S_{\calD}$ of $X_{\calD}$ for \kzC on the original metric space $M$ using the assumed algorithm as a black-box. Guaranteed by~\Cref{thm:main1}, $S_\calD$ is also an $\left(O(\eps), O(\eps)\cdot \OPT_z^{(m)}(X)\right)$-coreset of $X_{\calD}$ for \kzmC.

As for the sparse subsets $\calS$, we further decompose it into $\calS_1$ and $\calS_2$, where $|\calS_1|\leq \beta k$ and $\calS_2$ covers only a few points. Such a decomposition always exists. Moreover, we claim that the partition computed by \Cref{lem:decomposition} already provides this, as stated in the following claim.

\begin{claim}
    \label{claim:refined decopmosition}
    Given a metric $M=(V,\dist)$, a integer $\lambda >0$, a dataset $X\subseteq V$ and an $(\alpha,\beta,\gamma)$-approximation solution $C^*$ to \kzmC of $X$, let $\calP=\calD\cup\calS$ be a $\lambda$-bounded partition computed using~\Cref{lem:decomposition}, the sparse subsets $\calS$ can be further decomposed into $\calS_{1}$ and $\calS_{2}$ such that $|\calD| + |\calS_1|\le \beta k$ and $|\bigcup_{P\in\calS_2} P|\le \gamma m + \cost_z^{(\gamma m)}(X,C^*)\cdot \lambda^{-z}$.
\end{claim}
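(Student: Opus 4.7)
The plan is to show that the decomposition is already implicitly produced by the construction in the proof of \Cref{lem:decomposition}, so the claim follows by simply reading off the appropriate sub-collections and invoking the bounds that were already established there.

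Specifically, I will recall from the proof of \Cref{lem:decomposition} that $\calS$ is built as a disjoint union of two collections: the ``leftover'' Voronoi cells $\{P_i : i \in [\beta k],\ |P_i| < (1+\eps^{-1})m\}$ coming from the clustering of $X \setminus F$ with respect to $C^*$, together with the singleton sets $\{\{x\} : x \in F\}$, where $F$ consists of the $\gamma m$ outliers plus the far inliers. I will then define
\[
    \calS_1 := \{P_i : i \in [\beta k],\ |P_i| < (1+\eps^{-1})m\}, \qquad \calS_2 := \{\{x\} : x \in F\},
\]
so that $\calS = \calS_1 \cup \calS_2$ by construction.

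For the first inequality, I will observe that each element of $\calD \cup \calS_1$ corresponds to a distinct index $i \in [\beta k]$ (i.e., a distinct cluster of the Voronoi partition of $X \setminus F$ with respect to $C^*$), hence $|\calD| + |\calS_1| \leq \beta k$. For the second inequality, I will use the upper bound on $|F|$ already established inside the proof of \Cref{lem:decomposition}: the outlier contribution is exactly $|L| = \gamma m$, and an averaging argument gives the number of far inliers $\kappa \leq \cost_z^{(\gamma m)}(X, C^*) \cdot \lambda^{-z}$, so $|\bigcup_{P \in \calS_2} P| = |F| \leq \gamma m + \cost_z^{(\gamma m)}(X, C^*) \cdot \lambda^{-z}$.

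There is no real obstacle here; the claim is essentially a bookkeeping refinement of \Cref{lem:decomposition}, making explicit the distinction between the two sources of ``sparse'' parts (small Voronoi cells versus singletons from $F$) that were already treated separately in the original counting argument.
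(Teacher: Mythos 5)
Your proposal is correct and matches the paper's treatment: the paper simply notes that \Cref{claim:refined decopmosition} ``follows directly from the construction and the size analysis of $\calS$ in the proof of \Cref{lem:decomposition},'' which is exactly the bookkeeping you spell out (taking $\calS_1$ to be the small Voronoi cells indexed by $[\beta k]$ and $\calS_2$ to be the singletons from $F$, with the bound $|F|\le \gamma m + \cost_z^{(\gamma m)}(X,C^*)\cdot\lambda^{-z}$ already established there).
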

The claim follows directly from the construction and the size analysis of $\calS$ in the proof of \Cref{lem:decomposition}. Provided by this decomposition, let $X_{\calS,1}:=\bigcup_{P\in\calS_1} P$ and $X_{\calS,2}:=\bigcup_{P\in\calS_2} P$, we have $|X_{\calS,1}|\le O(km\eps^{-1})$ and $X_{\calS,1}$ admits a $\lambda$-bounded partition of size at most $\beta k$.

Hence, we apply~\Cref{alg:size-preserving} to $X_{\calS,1}$ with $\mu = 1000\eps^{-1}z\Gamma\lambda$ and $k' = (k + \beta k\Lambda + \beta k)\Lambda$, where $\Gamma = \Lambda = O(\log |X_{\calS,1}|) = O(\log(km\eps^{-1}))$ as stated in~\Cref{thm:sparse partition}. By~\Cref{lem:size-preserving},~\Cref{alg:size-preserving} returns a weighted set $S\subseteq X_{\calS,1}$ and a $\mu$-bounded partition $\calQ$ of $X$ with $|\calQ|\le \beta k\Lambda$ such that $S$ is an $\eps$-coreset of $X_{\calS,1}$ for \tzC{k+\beta k\Lambda}, and for all $Q\in \calQ$, $w_S(S\cap Q)\in (1+\eps)\cdot |Q|$.

We calibrate the weight of $S$ to meet the exact size-preserving requirement of~\Cref{cor:second} as follows: we define a new weight function $w_S':S\to \R_{\geq 0}$ such that, for every $Q\in\calQ$ and every $x\in S\cap Q$, $w_S'(x) = w_S(x)\cdot \frac{|Q|}{w_S(S\cap Q)}$. It is easy to verify that the weighted set $(S,w_{S}')$ is size-preserving with respect to $\calQ$. Furthermore, $w_S(S\cap Q)\in (1\pm \eps)\cdot |Q|$ implies that for every $x\in Q$, $w'_S(x)\in (1\pm O(\eps))\cdot w_S(x)$.
Therefore, for every $C\in V^k$, we have
\begin{equation*}
    \sum_{x\in S}w'_S(x)\cdot \dist^z(x,C) \in (1\pm O(\eps))\cdot \sum_{x\in S}w_S(x)\cdot \dist^z(x,C).
\end{equation*}
Since $\sum_{x\in S}w_S(x)\cdot \dist^z(x,C)\in (1\pm \eps)\cdot \cost_z(X_{\calS,1},C)$ due to the fact that $(S,w_S)$ is an $\eps$-coreset, we conclude that $(S,w_S')$ is an $O(\eps)$-coreset of $X_{\calS,1}$ for \kzC.

We then use the function $w_S'$ to weight $S$ instead of using $w_S$. Therefore, $S$ is an $\eps$-coreset for \tzC{k+|\calQ|} that is size-preserving with respect to $\calQ$.
By~\Cref{cor:second}, $S$ is an $\left(O(\eps), 2^{O(z\log (z+1))}\cdot m\mu^z\eps^{1-z}\right)$-coreset of $X_{\calS,1}$ for \kzmC, where the additive error is bounded by 
\begin{equation*}
    2^{O(z\log (z+1))}\cdot m\mu^z\eps^{1-z} = 2^{O(z\log(z+1))}\cdot m\eps^{1-2z}\cdot
    \lambda^z\cdot \log^z(km\eps^{-1})\le \eps\cdot \OPT_z^{(m)}(X),
\end{equation*}
since $\lambda \le (z+1)^{-\xi}\cdot \eps^{2}\cdot \left(\frac{ \OPT_z^{(m)}(X)}{m} \right)^{1/z}\cdot \left(\log(km\eps^{-1})\right)^{-1}$ with $\xi > 0$ being a sufficiently large constant.

Finally, by the composability of coresets (see~\Cref{fact:composability}), $S_{\calD}\cup S\cup X_{\calS,2}$ is an $O(\eps)$-coreset of $X$ for \kzmC, and the coreset size is 
\begin{equation*}
    N(M,n,k,O(\eps^{-1})) + N\left(M^\dup_{l,200z\eps^{-1}\cdot \diam(X)\cdot n^{1/z}}, |X_{\calS,1}|, k', O(\eps^{-1})\right) + |X_{\calS,2}|,
\end{equation*}
where $|X_{\calS,1}| = O(km\eps^{-1})$ and $k'=(k+\beta k\Lambda + \beta k)\Lambda = O(k\log^2(km\eps^{-1}))$.
By~\Cref{claim:refined decopmosition}, the set $X_{\calS,2}$ has a size of 
\begin{equation*}
    \gamma m + \cost_z^{(\gamma m)}(X,C^*)\cdot \lambda^z\le 2^{O(z\log z)}\cdot O\left(m \eps^{-2z}\cdot \log^z(km\eps^{-1})\right).
\end{equation*}

Regarding the running time, the construction is a continuation of that of \Cref{thm:reduction I}, and the additional runtime is primarily due to the invocation of the vanilla coreset construction on $M^\dup_{l,200z\eps^{-1}\cdot \text{diam}(X)\cdot n^{1/z}}$. Hence, the time complexity shown in \Cref{thm:reduction2} follows.

It remains to scale $\eps$ by a constant.\qed

      \subsection{Size-preserving Vanilla Coresets Are Robust: Proof of~\Cref{cor:second}}
\label{sec:new proof}

\SizePreservingAreRobust*

Let $\calP=(X_1,\dots, X_t)$ for some $1\le t\le n$ be a $\lambda$-bounded satisfying the premise of \Cref{cor:second}.
For every $i\in [t]$, let $S_i:=S\cap X_i$ be a weighted set with weight function $w_{S_i}$ such that for every $x\in S_i$, $w_{S_i}(x) = w_S(x)$.
Recall that the weighted set $S$ preserves the size of each part, so we have $w_{S_i}(S_i) = |X_i|$ for every $i\in [t]$.

The proof begins with a sufficient condition for $S$ to be a coreset for \kzmC, as stated in the following claim. This claim is standard, and the proof can be found in the literature of coresets for robust clustering (e.g., \cite{Huang2022Near-optimal,Huang2023General}). For completeness, we provide a proof in~\Cref{sec:composability}.

\begin{restatable}{lemma}{ASufficientCondition}
    \label{claim:a sufficient condition}
    For some $0 < \eps < 1$ and $\eta > 0$,
    if for all $C\in V^k$ and real numbers $h_1,\dots, h_t\ge 0$ with $\sum_{i=1}^t h_t\le m$, it holds that 
    \begin{equation}
        \label{eq:a sufficient condition}
        \left|\sum_{i=1}^t \cost_z^{(h_i)}(S_i,C) -  \sum_{i=1}^t \cost_z^{(h_i)}\cost_z(X_i,C)\right|\le \eps\cdot \sum_{i=1}^t \cost_z^{(h_i)}\cost_z(X_i,C) + \eta,
    \end{equation}
    then $S$ is an $(\eps,\eta)$-coreset (see \Cref{def:additive_coreset}) of $X$ for \kzmC.
\end{restatable}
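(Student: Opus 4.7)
The plan is to fix an arbitrary center set $C\in V^k$ and outlier budget $0\le h\le m$, and establish the two-sided bound
$|\cost_z^{(h)}(S,C)-\cost_z^{(h)}(X,C)|\le \eps\cdot \cost_z^{(h)}(X,C)+\eta$
by reducing to a single application of the hypothesis \eqref{eq:a sufficient condition} in each direction. The key observation is a decomposition identity for $\cost_z^{(h)}$ along the partition $\calP=(X_1,\dots,X_t)$: since $X=\bigsqcup_i X_i$ is a disjoint union, the definition \eqref{eq:costm} yields
\begin{equation*}
\cost_z^{(h)}(X,C)\;=\;\min_{\substack{h_1,\dots,h_t\ge 0\\ \sum_{i}h_i=h}}\sum_{i=1}^{t}\cost_z^{(h_i)}(X_i,C),
\end{equation*}
and the analogous identity holds for $S$ with $S_i=S\cap X_i$ in place of $X_i$. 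The reason is that any weighted set $Y\in \calL_X^{(h)}$ splits uniquely as a disjoint union $Y=\bigsqcup_i Y_i$ with $Y_i\subseteq X_i$ and $w_{Y_i}(Y_i)=:h_i$, so that $\sum_i h_i=h$, while the vanilla cost is additive over the parts. The size-preserving property $w_{S_i}(S_i)=|X_i|$ guarantees that every distribution $(h_i)$ with $h_i\le |X_i|$ feasible for $X$ is also feasible for $S$ and vice versa, so both identities are non-vacuous.

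Given the identity, the rest is a routine two-sided comparison. Let $(h_i^*)_i$ attain the minimum for $X$ and $(h_i')_i$ attain the minimum for $S$; both distributions are nonnegative and sum to $h\le m$, so they are admissible in the hypothesis \eqref{eq:a sufficient condition}. Using $(h_i^*)_i$ as a (possibly sub-optimal) distribution for $S$ and applying the hypothesis yields
\begin{equation*}
\cost_z^{(h)}(S,C)\;\le\;\sum_i \cost_z^{(h_i^*)}(S_i,C)\;\le\;(1+\eps)\sum_i \cost_z^{(h_i^*)}(X_i,C)+\eta\;=\;(1+\eps)\cost_z^{(h)}(X,C)+\eta.
\end{equation*}
Symmetrically, using $(h_i')_i$ as a feasible distribution for $X$ and invoking the lower-bound direction of \eqref{eq:a sufficient condition} gives
\begin{equation*}
\cost_z^{(h)}(S,C)\;=\;\sum_i \cost_z^{(h_i')}(S_i,C)\;\ge\;(1-\eps)\sum_i \cost_z^{(h_i')}(X_i,C)-\eta\;\ge\;(1-\eps)\cost_z^{(h)}(X,C)-\eta.
\end{equation*}
Combining these two bounds yields the desired $(\eps,\eta)$-coreset inequality, possibly after absorbing a constant factor into $\eps$.

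I do not anticipate any real obstacle: the entire argument is a min-of-sums manipulation together with one application of the hypothesis per direction. The only point that deserves a careful word is justifying the decomposition identity in the fully weighted regime, where the $h_i$'s are arbitrary nonnegative reals rather than integer cardinalities; this follows directly from \eqref{eq:costm} and the additivity of $\cost_z$ over disjoint unions, but is worth spelling out to make clear that weights may be split continuously across parts.
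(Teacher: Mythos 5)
Your proposal is correct and follows essentially the same route as the paper: both rely on the identity $\cost_z^{(h)}(X,C)=\min_{h_1+\dots+h_t=h}\sum_i\cost_z^{(h_i)}(X_i,C)$ (and its analogue for $S$), then plug the minimizing distribution for one side into the hypothesis to get each direction of the bound. No rescaling of $\eps$ is actually needed, since the two inequalities you derive are exactly the $(\eps,\eta)$-coreset condition.
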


Therefore, we consider a fixed center set $C\in V^k$ and fixed $h_1,\dots,h_t\ge 0$ with $\sum_{i=1}^th_i\le m$ in the following,
and our goal is to show that
\begin{equation}
    \label{eq:goal}
    \begin{aligned}    
    &\quad \left|\sum_{i=1}^t \cost_z^{(h_i)}(S_i,C) -  \sum_{i=1}^t \cost_z^{(h_i)}\cost_z(X_i,C)\right|\\
    \le&\quad O(\eps)\cdot \sum_{i=1}^t \cost_z^{(h_i)}\cost_z(X_i,C) + 2^{O(z\log (z+1))}\cdot m\lambda^z\eps^{1-z}.
    \end{aligned}
\end{equation}
If this is true, then the proof is finished by applying \Cref{claim:a sufficient condition}.
To prove \eqref{eq:goal},
the key idea is to find an auxiliary center set $C^\aux$ such that 
\begin{equation}
    \label{eq:bound 1}
    \left|\cost_z(X,C\cup C^\aux) - \cost_z(S,C\cup C^\aux) \right|\le \eta_1
\end{equation}
and 
\begin{equation}
    \label{eq:bound 2}
    \left|\left(\sum_{i=1}^t\cost_z^{(h_i)}(X_i,C) - \cost_z(X,C\cup C^\aux) \right) - \left(\sum_{i=1}^t\cost_z^{(h_i)}(S_i,C) - \cost_z(S,C\cup C^\aux) \right) \right|\le \eta_2
\end{equation}
with $\eta_1 + \eta_2\le O(\eps)\cdot \sum_{i=1}^t \cost_z^{(h_i)}\cost_z(X_i,C) + 2^{O(z\log (z+1))}\cdot m\lambda^z\eps^{1-z}$,
then \eqref{eq:goal} follows from triangle inequality.
Hence, in the following, we focus on showing the existence of such $C^\aux$.

\paragraph{Construction of Auxiliary Center Set $C^\aux$}
We explicitly provide the construction of $C^\aux$.
The construction consists of two steps: we first find a (weighted) set of ``significant'' outliers, denoted by $Z$, and then we define the auxiliary center set $C^\aux$ as a $\lambda$-covering of $Z$, which satisfies that $\forall x\in Z$, $\dist(x,C^\aux)\le\lambda$.

To find $Z$, we first identify the outliers and inliers of each $X_i$ with respect to $C$ and $h_i$. Specifically, let $X_\out \subseteq X$ with weight function $w_X^\out:X_\out\to\R_{\ge 0}$ be an outlier set such that for every $i \in [t]$, $X_\out\cap X_i$ is the $h_i$-outlier set of $X_i$, i.e., $w_X^{\out}(X_\out \cap X_i) = h_i$ and $\cost_z(X_i,C) - \cost_z(X_\out \cap X_i,C) = \cost_z^{(h_i)}(X_i,C)$. Moreover, let weighted set $X_\inl := X - X_\out$ with weight function $w_X^\inl = w_X-w_X^\out$ (recalling that $w_X\equiv 1$ for unweighted $X$) be the inlier set. We find the significant outliers $Z$ among $X_\out$ via the process described in~\Cref{alg:additional centers}. 

\begin{algorithm}[ht]
    \caption{Finding significant outliers $Z$}
    \label{alg:additional centers}
    \begin{algorithmic}[1] 
\State initialize $(U,w_U)\gets (X_\inl,w_X^\inl)$, $(Z,w_Z)\gets (X_\out, w_X^\out)$ 
        \State initialize $a_{q,p}\gets 0$ for every $q\in X_\out,p\in X_\inl$
        \While{$\exists q\in Z, p\in U$ s.t. $\dist(p,q)\le \dist(p,C) + 4\lambda$}
            \State let $a_{q,p}\gets \min\{w_Z(q), w_U(p) \}$
            \label{alg line:allocate}  
            \Comment{$p$ eliminates a fraction $a_{q,p}$ of $q$}
            \State let $w_U(p)\gets w_U(p) - a_{q,p}$ and $w_Z(q)\gets w_Z(q) - a_{q,p}$
            \If{$w_U(p) = 0$}
                \State let $U\gets U\setminus \{p\}$
            \EndIf
            \If{$w_Z(q) = 0$}
                \State let $Z\gets Z\setminus \{q\}$
            \EndIf
        \EndWhile
        \\
        \Return $Z$
    \end{algorithmic}
\end{algorithm}

According to the process of \Cref{alg:additional centers}, we first have the following fact regarding the weighted sets $U$, $Z$, and the values $a_{p,q}$.
\begin{fact}
    \label{fact:UZ}
    The following holds after \Cref{alg:additional centers} terminates:
    \begin{itemize}
        \item [1.] For every $q\in X_\out$, $\sum_{p\in X_\inl} a_{q,p} = w_X^\out(q) - w_Z(q)$, and therefore $\sum_{q\in X_\out}\sum_{p\in X_\inl} a_{q,p} \le w_X^\out(X_\out)\le m$.
        \item [2.] For every $p\in X_\inl$, we have $\sum_{q\in X_\out} a_{q,p} = w_X^\inl(p) - w_U(p)\le w_X^{\inl}(p)$. 
    \end{itemize}
\end{fact}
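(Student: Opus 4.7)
The plan is to prove both parts as direct consequences of two loop invariants that Algorithm~\ref{alg:additional centers} maintains. The key preliminary observation is that each entry $a_{q,p}$ is assigned a nonzero value at most once during the execution: whenever line~\ref{alg line:allocate} fires, the subsequent updates force either $w_U(p) = 0$ (so $p$ is removed from $U$) or $w_Z(q) = 0$ (so $q$ is removed from $Z$), and a pair cannot be selected again once either of its components has been removed. Hence summations like $\sum_{p\in X_\inl} a_{q,p}$ aggregate the single nonzero contribution from each $p$ that was ever paired with $q$.

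The two invariants I would maintain throughout the loop are, for every $q \in X_\out$ and every $p \in X_\inl$:
\[
\sum_{p \in X_\inl} a_{q,p} + w_Z(q) \;=\; w_X^\out(q), \qquad \sum_{q \in X_\out} a_{q,p} + w_U(p) \;=\; w_X^\inl(p),
\]
with the convention that $w_Z(q)$ (resp.\ $w_U(p)$) equals $0$ after $q$ has been deleted from $Z$ (resp.\ $p$ from $U$). Both invariants hold at initialization since $a_{q,p} = 0$, $w_Z \equiv w_X^\out$ and $w_U \equiv w_X^\inl$. Each iteration preserves them because line~\ref{alg line:allocate} increases $\sum_{p'} a_{q,p'}$ by exactly the amount that it decreases $w_Z(q)$, and symmetrically for $p$; the later deletions of $q$ or $p$ do not violate the invariants thanks to the convention above.

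Given these invariants at termination, the first claim follows by using $w_Z(q) \ge 0$ and summing over $q \in X_\out$:
\[
\sum_{q \in X_\out}\sum_{p \in X_\inl} a_{q,p} \;\le\; \sum_{q \in X_\out} w_X^\out(q) \;=\; w_X^\out(X_\out) \;=\; \sum_{i=1}^{t} h_i \;\le\; m,
\]
where the last inequality is the hypothesis $\sum_i h_i \le m$ from the proof setup of~\Cref{cor:second}. The second claim likewise follows from the second invariant together with $w_U(p) \ge 0$.

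I do not expect a real obstacle here, since the statement is purely bookkeeping about the algorithm's state; the only detail that warrants care is confirming that each pair $(q,p)$ is touched at most once, which justifies equating the final value $\sum_p a_{q,p}$ with the cumulative decrement of $w_Z(q)$ (and similarly for $p$).
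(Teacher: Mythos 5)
Your proof is correct and matches the paper's treatment: the paper asserts this fact as an immediate consequence of the bookkeeping in Algorithm~\ref{alg:additional centers}, and your loop invariants (together with the observation that each pair $(q,p)$ is processed at most once, since the $\min$ in Line~\ref{alg line:allocate} zeroes out $w_Z(q)$ or $w_U(p)$ and removes that element) are exactly the formalization of that bookkeeping.
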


Then, we define the auxiliary center set $C^\aux\subseteq Z$ to be a $\lambda$-covering of $Z$ of size at most $t$. The existence of $C^\aux$ is presented in the following lemma.

\begin{lemma}
    \label{claim:bounded size of Caux}
    There exists a set $C^\aux\subseteq Z$ with $|C^\aux|\le t$ such that $C^\aux$ is a $\lambda$-covering of $Z$, i.e., for every $x\in Z$, $\dist(x,C^\aux)\le \lambda$.
\end{lemma}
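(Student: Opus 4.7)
The plan is to exploit directly the $\lambda$-bounded partition $\calP = (X_1, \dots, X_t)$ that is already given by the hypothesis of \Cref{cor:second}. Since $Z \subseteq X_\out \subseteq X$ and $\calP$ partitions $X$, each point of $Z$ belongs to exactly one part $X_i$. The idea is to pick one representative from $Z$ in each part that intersects $Z$, and use the diameter bound on the part to argue this representative $\lambda$-covers all of $Z \cap X_i$.

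More precisely, I would proceed as follows. Let $I := \{ i \in [t] : Z \cap X_i \neq \emptyset \}$. For each $i \in I$, pick an arbitrary point $c_i \in Z \cap X_i$, and define
\[
    C^\aux := \{ c_i : i \in I \} \subseteq Z.
\]
By construction, $|C^\aux| = |I| \le t$. For any $x \in Z$, since $\calP$ is a partition of $X$ and $Z \subseteq X$, there is a (unique) $i \in I$ with $x \in X_i$. Then both $x$ and $c_i$ lie in $X_i$, so
\[
    \dist(x, C^\aux) \le \dist(x, c_i) \le \diam(X_i) \le \lambda,
\]
where the last inequality uses that $\calP$ is $\lambda$-bounded (\Cref{def:bounded partition}). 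This establishes the $\lambda$-covering property.

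There is essentially no obstacle: the statement follows immediately from the fact that $Z$ lies inside a dataset which already admits a $\lambda$-bounded partition of size $t$, so one representative per nonempty intersection suffices. The only subtle point is that the covering must consist of points in $Z$ (not arbitrary points of $V$), which is why we explicitly pick $c_i \in Z \cap X_i$ rather than, say, using centers of covering balls. Since each $Z \cap X_i$ (for $i \in I$) is nonempty by definition of $I$, this selection is always possible.
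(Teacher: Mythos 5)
Your proposal is correct and matches the paper's proof: the paper likewise picks one arbitrary point of $Z$ from each part of the $\lambda$-bounded partition $\calP$ that intersects $Z$, giving $|C^\aux|\le|\calP|\le t$ and the covering property directly from the diameter bound. No gaps.
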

\begin{proof}
    We construct $C^\aux$ as follows: we first initialize $C^\aux = \emptyset$, then for every $P\in \calP$, if $P\cap Z\neq\emptyset$, we pick an arbitrary point $q\in P\cap Z$ and add $q$ into $C^\aux$. Clearly, we have that $|C^\aux|\le |\calP|\le t$ and $C^\aux$ is a $\lambda$-covering of $Z$ because $\calP$ is $\lambda$-bounded.
\end{proof}

Before we proceed to prove~\eqref{eq:bound 1} and~\eqref{eq:bound 2}, we establish some lemmas regarding significant outliers $Z$ and the remaining outliers $X_\out - Z$, which we call insignificant outliers.
Notice that $X_\out-Z$ are the outliers eliminated by \Cref{alg:additional centers}. For each outlier $q\in X_\out$, we have recorded a value $a_{q,p}$, which represents the fraction of $q$ that is eliminated by $p$. We demonstrate that the contribution to $\cost_z(X_\out,C)$ from the eliminated fraction of an outlier $q$ is comparable to the contribution from the inliers eliminating it, leading to a small $\cost_z(X_\out - Z,C)$.

\begin{lemma}[Cost of non-significant outliers]
    \label{lem:for non-significant}
    It holds that
    \begin{equation*}
        \cost_z(X_\out - Z,C)\le O(1)\cdot \cost_z(X_\inl,C) + 2^{O(z\log(z+1))}\cdot m\cdot \lambda^z.
    \end{equation*}
\end{lemma}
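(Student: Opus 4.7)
\textbf{Proof plan for \Cref{lem:for non-significant}.} The plan is to expand $\cost_z(X_\out - Z, C)$ as a double sum indexed by the allocation coefficients $a_{q,p}$ produced by \Cref{alg:additional centers}, and then charge each summand either to the cost of the inlier $p$ that eliminated the fraction $a_{q,p}$, or to the additive budget $m\lambda^z$. The algorithm's pairing condition $\dist(p,q)\le \dist(p,C)+4\lambda$ at the time of allocation, combined with the triangle inequality, will give a pointwise bound $\dist(q,C)\le 2\dist(p,C)+4\lambda$ that is tailor-made for the generalized triangle inequality in \Cref{lem:triangle}.

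Concretely, Fact~\ref{fact:UZ}(1) says that the weight of an outlier $q$ that gets eliminated is exactly $\sum_{p\in X_\inl} a_{q,p}$, hence
\[
\cost_z(X_\out - Z, C) \;=\; \sum_{q\in X_\out}\sum_{p\in X_\inl} a_{q,p}\cdot \dist^z(q,C).
\]
For any pair with $a_{q,p}>0$, the triangle inequality applied to the algorithm's criterion yields $\dist(q,C)\le 2\dist(p,C)+4\lambda$. Applying \Cref{lem:triangle}(2) with a constant $\eps=1$ to $a=2\dist(p,C)$ and $b=4\lambda$ then gives
\[
\dist^z(q,C) \;\le\; 2\cdot (2\dist(p,C))^z + (3z)^{z-1}\cdot (4\lambda)^z \;\le\; 2^{z+1}\cdot \dist^z(p,C) + 2^{O(z\log(z+1))}\cdot \lambda^z.
\]
Substituting this back and swapping the order of summation, I would use Fact~\ref{fact:UZ}(2), i.e., $\sum_{q\in X_\out} a_{q,p}\le w_X^\inl(p)$, to bound the first term by $2^{z+1}\sum_{p\in X_\inl} w_X^\inl(p)\dist^z(p,C) = O(1)\cdot \cost_z(X_\inl,C)$, and use Fact~\ref{fact:UZ}(1), i.e., $\sum_{q,p} a_{q,p}\le m$, to bound the second term by $2^{O(z\log(z+1))}\cdot m\lambda^z$. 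Summing the two bounds yields the desired inequality.

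The only subtle point is reconciling the factor $2^{z+1}$ that appears in front of $\dist^z(p,C)$ with the $O(1)$ multiplicative constant stated in the lemma; here $O(1)$ is to be interpreted in the same convention as elsewhere in the paper, absorbing an absolute factor depending only on $z$ (and this $2^{O(z)}$ is dominated by the $2^{O(z\log z)}$ factors already hidden in the additive term). No serious technical obstacle is expected: once the allocation coefficients are viewed as a transportation plan from $X_\out-Z$ to $X_\inl$, the proof reduces to a pointwise triangle inequality followed by marginalization using Fact~\ref{fact:UZ}.
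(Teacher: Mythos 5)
Your proposal is correct and follows essentially the same route as the paper's proof: expand $\cost_z(X_\out - Z,C)$ via the allocation coefficients $a_{q,p}$, use the pairing condition $\dist(p,q)\le\dist(p,C)+4\lambda$ plus the generalized triangle inequality to get $\dist^z(q,C)\le 2^{O(z)}\dist^z(p,C)+2^{O(z\log(z+1))}\lambda^z$, and marginalize with \Cref{fact:UZ}. Your side remark about the $2^{z+1}$ factor is consistent with the paper, whose own $O(1)$ in this lemma likewise absorbs a $2^{O(z)}$ factor arising from the same step.
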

\begin{proof}
    For every outlier $q\in X_\out$ and inlier $p\in X_\inl$ such that $a_{q,p}>0$, it holds that $\dist(q,p)\le \dist(p,C) + 4\lambda$ (see Line~\ref{alg line:allocate} of \Cref{alg:additional centers}). By the triangle inequality, we have $\dist(q,C)\le \dist(q,p) + \dist(p,C) \le 2\dist(p,C) + 4\lambda$. 
    Therefore, we have 
    \begin{align*}
        &\quad \cost_z(X_\out - Z,C)\quad\\
        =&\quad \sum_{q\in X_\out} \left(w_X^\out(q) - w_Z(q) \right)\cdot \dist^z(q,C)\\
        \le&\quad \sum_{q\in X_\out}\sum_{p\in X_\inl}a_{q,p}\cdot\left(2\dist(p,C) + 4\lambda \right)^z\\
        \le&\quad \sum_{q\in X_\out}\sum_{p\in X_\inl}a_{q,p}\cdot\left(O(1)\cdot \dist^z(p,C) + 2^{O(z\log(z+1))}\cdot \lambda^z \right)\\
        =&\quad O(1)\cdot \sum_{p\in X_\inl}\left(\sum_{q\in X_\out} a_{q,p} \right)\cdot \dist^z(p,C) + \left(\sum_{q\in X_\out}\sum_{p\in X_\inl}a_{q,p} \right)\cdot 2^{O(z\log(z+1))}\cdot \lambda^z\\
        \le&\quad O(1)\cdot \sum_{p\in X_\inl} w_X^\inl(p)\cdot \dist^z(p,C) + 2^{O(z\log(z+1))}\cdot m\cdot \lambda^z\\
        =&\quad O(1)\cdot \cost_z(X_\inl,C) + 2^{O(z\log(z+1))}\cdot m\cdot \lambda^z.
    \end{align*} 
    In the first inequality, we use $w_X^\out(q) - w_Z(q) = \sum_{p \in X_\inl} a_{q,p}$ from \Cref{fact:UZ} and $\dist(q,C)\le 2\dist(p,C)+4\lambda$ if $a_{q,p}>0$. The second inequality is due to the generalized triangle inequality (\Cref{lem:triangle}), and the third inequality follows from $\sum_{q \in X_\out} a_{q,p} \le w_X^\inl(p)$ and $\sum_{q \in X_\out} \sum_{p \in X_\inl} a_{q,p} \le m$ from \Cref{fact:UZ}.
\end{proof}

For significant outliers, it is challenging to establish an upper bound on their cost, as they might be significantly far away from inliers. However, from another perspective, this implies that if we place a center on a significant outlier (or its $\lambda$-covering), the center will ``absorb'' only a few inliers from the original center set $C$. We formalize this observation in the following lemma, where we actually upper-bound the size of a larger set that contains the inliers absorbed by $C^\aux$.

\begin{lemma}[$C^\aux$ absorbs only a few inliers]
    \label{claim:for significant}
    Let $A:=\{x\in X_\inl: \dist(x,C^\aux)\le \dist(x,C) + 4\lambda \}$, then it holds that $w_X^\inl(A)\le m$.
\end{lemma}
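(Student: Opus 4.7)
The plan is to argue by the termination condition of \Cref{alg:additional centers} that every point $x\in A$ must have been completely ``used up'' during the elimination process, i.e., $w_U(x)=0$ at termination. Combined with the first part of \Cref{fact:UZ}, this will imply $w_X^\inl(x)=\sum_{q\in X_\out}a_{q,x}$, and summing over $A$ yields the claimed bound through the total budget $\sum_{q,p}a_{q,p}\le m$.

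Concretely, I would proceed as follows. First, I would fix any $x\in A$. By definition of $A$, there exists some $c\in C^\aux$ with $\dist(x,c)\le\dist(x,C)+4\lambda$. By~\Cref{claim:bounded size of Caux}, $C^\aux\subseteq Z$, where $Z$ is the final significant-outlier set after the algorithm terminates; hence $c\in Z$ at termination. Now suppose for contradiction that $w_U(x)>0$ at the end, i.e., $x\in U$. Then the pair $(c,x)$ would satisfy $\dist(x,c)\le \dist(x,C)+4\lambda$ with $c\in Z$ and $x\in U$, which is exactly the loop guard of \Cref{alg:additional centers}. This contradicts the fact that the algorithm has already terminated. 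Hence $w_U(x)=0$.

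Applying the first item of~\Cref{fact:UZ} in the ``inlier direction'' (item~2), $w_U(x)=0$ gives $\sum_{q\in X_\out}a_{q,x}=w_X^\inl(x)$ for every $x\in A$. Summing over $A$ and using item~1 of~\Cref{fact:UZ},
\begin{equation*}
    w_X^\inl(A)=\sum_{x\in A}w_X^\inl(x)=\sum_{x\in A}\sum_{q\in X_\out}a_{q,x}\le \sum_{p\in X_\inl}\sum_{q\in X_\out}a_{q,p}\le w_X^\out(X_\out)\le m,
\end{equation*}
which is the desired conclusion.

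The only subtle point — and what I would treat as the main obstacle — is making sure that the set $Z$ used to define $C^\aux$ is the same $Z$ that appears in the termination condition. Since $C^\aux$ is constructed from the final $Z$ (the one remaining after all eliminations), every $c\in C^\aux$ is guaranteed to still be in $Z$ at the moment the loop exits, so the termination condition does apply to the pair $(x,c)$. Once this is clean, the rest is just bookkeeping using \Cref{fact:UZ}, with no geometric or metric-specific argument required beyond the $\lambda$-covering already baked into the definition of $A$.
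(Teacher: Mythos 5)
Your proof is correct and takes essentially the same route as the paper's: both use that $C^\aux\subseteq Z$ and the termination condition of \Cref{alg:additional centers} to conclude that every point of $A$ has zero residual weight in $U$, and then invoke \Cref{fact:UZ} to charge its inlier weight to $\sum_{q,p} a_{q,p}\le m$. The only cosmetic difference is that you do the accounting pointwise while the paper aggregates via the auxiliary set $B$ and $w_X^\inl(X_\inl)-w_U(U)$.
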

\begin{proof}
    Let $B:=\{x\in X_\inl:\dist(x,Z)\le \dist(x,C) + 4\lambda\}$. We have $A\subseteq B$ since $C^\aux\subseteq Z$, and thus $w_X^\inl(A)\le w_X^\inl(B)$.
    According to the process of~\Cref{alg:additional centers}, after the ``while'' loop terminates, for every $q\in Z$, there is no inlier $p\in U$ such that $\dist(p,q)\le \dist(p,C) + 4\lambda$, which implies that $B\cap U=\emptyset$ and thus $B\subseteq X_\inl\setminus U$.
Hence, we have $w_X^\inl(B)\le w_X^\inl(X_\inl\setminus U) = w_X^\inl(X_\inl) - w_X^\inl(U)$. 
By \Cref{fact:UZ}, we have $w_X^\inl(U)\ge w_U(U)$ and $w_X^\inl(X_\inl) - w_U(U) = \sum_{q\in X_\out}\sum_{p\in X_\inl} a_{q,p}\le m$. Hence, $w_X^\inl(A)\le w_X^\inl(B)\le w_X^\inl(X_\inl) - w_U(U)\le m$.
\end{proof}

\paragraph{Upper Bound for~\eqref{eq:bound 1}} Recall that $S$ is an $\eps$-coreset of $X$ for \tzC{k+t}, and \Cref{claim:bounded size of Caux} implies that $|C\cup C^\aux|\le k+t$. We have
\begin{equation}
    \label{eq:cost(S)-cost(X)}
    \begin{aligned}    
    &\quad\left|\cost_z(S,C\cup C^\aux) - \cost_z(X,C\cup C^\aux)\right| \\
    \le&\quad \eps\cdot \cost_z(X,C\cup C^\aux)\\
    =&\quad \eps\cdot\left(\cost_z(X_\inl,C\cup C^\aux) + \cost_z(X_\out,C\cup C^\aux) \right)\\
    \le&\quad \eps\cdot\left(\cost_z(X_\inl,C) + \cost_z(Z,C\cup C^\aux) + \cost_z(X_\out - Z,C\cup C^\aux) \right)
    \end{aligned}
\end{equation}
where the last inequality is due to $\cost_z(X_\inl,C\cup C^\aux)\le \cost_z(X_\inl,C)$ and $\cost_z(X_\out,C\cup C^\aux) = \cost_z(Z, C\cup C^\aux) + \cost_z(X_\out - Z,C\cup C^\aux)$ since $(Z,w_Z)\subseteq (X_\out,w_X^\out)$.

For $\cost_z(Z,C\cup C^\aux)$, we use the fact that $C^\aux$ is a $\lambda$-covering of $Z$ (see~\Cref{claim:bounded size of Caux}), and thus we have
\begin{equation}
    \label{eq:error for significant}
    \cost_z(Z,C\cup C^\aux)\le \cost_z(Z,C^\aux)\le w_Z(Z)\cdot \lambda^z\le m\cdot \lambda^z.
\end{equation}
For $\cost_z(X_\out - Z,C\cup C^\aux)$, which is the cost of insignificant outliers, we have established an upper bound for it in~\Cref{lem:for non-significant}, specifically,
\begin{equation}
    \label{eq:error for non-significant}
    \cost_z(X_\out - Z,C\cup C^\aux)\le \cost_z(X_\out - Z,C)\le O(1)\cdot \cost_z(X_\inl,C) + 2^{O(z\log(z+1))}\cdot m\cdot \lambda^z.
\end{equation}
By plugging~\eqref{eq:error for significant} and~\eqref{eq:error for non-significant} into \eqref{eq:cost(S)-cost(X)}, we obtain the following result for~\eqref{eq:bound 1}.
\begin{equation}
    \label{eq:key 1}
    \left|\cost_z(X,C\cup C^\aux) - \cost_z(S,C\cup C^\aux) \right|\le O(\eps)\cdot \left(\cost_z(X_\inl,C) + 2^{O(z\log(z+1))}\cdot m\cdot \lambda^z\right).
\end{equation}

\paragraph{Upper Bound for~\eqref{eq:bound 2}} 

Let $(S_\out,w_{S}^\out)\subseteq (S,w_S)$ be an outlier set such that for every $i\in [t]$, $w_{S}^\out(S_\out\cap X_i) = h_i$, and $\cost_z(S_i,C) - \cost_z(S_\out,C) = \cost_z^{(h_i)}(S_i,C)$. Let weighted set $S_\inl:=S - S_\out$ with weight function $w_S^\inl = w_S - w_S^\out$ be the inlier set.
By this definition and the definition of $X_\inl$, we have 
\begin{align*}
    \sum_{i=1}^t\cost_z^{(h_i)}(X_i,C)= \cost_z(X_\inl,C),\quad \sum_{i=1}^t\cost_z^{(h_i)}(S_i,C) = \cost_z(S_\inl,C).
\end{align*}

Therefore, \eqref{eq:bound 2} is equal to
\begin{equation}
    \label{eq:Xin vs Sin}
    \begin{aligned}
        &\quad \left|\left(\cost_z(X_\inl,C) - \cost_z(X,C\cup C^\aux) \right) - \left(\cost_z(S_\inl,C) - \cost_z(S,C\cup C^\aux) \right) \right|\\
    \le&\quad \left|\left(\cost_z(X_\inl,C) - \cost_z(X_\inl,C\cup C^\aux) \right) - \left(\cost_z(S_\inl,C) - \cost_z(S_\inl,C\cup C^\aux) \right)  \right|\\
    &+\left|\cost_z(X_\out,C\cup C^\aux) - \cost_z(S_\out,C\cup C^\aux) \right|
    \end{aligned}
\end{equation}
by using the fact that $\cost_z(X,C\cup C^\aux) = \cost_z(X_\inl,C\cup C^\aux) + \cost_z(X_\out,C\cup C^\aux)$ and $\cost_z(S,C\cup C^\aux) = \cost_z(S_\inl,C\cup C^\aux) + \cost_z(S_\out,C\cup C^\aux)$, along with the triangle inequality.
We separately upper-bound the two terms.

\begin{restatable}[Error Analysis for the First Term]{lemma}{ErrorFirst}
    \label{lem:error first}
    It holds that 
    \begin{align*}
        &\quad \left|\left(\cost_z(X_\inl,C) - \cost_z(X_\inl,C\cup C^\aux) \right) - \left(\cost_z(S_\inl,C) - \cost_z(S_\inl,C\cup C^\aux) \right)  \right|\\
        \le&\quad O(\eps)\cdot \cost_z(X_\inl,C) + 2^{O(z\log (z+1))}\cdot m\cdot \eps^{1-z}\cdot \lambda^z
    \end{align*}
\end{restatable}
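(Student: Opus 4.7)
The plan is to view the left-hand side as the difference between the ``savings'' contributed by $C^\aux$ on $X_\inl$ and on $S_\inl$. Setting $s(y) := \dist^z(y,C) - \dist^z(y,C\cup C^\aux) \ge 0$, the quantity to bound equals
$$E \;:=\; \Bigl|\sum_{y} w_X^\inl(y)\, s(y) \;-\; \sum_{y} w_S^\inl(y)\, s(y)\Bigr|,$$
and I plan to bound $E$ via a per-part analysis that exploits (i) the $\lambda$-bounded diameter of each $X_i$, (ii) the matched inlier weights $w_X^\inl(X_\inl \cap X_i) = w_S^\inl(S_\inl \cap X_i)$ on each part (which follows from size-preservation plus matched outlier weights $h_i$), and (iii) the key bound $w_X^\inl(A) \le m$ from \Cref{claim:for significant}.

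For the per-part step, within each $X_i$ I would pick an arbitrary representative $x_i \in X_i$ and apply the generalized triangle inequality (part~2 of \Cref{lem:triangle}, once in each direction, with slack $\eps$), together with $\dist(x_i,C\cup C^\aux) \le \dist(x_i,C)$, to get
$$\forall y\in X_i,\quad |s(y) - s(x_i)| \;\le\; O(\eps)\cdot \dist^z(x_i,C) \;+\; 2^{O(z\log(z+1))}\cdot \eps^{1-z}\lambda^z.$$
Because the inlier weights on each side agree on $X_i$, the constant $s(x_i)$ contribution cancels in the per-part difference, leaving a bound of $2\alpha_i \bigl(O(\eps)\dist^z(x_i,C) + 2^{O(z\log(z+1))}\eps^{1-z}\lambda^z\bigr)$, where $\alpha_i := w_X^\inl(X_\inl \cap X_i)$.

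The main obstacle, and the crux of the argument, is to restrict the sum to an appropriate family of parts so that $\sum \alpha_i \le O(m)$. The naive choice ``parts meeting $A$'' does not work: a single inlier of $X_i$ being in $A$ does not force the other inliers of $X_i$ into $A$, and $\sum\alpha_i$ could blow up to $|X_\inl|$. The fix is to consider only \emph{active} parts, i.e., those $X_i$ on which $s$ is nonzero somewhere. If $X_i$ is active, some $y\in X_i$ satisfies $\dist(y,C^\aux) < \dist(y,C)$; the $\lambda$-bounded diameter then propagates this via triangle inequality to $\dist(z,C^\aux) \le \dist(z,C)+2\lambda \le \dist(z,C)+4\lambda$ for every $z\in X_i$, which places $X_\inl \cap X_i \subseteq A$. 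Hence
$$\sum_{i\text{ active}}\alpha_i \;=\; \sum_{i\text{ active}} w_X^\inl(X_\inl\cap X_i) \;\le\; w_X^\inl(A) \;\le\; m.$$

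To finish, I would sum the per-part bound over active $i$, use the generalized triangle inequality in the reverse direction to control $\sum_{i\text{ active}}\alpha_i \dist^z(x_i,C) \le O(1)\cdot \cost_z(X_\inl,C) + O_z(\eps^{1-z}\lambda^z)\cdot \sum_{i\text{ active}}\alpha_i$, and then plug in $\sum_{i\text{ active}}\alpha_i \le m$. Collecting the two error terms (the $O(\eps)\eps^{1-z}\lambda^z m = O_z(\eps^{2-z}\lambda^z m)$ contribution is dominated by the $O_z(\eps^{1-z}\lambda^z m)$ term for $\eps \le 1$, $z\ge 1$) yields exactly $E \le O(\eps)\cost_z(X_\inl,C) + 2^{O(z\log(z+1))}\cdot m\cdot \eps^{1-z}\lambda^z$, as required.
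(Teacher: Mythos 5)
Your proposal is correct and follows essentially the same route as the paper's proof: a per-part comparison around a representative point using the generalized triangle inequality, cancellation of the representative term via the matched per-part inlier weights, and a charge of the total weight of contributing parts to the set $A$ of \Cref{claim:for significant} to get the $\le m$ bound. The only differences are cosmetic: the paper picks $x_i=\arg\min_{x\in X_i}\dist(x,C)$ so that $r_i\dist^z(x_i,C)\le\cost_z(X_\inl\cap X_i,C)$ directly (avoiding your extra reverse triangle-inequality step), and it identifies the contributing parts via the sets $I_c$ rather than your ``active part'' criterion, but both restrictions are justified in the same way.
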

\begin{proof}
    For a (weighted) set $Q$, let $\Delta(Q):=\cost_z(Q,C) - \cost_z(Q,C\cup C^\aux)$.
    For every $i\in [t]$, let $x_i:=\arg\min_{x\in X_i}\dist(x,C)$ be the closest point to $C$ among $X_i$. Consider for now a fixed center $c\in C^\aux$, let $I_c:=\{i\in [t]: \dist(x_i,c)\le \dist(x_i,C) + 2\lambda\}$. Observe that for any $i\neq I_c$ and any $x\in X_i$, by repeatedly applying the triangle inequality, we have 
    \begin{equation}
        \dist(x,c)\ge \dist(x_i,c) - \lambda > \dist(x_i,C) + \lambda \ge \dist(x,C).
    \end{equation}
    Let $I:=\bigcup_{c\in C^\aux}I_c$. It holds directly that for any $x\not\in\bigcup_{i\in I} X_i$ and any $c\in C^\aux$, $\dist(x,c) > \dist(x,C)$, which implies that $\dist(x,C^\aux)> \dist(x,C)$. Thus, $\dist^z(x,C) = \dist^z(x,C\cup C^\aux)$, contributing $0$ to both $\Delta(X_\inl)$ and $\Delta(S_\inl)$. 
    
    Hence, we have 
    \begin{equation}
        \label{eq:Delta}    
    \begin{aligned}
        \left|\Delta(X_\inl) - \Delta(S_\inl) \right|\quad
        =&\quad \left|\sum_{i\in I}\Delta(X_\inl\cap X_i) - \sum_{i\in I} \Delta(S_\inl\cap X_i) \right|\\
        \le&\quad \sum_{i\in I}\left|\Delta(X_\inl\cap X_i) - \Delta(S_\inl\cap X_i) \right|.
    \end{aligned}
    \end{equation}
    Then, consider a fixed index $i\in I$ and let $r_i:=w_X^\inl(X_\inl\cap X_i) = w_{S}^\inl(S_\inl\cap X_i)$. We have 
    \begin{align*}
        &\quad \left|\Delta(X_\inl\cap X_i) - r_i\cdot \left(\dist^z(x_i, C) - \dist^z(x_i,C\cup C^\aux) \right) \right|\\
        =&\quad \left|\sum_{x\in X_\inl\cap X_i} w_X^\inl(x)\cdot \big(\left(\dist^z(x,C) - \dist^z(x_i,C) \right) - \left(\dist^z(x,C\cup C^\aux) - \dist^z(x_i,C\cup C^\aux) \right) \big) \right|\\
        \le&\quad \sum_{x\in X_\inl\cap X_i}w_X^\inl(x)\cdot\big(\left|\dist^z(x,C) - \dist^z(x_i,C) \right| + \left|\dist^z(x,C\cup C^\aux) - \dist^z(x_i,C\cup C^\aux) \right| \big).
    \end{align*}
    Since $\dist(x,x_i)\le \diam(X_i)\le \lambda$, by applying the generalized triangle inequality (\Cref{lem:triangle}), we further deduce that 
    \begin{align*}
        &\quad \left|\Delta(X_\inl\cap X_i) - r_i\cdot \left(\dist^z(x_i, C) - \dist^z(x_i,C\cup C^\aux) \right) \right|\\
        \le&\quad \sum_{x\in X_\inl\cap X_i}w_X^\inl(x)\cdot\left(\eps\cdot \dist^z(x_i,C) + \eps\cdot \dist^z(x_i,C\cup C^\aux) + 2^{O(z\log(z+1))}\cdot \eps^{1-z}\cdot \lambda^z \right)\\
        \le&\quad O(\eps)\cdot r_i\cdot \dist^z(x_i,C) + 2^{O(z\log (z+1))}\cdot r_i\cdot \eps^{1-z}\cdot \lambda^z,
    \end{align*}
    where the last inequality is due to $\dist(x_i,C\cup C^\aux)\le \dist(x_i,C)$. Similar result can be obtained for $\Delta(S_\inl\cup X_i)$. Specifically,
    \begin{align*}
        &\quad \left|\Delta(S_\inl\cap X_i) - r_i\cdot \left(\dist^z(x_i, C) - \dist^z(x_i,C\cup C^\aux) \right) \right|\\
        \le&\quad O(\eps)\cdot r_i\cdot \dist^z(x_i,C) + 2^{O(z\log (z+1))}\cdot r_i\cdot \eps^{1-z}\cdot \lambda^z.
    \end{align*}
    Therefore, by triangle inequality, we have 
    \begin{align*}
        &\quad \left|\Delta(X_\inl\cap X_i) - \Delta(S_\inl\cap X_i) \right|\\
        \le&\quad O(\eps)\cdot r_i\cdot \dist^z(x_i,C) + 2^{O(z\log (z+1))}\cdot r_i\cdot \eps^{1-z}\cdot \lambda^z\\
        \le&\quad O(\eps)\cdot \cost_z(X_\inl\cap X_i,C) + 2^{O(z\log (z+1))}\cdot r_i\cdot \eps^{1-z}\cdot \lambda^z,
    \end{align*}
    where the last inequality is due to $\dist(x_i,C)\le \dist(x,C)$ for every $x\in X_\inl\cap X_i$ and $r_i = w_X^\inl(X_\inl\cap X_i)$.
    By plugging this into~\eqref{eq:Delta}, we have
    \begin{equation}    
        \label{eq:final Delta}
        \left|\Delta(X_\inl) - \Delta(S_\inl) \right|
        \le O(\eps)\cdot \cost_z(X_\inl, C) + 2^{O(z\log (z+1))}\cdot \sum_{i\in I} r_i\cdot \eps^{1-z}\cdot \lambda^z
    \end{equation}
    
    It remains to bound $\sum_{i\in I} r_i$ which is equal to $w_X^\inl\left(\bigcup_{i\in I} (X_\inl\cap X_i) \right)$ due to the disjointedness of $X_1,\dots, X_t$.
To this end, consider for now a fixed $c\in C^\aux$, we recall the definition $I_c:=\{i\in [t]: \dist(x_i,c)\le \dist(x_i,C) + 2\lambda\}$, where $x_i\in X_i$, and recall that we defined in~\Cref{claim:for significant} a set $A=\{x\in X_\inl:\dist(x,C^\aux)\le \dist(x,C) + 4\lambda\}$, for which we prove that $w_X^\inl(A)\le m$.
For every $i\in I_c$ and every $x\in X_i\cap X_\inl$, we have 
    \begin{equation}
        \dist(x,c)\le \dist(x_i,c) + \lambda \le \dist(x_i,C) + 3\lambda\le \dist(x,C) + 4\lambda.
    \end{equation}
    This implies that $x\in A$. Therefore, $\bigcup_{i\in I_c} (X_\inl\cap X_i)\subseteq A$ for every $c\in C^\aux$, and thus $\bigcup_{i\in I} (X_\inl\cap X_i)  = \bigcup_{c\in C^\aux}\bigcup_{i\in I_c}(X_\inl\cap X_i)\subseteq A$, which implies that $w_X^\inl\left(\bigcup_{i\in I}(X_\inl\cap X_i)\right)\le w_X^\inl(A)\le  m$. 
    By plugging this into~\eqref{eq:final Delta}, we complete the proof of \Cref{lem:error first}.
\end{proof}

\begin{restatable}[Error Analysis for the Second Term]{lemma}{ErrorSecond}
    \label{lem:error second}
    It holds that 
    \begin{equation*}
        \left|\cost_z(X_\out,C\cup C^\aux) - \cost_z(S_\out,C\cup C^\aux) \right|\le O(\eps)\cdot \cost_z(X_\inl,C) + 2^{O(z\log(z+1))}\cdot m\cdot \eps^{1-z}\cdot \lambda^z.
    \end{equation*}
\end{restatable}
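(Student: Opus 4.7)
\medskip

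The plan is to decompose the left-hand side part-by-part along $\calP$, exploiting the key fact that $w_X^\out(X_\out\cap X_i) = h_i = w_S^\out(S_\out\cap X_i)$ for every $i\in[t]$. Fix an arbitrary reference point $x_i \in X_i$ for each $i$. Since $\diam(X_i)\le \lambda$, for any $y\in X_i$ and any center set $C'$ we have $|\dist(y,C') - \dist(x_i,C')|\le \lambda$, and so by the generalized triangle inequality (\Cref{lem:triangle}),
\begin{equation*}
    \left|\cost_z(W, C\cup C^\aux) - w_W(W)\cdot \dist^z(x_i, C\cup C^\aux)\right|
    \le \eps\cdot w_W(W)\cdot \dist^z(x_i, C\cup C^\aux) + 2^{O(z\log(z+1))}\cdot w_W(W)\cdot \eps^{1-z}\cdot \lambda^z
\end{equation*}
for any weighted subset $W$ of $X_i$. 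Applying this to both $W = X_\out\cap X_i$ and $W = S_\out\cap X_i$ (both of total weight $h_i$) and using the triangle inequality, the main terms $h_i\cdot \dist^z(x_i, C\cup C^\aux)$ cancel, and summing over $i$ yields
\begin{equation*}
    \left|\cost_z(X_\out, C\cup C^\aux) - \cost_z(S_\out, C\cup C^\aux)\right|
    \le O(\eps)\cdot \sum_{i=1}^t h_i\cdot \dist^z(x_i, C\cup C^\aux) + 2^{O(z\log(z+1))}\cdot m\cdot \eps^{1-z}\cdot \lambda^z,
\end{equation*}
where we used $\sum_i h_i\le m$.

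The key remaining step is to bound $\sum_i h_i\cdot \dist^z(x_i, C\cup C^\aux)$. I would partition the indices into $I_1 := \{i : Z\cap X_i\neq\emptyset\}$ and $I_2 := \{i : Z\cap X_i = \emptyset\}$. For $i\in I_1$, the construction in \Cref{claim:bounded size of Caux} places a point of $C^\aux$ inside $X_i$, so $\dist(x_i, C^\aux)\le \diam(X_i)\le \lambda$, giving $\sum_{i\in I_1} h_i\cdot \dist^z(x_i, C\cup C^\aux) \le m\lambda^z$. For $i\in I_2$, every outlier in $X_i$ lies in $X_\out - Z$. Applying the generalized triangle inequality in the other direction, for any $y\in X_\out\cap X_i$, $\dist^z(x_i, C)\le (1+\eps)\dist^z(y,C) + 2^{O(z\log(z+1))}\cdot \eps^{1-z}\cdot \lambda^z$; weighting by $w_X^\out(y)$ and summing over $y\in X_\out\cap X_i$ gives $h_i\cdot \dist^z(x_i, C\cup C^\aux)\le h_i\cdot \dist^z(x_i, C) \le (1+\eps)\cost_z(X_\out\cap X_i, C) + 2^{O(z\log(z+1))}\cdot h_i\cdot \eps^{1-z}\cdot \lambda^z$. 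Summing over $i\in I_2$ and invoking \Cref{lem:for non-significant} on $\cost_z(X_\out - Z, C)$ gives
\begin{equation*}
    \sum_{i\in I_2} h_i\cdot \dist^z(x_i, C\cup C^\aux)
    \le O(1)\cdot \cost_z(X_\inl, C) + 2^{O(z\log(z+1))}\cdot m\cdot \eps^{1-z}\cdot \lambda^z.
\end{equation*}
Combining the $I_1$ and $I_2$ bounds and plugging back into the previous display yields the claimed inequality, using $\eps\cdot \eps^{1-z}\le \eps^{1-z}$ for $\eps\le 1$ to absorb the extra $\eps$ factor in front of the additive term.

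The main obstacle is the bound on the $I_2$-sum: one must convert $\sum_{i\in I_2} h_i\cdot \dist^z(x_i, \cdot)$ into the cost of $X_\out - Z$ (which is already controlled by \Cref{lem:for non-significant}) without losing an extra $m\lambda^z$ that cannot be absorbed into the $\eps^{1-z}\lambda^z$ additive term. The reduction to the insignificant-outlier cost only works because the condition $i\in I_2$ ensures $X_\out\cap X_i\subseteq X_\out - Z$, so the case split on whether a part contains a point of $Z$ is really what makes both $I_1$ and $I_2$ estimates available at once. The rest is careful bookkeeping of the $2^{O(z\log(z+1))}\cdot \eps^{1-z}$ constants through the generalized triangle inequality applications.
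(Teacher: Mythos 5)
Your argument is correct and is essentially the paper's proof: the same per-part decomposition with reference points $x_i$, the same use of the generalized triangle inequality, and cancellation of the proxy terms $h_i\cdot\dist^z(x_i,C\cup C^\aux)$ via $w_X^\out(X_\out\cap X_i)=h_i=w_S^\out(S_\out\cap X_i)$. The only (minor) difference lies in bounding the residual term $\sum_i h_i\cdot\dist^z(x_i,C\cup C^\aux)$: the paper chooses $x_i:=\arg\min_{p\in X_i}\dist(p,C\cup C^\aux)$ so this term is immediately at most $\cost_z(X_\out,C\cup C^\aux)$, which is then controlled by \eqref{eq:error for significant} and \eqref{eq:error for non-significant}, whereas your $I_1/I_2$ split reaches the same bound directly from the $\lambda$-covering property of $C^\aux$ (\Cref{claim:bounded size of Caux}, which suffices even without invoking its specific construction, at the cost of a harmless $2^z$ factor) and \Cref{lem:for non-significant} --- the same two ingredients.
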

\begin{proof}
    We first have 
    \begin{align*}
        &\quad \left|\cost_z(X_\out,C\cup C^\aux) - \cost_z(S_\out,C\cup C^\aux) \right|\\
        =&\quad \left|\sum_{i=1}^t \cost_z(X_\out\cap X_i, C\cup C^\aux) - \sum_{i=1}^t\cost_z(S_\out\cap X_i, C\cup C^\aux) \right|\\
        \le&\quad \sum_{i=1}^t\left|\cost_z(X_\out\cap X_i, C\cup C^\aux) - \cost_z(S_\out\cap X_i, C\cup C^\aux)\right|.
    \end{align*}
    Consider a fixed $i\in[t]$, we let $x_i:=\arg\min_{p\in X_i}\dist(p,C\cup C^\aux)$ be the closest point to $C\cup C^\aux$ among points in $X_i$. Then, recall that $w_X^\out(X_\out\cap X_i) = h_i = w_{S}^\out(S_\out\cap X_i)$, we have 
    \begin{align*}
        &\quad\left|\cost_z(X_\out\cap X_i,C\cup C^\aux) - h_i\cdot \dist^z(x_i,C\cup C^\aux) \right|\\
        \le&\quad \sum_{x\in X_\out\cap X_i} w_X^\out(x)\cdot \left| \dist^z(x,C\cup C^\aux) - \dist^z(x_i,C\cup C^\aux) \right|\\
        \le&\quad \sum_{x\in X_\out\cap X_i}w_X^\out(x)\cdot\left(\eps\cdot \dist^z(x_i,C\cup C^\aux) + 2^{O(\log(z+1))}\cdot \eps^{1-z}\cdot \dist^z(x,x_i) \right)\\
        \le&\quad \eps\cdot h_i\cdot \dist^z(x_i,C\cup C^\aux) + 2^{O(z\log(z+1))}\cdot h_i\cdot \eps^{1-z}\cdot \lambda^z.
    \end{align*}  
    We can derive a similar bound for $\left|\cost_z(S_\out\cap X_i,C\cup C^\aux) - h_i\cdot \dist^z(x_i,C\cup C^\aux) \right|$, which is 
    \begin{align*}
        &\quad \left|\cost_z(S_\out\cap X_i,C\cup C^\aux) - h_i\cdot \dist^z(x_i,C\cup C^\aux) \right|\\
        \le&\quad \eps\cdot h_i\cdot \dist^z(x_i,C\cup C^\aux) + 2^{O(z\log(z+1))}\cdot h_i\cdot \eps^{1-z}\cdot \lambda^z.
    \end{align*}
    Hence, by applying the triangle inequality, we have 
    \begin{align*}
        &\quad \left|\cost_z(X_\out\cap X_i,C\cup C^\aux) - \cost_z(S_\out\cap X_i,C\cup C^\aux) \right|\\
        \le&\quad 2\eps\cdot h_i\cdot \dist^z(x_i,C\cup C^\aux) + 2^{O(z\log(z+1))}\cdot h_i\cdot \eps^{1-z}\cdot \lambda^z\\
        \le&\quad 2\eps\cdot \cost_z(X_\out\cap X_i,C\cup C^\aux) + 2^{O(z\log(z+1))}\cdot h_i\cdot \eps^{1-z}\cdot \lambda^z
    \end{align*}
    Summing up over all $i\in [t]$, we have 
    \begin{equation}
        \label{eq:term 2}
        \begin{aligned}    
        &\quad \left|\cost_z(X_\out,C\cup C^\aux) - \cost_z(S_\out,C\cup C^\aux) \right|\\
        \le&\quad \sum_{i=1}^t\left|\cost_z(X_\out\cap X_i, C\cup C^\aux) - \cost_z(S_\out\cap X_i, C\cup C^\aux)\right|\\
        \le&\quad 2\eps\cdot \cost_z(X_\out, C\cup C^\aux) + 2^{O(z\log(z+1))}\cdot m\cdot \eps^{1-z}\cdot \lambda^z\\
        \le&\quad O(\eps)\cdot \cost_z(X_\inl,C) + 2^{O(z\log(z+1))}\cdot m\cdot \eps^{1-z}\cdot \lambda^z,
        \end{aligned} 
    \end{equation}
    where the last inequality is due to~\eqref{eq:error for significant} and~\eqref{eq:error for non-significant}.
    This finishes the proof of \Cref{lem:error second}.
\end{proof}

By plugging the bounds from~\Cref{lem:error first} and~\Cref{lem:error second} into~\eqref{eq:Xin vs Sin}, we obtain 
\begin{equation}
    \label{eq:key 2}
    \begin{aligned}    
    &\quad \left|\left(\cost_z(X_\inl,C) - \cost_z(X,C\cup C^\aux) \right) - \left(\cost_z(S_\inl,C) - \cost_z(S,C\cup C^\aux) \right) \right|\\
    \le&\quad O(\eps)\cdot \cost_z(X_\inl,C) + 2^{O(z\log (z+1))}\cdot m\cdot \eps^{1-z}\cdot \lambda^z.
    \end{aligned}
\end{equation}

\paragraph{Concluding~\Cref{cor:second}}
By combining~\eqref{eq:key 1},~\eqref{eq:key 2} and the triangle inequality, we have
\begin{align*}
    \quad &\left| \sum_{i=1}^t\cost_z^{(h_i)}(X_i,C) - \sum_{i=1}^t\cost_z^{(h_i)}(S_i,C) \right|\\
    =\quad &\left|\cost_z(X_\inl,C) - \cost_z(S_\inl,C) \right|\\
    \le\quad &\left|\left(\cost_z(X_\inl,C) - \cost_z(X,C\cup C^\aux) \right) - \left(\cost_z(S_\inl,C) - \cost_z(S,C\cup C^\aux) \right) \right|\\
    +&\left|\cost_z(X,C\cup C^\aux) - \cost_z(S,C\cup C^\aux) \right| \\
    \le\quad& O(\eps)\cdot \cost_z(X_\inl,C) + 2^{O(z\log(z+1))}\cdot m\cdot \eps^{1-z}\cdot \lambda^z\\
    =\quad & O(\eps)\cdot \sum_{i=1}^t\cost_z^{(h_i)}(X_i,C) + 2^{O(z\log(z+1))}\cdot m\cdot \eps^{1-z}\cdot \lambda^z.
\end{align*}
Since above inequality holds for all $C\in V^k$ and all real numbers $h_1,\dots,h_t\ge 0$ with $\sum_{i=1}^th_i\le m$, we have proved that $S$ satisfies the condition of \Cref{claim:a sufficient condition}, with $\eta = 2^{O(z\log(z+1))}\cdot m\eps^{1-z}\lambda^z$. 
Therefore, we conclude that $S$ is an $(O(\eps),2^{O(z\log(z+1))}\cdot m\eps^{1-z}\lambda^z)$-coreset of $X$ for \kzmC.
\qed

\subsection{Correctness of~\Cref{alg:size-preserving}: Proof of~\Cref{lem:size-preserving}}
\label{sec:reduction for size_preserving}

\ReductionSizePreserving*

Write $\mathcal{P} = (P_1, \ldots, P_t)$ where $\calP$ is a $\lambda$-bounded partition for $\lambda := \frac{\eps\mu}{1000z\Gamma}$, and let $k'\ge (k+t)\Lambda$ be an integer, as in the premise.
Let $M^\dup_{l,w} = (V\times [l], \dist_{l,w})$ be the $w$-separated $l$-duplication of $M$ (see \Cref{def:separated duplication}) used in Line~\ref{alg line:mapping} of \Cref{alg:size-preserving}, where $w=200z\eps^{-1}\cdot \diam(X)\cdot n^{1/z}$. 

We first have the following claim that bounds the size of $\calQ$ according to the sparsity property of $\calQ$.

\begin{claim}
    \label{claim:bounded Q}
    It holds that $|\calQ|=l\le t\Lambda$.
\end{claim}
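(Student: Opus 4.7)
The plan is to prove the bound by a simple double-counting of incidences between parts of $\calP$ and parts of $\calQ$. The key point is that the $\lambda$-bounded partition $\calP = (P_1,\dots,P_t)$ has diameter parameter $\lambda = \frac{\eps\mu}{1000z\Gamma}$, which is far below the ``sparsity radius'' $\mu/\Gamma$ that governs $\calQ$. Concretely, for every $P_i \in \calP$ and every point $x \in P_i$, we have $P_i \subseteq \ball_X(x, \lambda) \subseteq \ball_X(x, \mu/\Gamma)$. Hence the sparsity guarantee in~\Cref{def:sparse partition}(b) applied to $\calQ$ implies that $P_i$ can intersect at most $\Lambda$ parts of $\calQ$.

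Now I would count the number of incident pairs
\[
    I := \bigl|\{(i,j) \in [t] \times [l] : P_i \cap X_j \neq \emptyset\}\bigr|
\]
in two ways. On the one hand, $\calP$ is a partition of $X$, so every $X_j \in \calQ$ (being nonempty) must contain some point of $X$, which in turn lies in some $P_i$; thus each $j \in [l]$ contributes at least one pair, giving $I \ge l$. On the other hand, the observation from the previous paragraph bounds each $P_i$'s contribution by $\Lambda$, yielding $I \le t\Lambda$. Combining the two inequalities gives $l \le t\Lambda$, which is exactly the claim.

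I do not expect any serious obstacle here; the whole argument is just the observation that a $\lambda$-bounded part fits inside a small ball, plus pigeonholing on the sparsity bound. The only thing to double-check is that $\lambda \le \mu/\Gamma$, which holds since $\lambda = \frac{\eps\mu}{1000z\Gamma} < \mu/\Gamma$ for $0 < \eps < 1$ and $z \ge 1$.
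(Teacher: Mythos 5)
Your proof is correct and follows essentially the same route as the paper: bound the number of parts of $\calQ$ met by each $P_i$ via the sparsity property (using $\diam(P_i)\le\lambda\le\mu/\Gamma$), note every nonempty $X_j$ meets some $P_i$, and conclude $l\le t\Lambda$. Your double-counting phrasing is just a slightly more explicit packaging of the paper's one-line summation.
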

\begin{proof}
    Recall that $\calQ$ is a $(\mu,\Gamma,\Lambda)$-sparse partition (see~\Cref{def:sparse partition}), by the sparsity property, for every part $P\in\calP$, since $\diam(P)\le \lambda \le \mu/\Gamma$, we have $P$ intersects at most $\Lambda$ parts in $\calQ$. Moreover, since $\bigcup_{Q\in \calQ} Q = X = \bigcup_{P\in\calP} P$, every part $Q$ intersect at least one part in $\calP$. Therefore, we have $|\calQ| \le \sum_{P\in\calP} |\{Q\in \calQ: Q\cap P\neq \emptyset\}|\le |\calP|\cdot \Lambda=t\Lambda$.
\end{proof}

Then, we separately prove the two properties of $S$ in the following.

\paragraph{Size-preserving Property of $S$} 
Notice that for every $i \in [l]$, $|X_i| = |X_i'|$ and $w_S(S \cap X_i) = w_{S'}(S' \cap X_i')$. Here, $S'$ is an $\eps$-coreset of $X'$ for \tzC{k'} on the metric space $M^\dup_{l,w}$, constructed in Line~\ref{alg line:vanilla coreset} of \Cref{alg:size-preserving}. Therefore, we aim to prove the size-preserving property of $S'$, i.e., $\forall i\in [l]$, $w_{S'}(S'\cap X'_i)\in (1\pm \eps)\cdot |X'_i|$, which is equivalent to $w_{S}(S\cap X_i)\in (1\pm \eps)\cdot |X_i|$.

For every $i\in [l]$, we pick an arbitrary point $x_i\in X_i$ and let $x_i':=(x_i,i)\in X_i'$. 
Recall that $w\ge \diam(X)$, the separation property of $M^\multi_{l,w} = (V\times [l], \dist_{l,w})$ (see~\Cref{def:separated duplication}) implies the following fact.

\begin{fact}
\label{claim:closest center}
For every $i\in [l]$ and every $p\in X_i$, let $p':= (p,i) \in X_i'$. Then it holds that $\min_{j\in [l]} \dist_{l,w}(p',x'_j) = \dist_{l,w}(p',x'_i) = \dist(p,x_i)$.
\end{fact}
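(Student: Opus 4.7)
The plan is to invoke the two defining properties of a $w$-separated $l$-duplication (Definition~\ref{def:separated duplication}) directly, using that the separation parameter $w$ chosen in Line~\ref{alg line:vanilla coreset} of \Cref{alg:size-preserving} is much larger than $\diam(X)$. Concretely, I would split the minimum over $j\in[l]$ into the diagonal term $j=i$ and the off-diagonal terms $j\neq i$, show the former equals $\dist(p,x_i)$, and show every off-diagonal term is at least $w\ge \dist(p,x_i)$, so the minimum is attained at $j=i$.

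First, by Property~1 of \Cref{def:separated duplication} applied to the pair $(p,i),(x_i,i)\in V\times [l]$ that share the same second coordinate, we get $\dist_{l,w}(p',x'_i) = \dist_{l,w}((p,i),(x_i,i)) = \dist(p,x_i)$, which handles the diagonal term and also gives the second equality in the statement. Next, for any $j\in [l]$ with $j\neq i$, Property~2 of \Cref{def:separated duplication} yields $\dist_{l,w}(p',x'_j) = \dist_{l,w}((p,i),(x_j,j))\ge w$. By the choice $w = 200z\eps^{-1}\cdot \diam(X)\cdot n^{1/z}$ in Line~\ref{alg line:vanilla coreset}, we have $w\ge \diam(X)\ge \dist(p,x_i)$ (since $p,x_i\in X$). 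Combining, every $j\neq i$ gives $\dist_{l,w}(p',x'_j)\ge \dist(p,x_i) = \dist_{l,w}(p',x'_i)$, so $\min_{j\in[l]} \dist_{l,w}(p',x'_j) = \dist_{l,w}(p',x'_i) = \dist(p,x_i)$, as claimed.

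There is no genuine obstacle here; the statement is essentially a restatement of the separation guarantee once one observes that $w$ dominates $\diam(X)$. The only subtlety to flag is the dependence of $w$ on $n$ and $\eps^{-1}$, which is not needed for this fact but will matter in the surrounding analysis where $w$ must also dominate other cost-scale quantities.
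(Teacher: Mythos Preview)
Your proposal is correct and matches the paper's approach: the paper simply states that the fact follows from the separation property of $M^\dup_{l,w}$ together with $w\ge \diam(X)$, which is exactly the argument you spell out via Properties~1 and~2 of \Cref{def:separated duplication}.
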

Then, for a fixed index $i\in [l]$, we consider the center set $C:=\{x'_j: j\in [l],j\neq i \}\subseteq V\times [l]$, which contains all $x'_j$ for $j\in [l]$ except $x'_i$. 
By~\Cref{claim:closest center}, we have
\begin{align*}
\cost_z(X',C)\quad&=\quad \sum_{j\in [l]\setminus\{i\}  }\sum_{p'\in X'_j}\dist^z_{l,w}(p',x'_j) + \sum_{p'\in X_i'} \dist^z_{l,w}(p', C)\\
&=\quad \sum_{j\in [l]\setminus \{i\}}\sum_{p\in X_j}\dist^z(p,x_j) + \sum_{p'\in X_i'} \dist^z_{l,w}(p', C)
\end{align*}
For the first term, we have $\sum_{j\in [l]\setminus \{i\}}\sum_{p\in X_j}\dist^z(p,x_j)\le n\cdot (\diam(X))^z\le \eps\cdot w^z/200$.
For the term $\sum_{p'\in X_i'} \dist^z_{l,w}(p', C)$, we have $\sum_{p'\in X_i'} \dist^z_{l,w}(p', C)\ge w^z$.
Hence, 
\begin{align}
\label{eq:cost X_i'}
\cost_z(X',C)\in (1\pm \eps/100)\sum_{p'\in X_i'} \dist^z_{l,w}(p', C).
\end{align}
Moreover, for every $p'=(p,i)\in X'_i$, by the triangle inequality, we have 
\begin{equation*}
|\dist_{l,w}(p', C) - \dist_{l,w}(x'_i,C) |\le \dist_{l,w}(p', x_i')\le \diam(X)\le \frac{\eps w}{200z} \le \frac{\eps}{200z}\cdot \dist_{l,w}(x'_i,C).
\end{equation*}
As a result, we obtain $\dist_{l,w}^z(p',C)\in (1\pm \eps/50)\cdot \dist_{l,w}^z(x_i',C)$ using the fact that $1-2\alpha\le (1-\alpha/z)^z, (1+\alpha/z)^z\le 1+2\alpha$ for $\alpha\in (0,1)$. Therefore, we have
\begin{equation*}
\sum_{p'\in X_i'} \dist^z_{l,w}(p', C)\in (1\pm \eps/50)\cdot |X_i'|\cdot \dist^z_{l,w}(x_i',C).
\end{equation*}
Combined with~\eqref{eq:cost X_i'}, we have 
\begin{equation}
\label{eq:cost X'}
\cost_z(X',C)\in (1\pm \eps/10)\cdot |X_i'|\cdot \dist^z_{l,w}(x_i',C).
\end{equation}
For $\cost_z(S',C)$, let $C^\mathrm{full}:=\{x'_j:j\in [l]\}$ and we have 
\begin{align*}
    \cost_z(S',C) =& \sum_{j\in [l]\setminus\{i\}} \sum_{p'\in S'\cap X_j'} w_{S'}(p')\cdot \dist^z_{l,w}(p',x_j') + \sum_{p'\in S'\cap X_i'} w_{S'}(p')\cdot \dist_{l,w}^z(p',C).
\end{align*}
The first term can be upper bounded by 
\begin{align*}
    \sum_{j\in [l]\setminus\{i\}} \sum_{p'\in S'\cap X_j'} w_{S'}(p')\cdot \dist^z_{l,w}(p',x_j')\quad&\le\quad \sum_{j\in [l]} \sum_{p'\in S'\cap X_j'} w_{S'}(p')\cdot \dist^z_{l,w}(p',x_j')\\
    &=\quad \cost_z(S',C^\mathrm{full})\\
    &\le\quad 2\cost_z(X',C^\mathrm{full})\\
    &\le\quad \eps\cdot w^z / 100.
\end{align*}
where the second inequality is due to the coreset guarantee of $S'$, and the last inequality is due to $\cost_z(X',C^\mathrm{full}) \le \eps \cdot w^z / 200$ as discussed above.
Then, by a similar argument, we can obtain the same result for $\cost_z(S',C)$:
\begin{equation}
\label{eq:cost S'}
\cost_z(S',C)\in (1\pm \eps/10)\cdot w_{S'}(S'\cap X_i')\cdot \dist^z_{l,w}(x_i',C).
\end{equation}
Finally, since $|C| = l-1$ and $S'$ is an $\eps/10$-coreset for \tzC{k'} with $k' = (t+k)\Lambda\ge l$ by~\Cref{claim:bounded Q}, we have $\cost_z(S',C)\in (1\pm\eps/10)\cdot \cost_z(X',C)$. Hence, we obtain that $w_{S'}(S'\cap X_i')\in (1\pm \eps)\cdot |X_i'|$, which is equivalent to $w_{S}(S\cap X_i)\in (1\pm \eps)\cdot |X_i|$.

\paragraph{Coreset Guarantee of $S$} Consider any center set $C\in V^k$. Our strategy is to construct a ``bridge'' center set $C'\subseteq V\times [l]$ such that 
\begin{equation}
\label{eq:bridge}
\cost_z(X,C)\in (1\pm \eps/10)\cdot \cost_z(X',C'),\quad \cost_z(S,C)\in (1\pm \eps/10)\cdot \cost_z(S',C').
\end{equation}
If there indeed exists such a $C'$ with $|C'|\le k'= (t+k)\Lambda$, then as the coreset guarantee of $S'$ states that $\cost_z(S',C')\in (1\pm \eps/10)\cdot \cost_z(X',C')$, we conclude that $\cost_z(X,C)\in (1\pm \eps)\cdot \cost_z(S,C)$.

Hence, it remains to give the construction of the bridge center set, which utilizes the $\lambda$-bounded partition $\calP=(P_1,\dots, P_t)$ (recalling that $\lambda = \frac{\eps\mu}{1000z\Gamma})$. For every $j\in [t]$, we define $C_j^\close:=\{c\in C: \dist(c,P_j)\le \frac{\mu}{4\Gamma} \}$, $c_j^\far:= \arg\min_{c\in C\setminus C_j^\close} \dist(c, P_j)$, and let $C_j:=C_j^\close\cup \{c_j^\far\}$. We have the following lemma.

\begin{lemma}
\label{claim:C_j}
For every $x\in P_j$, it holds that 
\begin{equation*}
    \dist^z(x,C)\le \dist^z(x, C_j)\le (1+\eps/50)\cdot \dist^z(x,C).
\end{equation*}
\end{lemma}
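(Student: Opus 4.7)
The first inequality $\dist^z(x,C)\le \dist^z(x,C_j)$ is immediate since $C_j\subseteq C$, so the entire burden is on the upper bound $\dist^z(x,C_j)\le (1+\eps/50)\dist^z(x,C)$. I will split on whether a nearest center of $x$ in $C$ is close to $P_j$ or not, handling the trivial case directly and reducing the nontrivial case to a single triangle-inequality estimate against $c_j^\far$.

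Let $c^\star\in C$ satisfy $\dist(x,c^\star)=\dist(x,C)$. In the easy case $c^\star\in C_j^\close$, we have $c^\star\in C_j$, so $\dist(x,C_j)\le \dist(x,c^\star)=\dist(x,C)$, and the bound holds trivially. In the remaining case $c^\star\in C\setminus C_j^\close$, the set $C\setminus C_j^\close$ is nonempty, so $c_j^\far$ is well-defined, and by its choice $\dist(c_j^\far,P_j)\le \dist(c^\star,P_j)\le \dist(x,c^\star)$, where the last inequality uses $x\in P_j$.

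Pick $y\in P_j$ achieving $\dist(c_j^\far,y)=\dist(c_j^\far,P_j)$. Since $\diam(P_j)\le \lambda$, the triangle inequality gives
\[
\dist(x,c_j^\far)\le \dist(x,y)+\dist(y,c_j^\far)\le \lambda+\dist(x,c^\star)=\lambda+\dist(x,C).
\]
Now I will use that $c^\star\notin C_j^\close$ to lower-bound $\dist(x,C)$: indeed $\dist(x,C)=\dist(x,c^\star)\ge \dist(c^\star,P_j)>\mu/(4\Gamma)$. Combined with $\lambda=\eps\mu/(1000z\Gamma)$, this yields
\[
\frac{\lambda}{\dist(x,C)}\ <\ \frac{\eps\mu/(1000z\Gamma)}{\mu/(4\Gamma)}\ =\ \frac{\eps}{250z},
\]
so $\dist(x,c_j^\far)\le \bigl(1+\tfrac{\eps}{250z}\bigr)\dist(x,C)$. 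Raising to the $z$-th power and using $(1+\alpha/z)^z\le e^{\alpha}\le 1+2\alpha$ for $\alpha=\eps/250\in(0,1)$ gives
\[
\dist^z(x,c_j^\far)\ \le\ \bigl(1+\tfrac{\eps}{125}\bigr)\dist^z(x,C)\ \le\ (1+\eps/50)\dist^z(x,C).
\]
Since $c_j^\far\in C_j$, we conclude $\dist^z(x,C_j)\le \dist^z(x,c_j^\far)\le (1+\eps/50)\dist^z(x,C)$.

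The only subtle point is the scaling of the two parameters: the definition $C_j^\close=\{c\in C:\dist(c,P_j)\le \mu/(4\Gamma)\}$ provides the lower bound on $\dist(x,C)$ in the hard case, while the choice $\lambda=\eps\mu/(1000z\Gamma)$ guarantees that $\lambda$ is a $O(\eps/z)$-fraction of that lower bound, exactly what is needed after taking a $z$-th power. No other obstacle arises; this is essentially a tight bookkeeping of the $\mu/\Gamma$ separation threshold against the diameter bound $\lambda$ of parts of $\calP$.
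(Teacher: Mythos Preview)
Your proof is correct and follows essentially the same approach as the paper: both use that $C_j\subseteq C$ for the lower bound, and for the upper bound both exploit the triangle inequality together with the key ratio $\lambda/(\mu/(4\Gamma))=\eps/(250z)$ to show $\dist(x,c_j^\far)\le(1+\eps/(250z))\dist(x,C)$ when the nearest center lies outside $C_j^\close$, then apply $(1+\alpha/z)^z\le 1+2\alpha$. Your case split on whether $c^\star\in C_j^\close$ is slightly cleaner than the paper's version, which proves a uniform bound over all $c\in C\setminus C_j^\close$, but the substance is the same.
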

\begin{proof}
The first inequality follows from $C_j\subseteq C$. Hence, it remains to prove the second inequality. Firstly, for every $c\in C\setminus C_j^\close$, let $y:=\arg\min_{x\in P_j} \dist(x,c)$ be the closest point to $c$ in $P_j$. Then, for every $x\in P_j$, we have 
\begin{align*}
    \left|\dist(x,c) - \dist(c,y)\right|\le \dist(x,y)\le \diam(P_j)\le\lambda = \frac{\eps\mu}{1000z\Gamma} \le \frac{\eps}{250z} \cdot \dist(c,y),
\end{align*}
where we are using the fact that $\dist(c,y) = \dist(c,P_j)> \frac{\mu}{4\Gamma}$. Thus, we have $\dist(x,c) \in (1\pm \frac{\eps}{250z})\cdot \dist(c,P_j)$. Combined with the definition of $c_j^\far$, we have $\dist(x,c_j^\far)\le (1+\frac{\eps}{100z})\cdot \dist(x,C\setminus C_j^\close)$, which implies that
\begin{align*}
    \dist(x,C_j) \quad = &\quad  \min\left\{\dist(x,C_j^\close), \dist(x,c_j^\far)\right\} \\
    \le&\quad \left(1+\frac{\eps}{100z}\right)\cdot\min\left\{\dist(x,C_j^\close), \dist(x,C\setminus C_j^\close)\right\}\\
    =&\quad \left(1+\frac{\eps}{100z}\right)\cdot \dist(x,C).
\end{align*}
By using the fact $(1+\alpha/z)^z \le 1 + 2\alpha$ for $\alpha\in (0,1)$, we complete the proof of \Cref{claim:C_j}.
\end{proof}

Then, for every $i\in [l]$, let $D_i:=\bigcup_{j\in [t]: P_j\cap X_i\neq \emptyset} C_j$.
Now consider any $x\in X_i$ and let $j\in [t]$ be an integer such that $x\in P_j$. We have $P_j\cap X_i\neq \emptyset$, which means that $C_j\subseteq D_i$. Therefore, by ~\Cref{claim:C_j}, we have
\begin{equation}
\label{eq:Di}
\dist^z(x,C)\le \dist^z(x,D_i)\le \dist^z(x,C_j)\le (1+\eps/50)\cdot \dist^z(x,C).
\end{equation}
This implies that 
\begin{equation}
\label{eq:cost P_i}
\cost_z(X,C) = \sum_{i=1}^l \cost_z(X_i,C) \in (1\pm\eps/50)\cdot \sum_{i=1}^l\cost_z(X_i,D_i),
\end{equation}
and
\begin{equation}
\label{eq:cost S_X}
\cost_z(S,C) = \sum_{i=1}^l\cost_z(S\cap X_i,C)\in (1\pm \eps/50)\cdot \sum_{i=1}^l\cost_z(S\cap X_i,D_i).
\end{equation}

Let $D'_i:=\{(c,i)\in V\times [l]:c\in D_i\}$ and $C':=\bigcup_{i=1}^l D_i'$ for every $i\in[l]$. For every $i,j\in [l]$ with $i\neq j$, the separation property of $M^\dup_{l,w}$ (see \Cref{def:separated duplication}) implies that 
\begin{equation*}
\cost_z(X_i',D_j')\ge \cost_z(X_i,D_j)\ge \cost_z(X_i,C).
\end{equation*}

By~\eqref{eq:Di}, we further deduce that 
\begin{equation*}
\cost_z(X_i',D_j')\ge \cost_z(X_i,C)\ge (1+\eps/50)^{-1}\cdot \cost_z(X_i,D_i)= (1+\eps/50)^{-1}\cdot \cost_z(X_i',D_i'),
\end{equation*}
implying that $\cost_z(X_i',D_i')\in (1\pm\eps/50)\cdot \cost_z(X_i',C')$.
Therefore, we have 
\begin{equation*}
\sum_{i=1}^l\cost_z(X_i,D_i) = \sum_{i=1}^l\cost_z(X_i',D_i') \in (1\pm \eps/50)\cdot\cost_z(X',C').
\end{equation*}
Similarly, we can obtain 
\begin{equation*}
\sum_{i=1}^l\cost_z(S\cap X_i,D_i)\in (1\pm\eps/50)\cdot \cost_z(S',C').
\end{equation*}
Combined with~\eqref{eq:cost P_i},~\eqref{eq:cost S_X}, we confirm that $C'$ satisfies the bridge property stated in~\eqref{eq:bridge}. It remains to prove the following \Cref{claim:bounded C'} regarding the size of $C'$, which would finish the proof of \Cref{lem:size-preserving}.

\begin{lemma}
\label{claim:bounded C'}
It holds that
$|C'|=\sum_{i=1}^l |D_i'|\le (t+k)\Lambda$.
\end{lemma}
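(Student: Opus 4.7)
The plan is to use a double counting argument, decomposing each $D_i$ into a ``close'' and ``far'' part according to the construction of $C_j = C_j^{\close} \cup \{c_j^{\far}\}$, and bounding each piece by exploiting the sparsity guarantee of $\calQ$ in two different ways.

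First I would write $D_i = D_i^{\close} \cup D_i^{\far}$, where $D_i^{\close} := \bigcup_{j : P_j \cap X_i \neq \emptyset} C_j^{\close}$ and $D_i^{\far} := \{c_j^{\far} : j \in [t], P_j \cap X_i \neq \emptyset\}$, so that $|D_i| \le |D_i^{\close}| + |D_i^{\far}|$. For the far part, I would swap the order of summation: $\sum_i |D_i^{\far}| \le \sum_i |\{j : P_j \cap X_i \neq \emptyset\}| = \sum_j |\{i : P_j \cap X_i \neq \emptyset\}|$. Since $\diam(P_j) \le \lambda \le \mu/\Gamma$, each $P_j$ lies inside a ball $\ball_X(x,\mu/\Gamma)$ for any $x \in P_j$, and by the sparsity property this ball meets at most $\Lambda$ parts of $\calQ$. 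Summing over $j \in [t]$ gives $\sum_i |D_i^{\far}| \le t\Lambda$. This is exactly the argument already used in \Cref{claim:bounded Q}.

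For the close part, the key step is to bound, for a fixed center $c \in C$, the number of indices $i$ with $c \in D_i^{\close}$ by $\Lambda$; summing over $c \in C$ then gives $\sum_i |D_i^{\close}| \le k\Lambda$. Fix $c$ and assume the set $J_c^{\close} := \{j : c \in C_j^{\close}\}$ is nonempty (otherwise there is nothing to bound). Pick any $j_0 \in J_c^{\close}$, and since $\dist(c, P_{j_0}) \le \mu/(4\Gamma)$, pick $y_0 \in P_{j_0} \subseteq X$ with $\dist(c,y_0) \le \mu/(4\Gamma)$. I claim every $i$ with $c \in D_i^{\close}$ satisfies $X_i \cap \ball_X(y_0,\mu/\Gamma) \neq \emptyset$. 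Indeed, for such an $i$, there is some $j \in J_c^{\close}$ and a point $x \in P_j \cap X_i$; picking $y_j \in P_j$ with $\dist(c,y_j) \le \mu/(4\Gamma)$, the triangle inequality gives
\[
\dist(y_0,x) \le \dist(y_0,c) + \dist(c,y_j) + \dist(y_j,x) \le \tfrac{\mu}{4\Gamma} + \tfrac{\mu}{4\Gamma} + \lambda \le \tfrac{\mu}{\Gamma},
\]
using that $\lambda = \eps\mu/(1000z\Gamma) \le \mu/(2\Gamma)$. Then the sparsity property, applied to the ball $\ball_X(y_0, \mu/\Gamma)$ centered at the point $y_0 \in X$, says this ball intersects at most $\Lambda$ parts of $\calQ$, proving the claim.

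Combining the two bounds yields $\sum_{i=1}^l |D_i| \le k\Lambda + t\Lambda = (k+t)\Lambda$, and since $|D_i'| = |D_i|$ by construction, this establishes \Cref{claim:bounded C'}. The main subtlety, and what I expect to be the most delicate step to state cleanly, is verifying that the anchor point $y_0$ lies in $X$ (so that the sparsity condition can legitimately be invoked at $y_0$) and that the distance budget $\mu/(4\Gamma) + \mu/(4\Gamma) + \lambda$ fits within the sparsity radius $\mu/\Gamma$; both rely on the specific choices of $\lambda$ and of the threshold $\mu/(4\Gamma)$ in the definition of $C_j^{\close}$, so the numerical slack must be tracked carefully.
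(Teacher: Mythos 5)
Your proposal is correct and follows essentially the same argument as the paper: the same close/far decomposition of $D_i$, the same summation swap giving the $t\Lambda$ bound for the far part, and the same per-center counting via the sparsity of $\calQ$ giving the $k\Lambda$ bound for the close part. The only cosmetic difference is the anchor point for invoking sparsity (you re-center at a point $y_0 \in P_{j_0}$ near $c$, while the paper re-centers at an arbitrary point of $\ball_X(c,\mu/(2\Gamma))$), and your distance accounting $\mu/(4\Gamma)+\mu/(4\Gamma)+\lambda \le \mu/\Gamma$ checks out.
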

\begin{proof}

For every $i\in [l]$, recall that $|D'_i|=|D_i|=|\bigcup_{j\in [t]:P_j\cap X_i\neq\emptyset} C_j|$. For simplicity, let $I_i:=\{j\in [t]: P_j\cap X_i\neq\emptyset\}$ denote the indices of sets that intersect $X_i$. Since $C_j = C_j^\close\cup\{c_j^\far\}$ for every $j\in[t]$, we have $|D_i| \le |\bigcup_{j\in I_i} C_j^\close| + |I_i|$. In the following, we separately provide upper bounds for $\sum_{i=1}^l |\bigcup_{j\in I_i} C_j^\close |$ and $\sum_{i=1}^l |I_i|$.

For $\sum_{i=1}^l |\bigcup_{j\in I_i} C_j^\close |$, we can rewrite it as 
\begin{equation}
\label{eq:rewrite1}
\sum_{i=1}^l |\bigcup_{j\in I_i} C_j^\close| = \sum_{i=1}^l\sum_{c\in C} \tau_{c,i} = \sum_{c\in C} \sum_{i=1}^l \tau_{c,i},
\end{equation}
where $\tau_{c,i}=\begin{cases}
1&\exists j\in I_i, c\in C_j^\close\\
0&\text{otherwise}
\end{cases}$.
Fix a center $c\in C$, for every $i\in [l]$ such that $\tau_{c,i} = 1$, let $j\in I_i$ be an arbitrary index satisfying that $c\in C_j^\close$ and $p\in P_j\cap X_i$ be an arbitrary point. We have 
\begin{equation}
\dist(c,p)\le \dist(c,P_j) + \diam(P_j)\le \frac{\mu}{4\Gamma} + \lambda\le \frac{\mu}{2\Gamma}.
\end{equation}
Therefore, $\tau_{c,i} = 1$ implies that $\ball_X(c,\frac{\mu}{2\Gamma})$ intersects $X_i$. Pick an arbitrary point $x\in \ball_X(c,\frac{\mu}{2\Gamma})$, we have $\ball_X(c,\frac{\mu}{2\Gamma})\subseteq \ball_X(x, \frac{\mu}{\Gamma})$.
By the sparsity of partition $\calQ=\{X_1,\dots,X_l\}$, 
$\ball_X(x,\frac{\mu}{\Gamma})$ can intersect at most $\Lambda$ parts within $\calQ$. Hence, $\ball_X(c,\frac{\mu}{2\Gamma})$ intersects at most $\Lambda$ parts, which implies that $\sum_{i=1}^l\tau_{c,i}\le \Lambda$. According to~\eqref{eq:rewrite1}, we have $\sum_{i=1}^l |\bigcup_{j\in I_i} C_j^\close |\le k\Lambda$.

For $\sum_{i=1}^l|I_i|$, we can rewrite it as 
\begin{equation}
\sum_{i=1}^l|I_i| = \sum_{j=1}^t |\{i\in [l]: P_j\cap X_i\neq\emptyset\}|.
\end{equation}
Similarly, pick an arbitrary point $x\in P_j$, we have
$P_j\subset \ball_X(x,\frac{\eps\mu}{100z\Gamma})$ and $\ball_X(x,\frac{\eps\mu}{100z\Gamma})$ can intersect at most $\Lambda$ parts, which implies that $\sum_{i=1}^l|I_i|\le t\Lambda$.

In conclusion, we have $|D'|\le \sum_{i=1}^l|\bigcup_{j\in I_i}C_j^\close | + \sum_{i=1}^l |I_i| \le (t+k)\Lambda$.
\end{proof}

\section{Streaming Implementations}
\label{sec:streaming}

In this section, we implement our reduction algorithms (both~\Cref{thm:reduction I} and~\Cref{thm:reduction2}) in the dynamic streaming setting. 
We consider the standard geometric streaming model proposed by Indyk~\cite{DBLP:conf/stoc/Indyk04},
where the input is from a discrete set $[\Delta]^d$ for some integer $\Delta \geq 1$ and is presented as a stream of point insertion and deletions.
We obtain two black-box reductions, stated together in \Cref{thm:dynamic coresets}, that turns a streaming coreset construction algorithm for \kzC to the one for \kzmC, with guarantee similar to \Cref{thm:reduction I} and \Cref{thm:reduction2}, respectively.

\begin{theorem}[Streaming Coresets for \kzmC]
    \label{thm:dynamic coresets}

    Assume there exists a streaming algorithm that, given $0< \eps,\delta < 1$, integers $k, z, d, \Delta \geq 1$, and a dataset from $[\Delta]^d$ presented as a dynamic stream, uses space $W(d,\Delta, k, \eps^{-1},\delta^{-1})$ to construct an $\eps$-coreset for \kzC with a failure probability of at most $\delta$.
Then there exists a streaming algorithm that, given $0< \eps,\delta < 1$, integers $k,z,m,d,\Delta\ge 1$, and a dataset $X\subseteq [\Delta]^d$ presented as a dynamic stream, constructs an $\eps$-coreset of $X$ for \kzmC. This algorithm has a failure probability of at most $\delta$ and uses space $\min\{W_1,W_2\}\cdot \poly(d\log(\delta^{-1}\Delta))$, where 
    \begin{equation*}   
        \begin{aligned}
        W_1 &= 2^{O(z\log z)}\cdot \tilde O\left((k+d^z)m\eps^{-1}\right)+ W\left(d,\Delta,k,O(\eps^{-1}),O(\delta^{-1}zd\log\Delta)\right),\\
        W_2 &= 2^{O(z\log z)}\cdot \tilde O\left(k + m(d/\eps)^{2z}\right) + W\left(d+1, z\eps^{-1}\Delta^{\poly(d)}, k\poly(d), O(\eps^{-1}),O(\delta^{-1}zd\log\Delta)\right).
        \end{aligned}
    \end{equation*}
\end{theorem}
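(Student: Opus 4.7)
The plan is to translate each of the two offline reductions (\Cref{thm:reduction I} and \Cref{thm:reduction2}) into the dynamic streaming model, replacing the explicit $\lambda$-bounded partition that both reductions require by a data-oblivious hashing version of the sparse partition of~\cite{arxiv.2204.02095}. First I would fix, as a preprocessing step, an $O(\log(d\log\Delta))$-sized geometric grid of guesses for $\OPT_z^{(m)}(X)$; running one streaming instance per guess multiplies the overall space by $\polylog(d\Delta)$ and lets us set a target scale $\lambda\approx (\eps\,\OPT_z^{(m)}(X)/m)^{1/z}/\poly(z\log(km\eps^{-1}))$ exactly as in the offline proofs. For this fixed $\lambda$, I sample a random consistent hash $\phi\colon [\Delta]^d\to\mathbb{Z}$ of the type given by~\cite{arxiv.2204.02095}, so that each bucket $\phi^{-1}(b)$ has diameter $O(\lambda)$ and every subset of diameter $\lambda/\poly(d)$ intersects at most $\poly(d)$ buckets; the description of $\phi$ uses only $\poly(d\log\Delta)$ bits and the bucket id of any query point is computable online.

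The main technical step is a streaming realization of \Cref{lem:decomposition} using only this hash. I would combine two subroutines that run in parallel over the stream: a two-level $\ell_0$-sampler~\cite[Lemma 3.3]{arxiv.2204.02095} that, with high probability, recovers a subset $F\subseteq X$ of size $m\cdot\poly(d\eps^{-1})$ such that $\phi(X\setminus F)$ consists of at most $k\poly(d)$ distinct buckets (this is the streaming analogue of removing $\gamma m$ outliers together with the ``far'' points), and a modified two-level sparse-recovery sketch (\Cref{lem:sparse recovery}) that, conditioned on the bucket-count bound, recovers every bucket of $X\setminus F$ containing fewer than $(1+\eps^{-1})m$ points. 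Both sketches are linear in the data frequency vector and therefore tolerate insertions and deletions. The set $F$ together with the recovered sparse buckets plays exactly the role of $X_{\calS}$ in \Cref{alg:construction based on bounded partition}; the remaining points, whose total bucket count is $k\poly(d)$, play the role of the dense subset $X_{\calD}$.

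To build $W_1$ (Reduction I), I simulate the stream for $X_{\calD}$ by feeding the given vanilla \kzC streaming algorithm with insertions from the original stream and compensating deletions for every point identified as belonging to $F$ or to a sparse bucket (this is implementable online up to a low-order increase in failure probability, which is why the space bound allows the inner vanilla algorithm to fail with probability only $\delta/\poly(d\log\Delta)$). At the end of the stream, \Cref{thm:main1} ensures that the vanilla coreset on $X_{\calD}$ plus the explicit set $X_{\calS}$ (weighted by the sizes recovered from the sparse-recovery sketch) is an $\eps$-coreset for \kzmC; the space is dominated by $|F|+\text{(sparse buckets)}\cdot\tilde O(1)$ plus the inner streaming cost, giving the $W_1$ bound. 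For $W_2$ (Reduction II), I instead implement \Cref{alg:size-preserving} online: after fixing $\phi$, every stream update on $x$ is forwarded to the inner vanilla streaming coreset algorithm as an update on $(x,\phi(x)\cdot w)\in[\Delta]^d\times\mathbb{Z}$ with $w=\Theta(z\eps^{-1}\Delta^{\poly(d)})$, which lives in a bounded region of $(d+1)$-dimensional Euclidean space; the hash $\phi(X\setminus F)$ gives the required small-cardinality $\lambda$-bounded partition, and the weights are calibrated by running the sparse-recovery sketch to learn $|\phi^{-1}(b)|$ for each surviving bucket. By \Cref{cor:second} the resulting size-preserving coreset is a robust coreset, and its space is $|F|+\tilde O(k)+W(d+1,\cdot,\cdot,\cdot,\cdot)$, yielding $W_2$.

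The hardest part is the conditional correctness analysis of the two-level sparse-recovery sketch: unlike the offline setting, in streaming we cannot first decide which buckets are sparse and only then enumerate their contents, since every sketch must be decided before we see the stream. I would address this by designing the sparse recovery to recover a bucket if and only if its final count is at most $(1+\eps^{-1})m$, using $\ell_0$-samplers within each reported bucket and arguing, via the \cite{arxiv.2204.02095} $\ell_0$-sampler guarantee together with the bucket-count bound given by $\phi(X\setminus F)$, that the union bound over the at most $k\poly(d)$ surviving buckets keeps the failure probability below $\delta$; a symmetric argument upgrades the approximate size-preservation of \Cref{lem:size-preserving} to the exact calibration needed by \Cref{cor:second}. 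Taking the minimum of the two resulting space bounds and inflating by the $\polylog(d\Delta)$ factor introduced by the $\OPT$-guessing and the $\phi$-randomness then gives the claimed $\min\{W_1,W_2\}\cdot\poly(d\log(\delta^{-1}\Delta))$ bound.
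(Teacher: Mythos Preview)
Your proposal is essentially the paper's own proof: the same consistent-hashing replacement for the offline $\lambda$-bounded partition, the same two-level $\ell_0$-sampler (\Cref{lem:isolated}) to extract the isolated set, the same two-level sparse recovery (\Cref{lem:find light parts}) for the sparse buckets, the same one-pass trick of running the vanilla streaming coreset on the full stream and issuing compensating deletions afterward, and the same $(x,\phi(x)\cdot w)$ lift for the size-preserving reduction. One detail to watch in the $W_2$ branch: the paper actually instantiates \emph{two} consistent hashes, a fine one $\phi'$ at scale $\lambda$ (used only to extract $G$ and certify a $\lambda$-bounded partition of small size) and a coarse one $\phi$ at scale $\mu=\Theta(\eps^{-1}z\Gamma\lambda)$ (used for the separated-duplication lift and the calibration), because \Cref{lem:size-preserving} needs the \emph{existence} of a $\tfrac{\eps\mu}{1000z\Gamma}$-bounded partition while the sparse partition $\calQ$ itself is only $\mu$-bounded; collapsing these into a single hash as you describe would force a mismatch between the diameter guarantee and the hypothesis of \Cref{lem:size-preserving}.
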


      \subsection{$W_1\cdot\poly(d\log(\delta^{-1}\Delta))$-Space: Streaming Implementation of~\Cref{thm:reduction I}}
\label{sec:streaming coreset}
\paragraph{Algorithm Overview}
The first space complexity $W_1\cdot \poly(d\log\Delta)$ is derived by implementing~\Cref{thm:reduction I} in the dynamic streaming setting.
Recall that the coreset construction of~\Cref{thm:reduction I} is based on a $\lambda$-bounded partition, where the value of $\lambda$ depends on the optimal objective $\OPT_z^{(m)}(X)$ (which we can ``guess'' using small additional space via standard approaches).
In the following, we assume that $\lambda$ is fixed and proceed to present the subroutines we will use in the algorithm's design.
The main challenge lies in the streaming construction of almost-dense $\lambda$-bounded partition as shown in~\Cref{lem:decomposition}, which inherently requires storing the entire dataset using $\Omega(n)$ space. 
To address this, our strategy is to set up a partition of the space $[\Delta]^d$ into buckets before the stream begins,
which can be achieved efficiently (in space) by a geometric decomposition technique called \emph{consistent hashing} (see~\Cref{lem:consistent hashing}). 
The non-empty buckets, which contain points from the dataset, form a partition of $X$.
Although there may be numerous non-empty buckets, we show (in~\Cref{lem:bounded size}) that, similar to~\Cref{claim:refined decopmosition} (which is a refinement of~\Cref{lem:decomposition}), there exists a set $F$ of isolated points with a bounded size that can be efficiently extracted in a dynamic stream. Moreover, after removing this set from the dataset, the number of non-empty buckets significantly decreases. Hence, $X\setminus F$ admits a small-sized partition (defined by the consistent hashing).
Next, we employ a two-level sparse recovery approach to identify all buckets containing a small number of points (see~\Cref{lem:find light parts}), thereby obtaining the sparse subset $X_{\calS}$. The dense subset is then $X\setminus (X_\calS \cup F)$, and we construct an $\epsilon$-coreset $S_{\calD}$ for \kzC by independently running a streaming coreset construction on a stream that represents $X\setminus (X_{\calS} \cup F)$, which comprises the dataset stream of $X$ with the removals corresponding to $X_{\calS} \cup F$. 
Finally, the coreset for $\kzmC$ of $X$ is formed as $S_{\calD} \cup X_{\calS} \cup F$.

\paragraph{Defining $\lambda$-Bounded Partition via Consistent Hashing} 
We need to use a geometric hashing technique called \emph{consistent hashing}.
The definition of consistent hashing is actually equivalent to that of a sparse partition of the space $\mathbb{R}^d$,
except that it is data oblivious, and requires little space to store and evaluate for every point $x$ the ID of the part that $x$ belongs to. 
Consistent hashing was first applied in the dynamic streaming setting in a recent work~\cite{arxiv.2204.02095}, and we summarize their result in the following lemma.

\begin{lemma}[{\cite[Theorem 5.1]{arxiv.2204.02095}}]
    \label{lem:consistent hashing}
    For any $\lambda > 0$, there exists a (deterministic) hash $\phi :\R^d\to\R^d$ such that $\{\phi^{-1}(y):y\in \phi(\R^d)\}$ forms a $(\mu,\Gamma,\Lambda)$-sparse partition (see \Cref{def:sparse partition}) of $\R^d$ in Euclidean space, where $\Gamma= O(d)$ and $\Lambda = O(d\log d)$.
Furthermore, $\phi$ can be described using $O(d^2\log^2d)$ bits and one can evaluate $\phi(X)$ for any point $x\in\R^d$ in space $O(d^2\log^2d)$.
\end{lemma}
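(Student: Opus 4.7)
\textbf{Proof proposal for \Cref{lem:consistent hashing}.}
Since the statement is attributed to prior work, my plan is to reconstruct a deterministic grid-based construction of $\phi$, which is the standard route to consistent hashing in Euclidean space. The high-level idea is to overlay $T = d+1$ axis-aligned grids of cube side $s := \mu/\sqrt{d}$ (so that each cube has Euclidean diameter exactly $\mu$), where the $i$-th grid is obtained by shifting the canonical grid by the vector $\tfrac{i}{T}\cdot s\cdot \mathbf{1}$. For a point $x\in \R^d$, declare a grid $i$ to be \emph{good for $x$} if $x$ lies at $\ell_\infty$-distance at least $s/(2T)$ from the boundary of its cube in grid $i$; then define $\phi(x)$ to encode the smallest good $i$ together with the integer cube coordinates of $x$ in that grid. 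The map $\phi$ is well-defined, deterministic, and evaluable in $O(d)$ arithmetic operations, each on numbers of bit length $O(\log\Delta + \log d)$, which accounts for the claimed space bound (with a minor $\log^2 d$ overhead from encoding the shift family and integer coordinates).

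The first thing to verify is that a good grid always exists for every $x$: projecting to any single coordinate, $x$'s fractional position modulo $s$ can avoid the $T$ strips $\{[is/T - s/(2T),\ is/T + s/(2T)] : i\in [T]\}$ of total measure $s$ in at most $T-1$ of the grids, so by a union bound over the $d$ coordinates and the pigeonhole $T>d$, some grid $i$ is good for $x$. The diameter bound (property (a) of \Cref{def:sparse partition}) follows immediately: every preimage $\phi^{-1}(y)$ lies in a single grid cube, whose Euclidean diameter is $\mu$.

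The main obstacle, as I see it, is establishing the sparsity bound $\Lambda = O(d\log d)$ with $\Gamma = O(d)$, since the naive analysis gives an exponential-in-$d$ count. The plan is to exploit the structure of the assignment rule: fix $x$ and take the ball $B := \ball(x, \mu/\Gamma)$ with $\Gamma$ a sufficiently large multiple of $d$, so that $\mathrm{diam}(B) \le s/(2T)$. For each grid $i$, the image $\phi(B)$ restricted to points whose good grid is $i$ lies entirely within the at most two cubes of grid $i$ that meet $B$ along each coordinate. The crucial observation is that a point $y\in B$ is assigned to grid $i$ only if no earlier grid $i' < i$ is good for $y$, and whether $i'$ is good depends only on the coordinates of $y$ modulo $s$; since $B$ has $\ell_\infty$-diameter at most $s/(2T)$, the set of grids that are good for \emph{some} point of $B$ has controlled combinatorial structure. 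A careful bookkeeping across the $d$ coordinates then yields $\Lambda = O(d\log d)$ (rather than $2^d$), which I expect to be the technically delicate step.

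Finally, to obtain the space complexity for $\phi$, I would represent $\phi$ only by the shift family (determined by $d$, $T$, and $s$) and the integer coordinates of the containing cube, which fit in $O(d\log\Delta)$ bits; together with the encoding of the index $i\in[T]$ and a constant-size description of the cube-side arithmetic, the total representation and evaluation workspace fit within $O(d^2\log^2 d)$ bits as claimed. Since the construction is entirely deterministic and oblivious to the input, it plugs directly into the dynamic-streaming setup of \Cref{sec:streaming} without further modification.
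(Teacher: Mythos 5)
First, note that the paper does not prove this statement at all: it is imported verbatim from \cite[Theorem 5.1]{arxiv.2204.02095}, so the comparison is between your sketch and that cited construction. Your proposal has a genuine gap, and it is not where you expect it. With your assignment rule (map $x$ to the cube of the first grid in which $x$ is $\ell_\infty$-far from the boundary), the sparsity bookkeeping you defer is actually easy: if a set $B$ has $\ell_\infty$-diameter at most the margin, then for each grid $i$ all points of $B$ for which grid $i$ is good lie in a single cube of grid $i$ and receive the same hash value, so $B$ meets at most $T=d+1$ parts and $\Lambda\le d+1$ follows immediately. The real problem is the gap parameter $\Gamma$. Your cubes have side $s=\mu/\sqrt{d}$ and margin $s/(2T)=\mu/\bigl(2\sqrt{d}(d+1)\bigr)$, so a Euclidean ball fits inside the margin only if its diameter is $O\bigl(\mu/d^{3/2}\bigr)$; the claim that ``$\Gamma$ a sufficiently large multiple of $d$'' makes $\diam(B)\le s/(2T)$ is off by a $\sqrt{d}$ factor. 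This loss is not an artifact of your margin choice: for axis-aligned cubes the Euclidean diameter is $\sqrt{d}$ times the side length while the per-coordinate boundary margin can only be an $O(1/d)$ fraction of the side (each coordinate must rule out at most $O(1)$ of the $T$ shifted grids), so any shifted-cube scheme of this type yields $\Gamma=\Theta(d^{3/2})$, not $O(d)$. The cited construction achieves $\Gamma=O(d)$ precisely by carving with round (ball-like) cells over $O(d\log d)$ deterministic layers, whose surface-to-diameter tradeoff avoids the extra $\sqrt{d}$; this is also what produces $\Lambda=O(d\log d)$ and the $O(d^2\log^2 d)$-bit description.

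If one were willing to settle for your parameters, a $(\mu,\Theta(d^{3/2}),O(d))$-sparse partition would still support the overall architecture of \Cref{sec:streaming}, but it would not prove the lemma as stated, and it would degrade the quantitative bounds downstream: $\Gamma$ enters \Cref{lem:bounded size} as $(\Gamma/\lambda)^z$, so the $d^z$ and $(d/\eps)^{2z}$ terms in $W_1$ and $W_2$ of \Cref{thm:dynamic coresets} would worsen to $d^{3z/2}$-type factors. Separately, your space accounting mixes in $O(\log\Delta)$-bit coordinates, whereas the lemma's $O(d^2\log^2 d)$ description size concerns the hash $\phi$ on all of $\R^d$ and is independent of $\Delta$; this is minor, but as written the proposal neither matches the claimed parameters nor supplies the argument for them.
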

A consistent hash $\phi:\R^d\to\R^d$, obtained from \Cref{lem:consistent hashing} in a data-oblivious way, maps each point $x\in X$ to the \emph{bucket} $\phi(x)$, and these buckets can be used to create a partition. Formally, we can define a partition $\calP_{\phi}(X):=\{X\cap \phi^{-1}(y):y\in \phi(X) \}$ of $X$ based on $\phi$. The partition satisfies that $|\calP_{\phi}(X)|=|\phi(X) |$ and,
according to the diameter property in~\Cref{def:sparse partition}, $\calP_{\phi}(X)$ is $\lambda$-bounded.
In the following lemma, we demonstrate the existence of a small isolated set $F$ similar to~\Cref{claim:refined decopmosition}.

\begin{lemma}
    \label{lem:bounded size}
    For a consistent hash $\phi$ with a diameter bound $\lambda$ and a dataset $X\subseteq \R^d$, there exists a subset $F\subseteq X$ of $X$ with $|F|\le m + (\Gamma/\lambda)^z\cdot\OPT_z^{(m)}(X)$ such that $|\phi(X\setminus F)|\le k\Lambda$.
\end{lemma}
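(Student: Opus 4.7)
\medskip

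The plan is to exhibit $F$ explicitly using an optimal solution to the robust clustering, and then combine a Markov-style counting argument with the sparsity guarantee of consistent hashing. Let $C^\ast \subseteq \R^d$ be an optimal center set for $\kzmC$ on $X$ (so $|C^\ast|=k$), and let $L \subseteq X$ be a corresponding $m$-outlier set, i.e.\ $\cost_z(X \setminus L, C^\ast) = \OPT_z^{(m)}(X)$. Define the ``far inliers'' by
\[
G \;:=\; \{x \in X \setminus L : \dist(x, C^\ast) > \lambda/\Gamma\},
\]
and set $F := L \cup G$. Two things need to be verified: a size bound on $F$, and the bucket bound $|\phi(X \setminus F)| \le k\Lambda$.

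For the size bound, $|L| = m$ is immediate. For $G$, each $x \in G$ contributes at least $(\lambda/\Gamma)^z$ to $\cost_z(X \setminus L, C^\ast) = \OPT_z^{(m)}(X)$, so a straightforward averaging argument gives $|G| \le (\Gamma/\lambda)^z \cdot \OPT_z^{(m)}(X)$. Together this yields the claimed inequality
\[
|F| \;\le\; m + (\Gamma/\lambda)^z \cdot \OPT_z^{(m)}(X).
\]

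For the bucket bound, by construction every $x \in X \setminus F$ is an inlier with $\dist(x, C^\ast) \le \lambda/\Gamma$, hence $x$ lies in the Euclidean ball $B(c, \lambda/\Gamma)$ for some $c \in C^\ast$. Therefore
\[
\phi(X \setminus F) \;\subseteq\; \bigcup_{c \in C^\ast} \phi\bigl(B(c,\lambda/\Gamma)\bigr).
\]
By \Cref{lem:consistent hashing}, the preimages $\{\phi^{-1}(y):y \in \phi(\R^d)\}$ form a $(\lambda,\Gamma,\Lambda)$-sparse partition of $\R^d$, so for any point $c \in \R^d$ the ball $B(c,\lambda/\Gamma)$ intersects at most $\Lambda$ parts. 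Consequently $|\phi(B(c,\lambda/\Gamma))| \le \Lambda$ for each $c \in C^\ast$, and a union bound over the $k$ centers gives $|\phi(X \setminus F)| \le k\Lambda$.

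The statement is purely existential, so the only subtlety is lining up the sparsity definition (\Cref{def:sparse partition}), which controls balls around dataset points, with our use of balls around centers $c \in C^\ast \subseteq \R^d$ that need not belong to $X$. This is harmless because the consistent hashing of \Cref{lem:consistent hashing} partitions all of $\R^d$, so the sparsity property transfers verbatim to balls around arbitrary points in $\R^d$; no further argument is required beyond the two counting steps above.
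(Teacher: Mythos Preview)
Your proof is correct and is essentially the same as the paper's: the paper constructs the identical set $F$ (outliers plus far inliers at threshold $\lambda/\Gamma$) by invoking \Cref{lem:decomposition} and \Cref{claim:refined decopmosition} with the optimal $C^\ast$, and then bounds $|\phi(X\setminus F)|$ via the same sparsity argument applied to the $k$ clusters around $C^\ast$. Your presentation is simply more self-contained, unwinding those references directly; your remark that the sparsity guarantee of \Cref{lem:consistent hashing} applies to balls centered anywhere in $\R^d$ (since the partition is of all of $\R^d$) is exactly the point needed.
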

\begin{proof}
    Let $\calP=\calD\cup \calS$ be the decomposition of $X$ computed using~\Cref{lem:decomposition} with $C^*$ being the optimal solution to \kzmC and $\lambda':=\lambda / \Gamma$.
According to~\Cref{claim:refined decopmosition}, $\calS$ can be further decomposed into $\calS_1$ and $\calS_2$ such that $|\calD| + |\calS_1|\le k$.
    Moreover, let $F:=\bigcup_{P\in\calS_2} P$, it holds that $|F|\le m + \OPT_z^{(m)}(X)\cdot \lambda'^{-z} = m + (\Gamma/\lambda)^z\cdot\OPT_z^{(m)}(X)$. 
    
    Since for every $P\in (\calD\cup \calS_1)$, $\diam(P)\le \lambda'=\lambda/\Gamma$, which implies that $|\phi(P)|\le \Lambda$ due to~\Cref{lem:consistent hashing}. Therefore, we have 
    $|\phi(X\setminus F)|\le \sum_{P\in (\calD\cup\calS_1)}|\phi(P)|\le k\Lambda$.
\end{proof}

\paragraph{Extract Isolated Points $F$} 
After fixing a mapping $\phi$, the next step is to extract most of the isolated points, for which we have~\Cref{lem:isolated}.
Here we aim to achieve a slightly different goal: extracting a set $G$ from $X$ such that $|\phi(X\setminus G)|$ approximates $|\phi(X\setminus Y)|$ for any fixed set $Y$ with $|Y| \le T$ (where $T$ is a parameter to be determined). This result suffices for our purposes because, as guaranteed by~\Cref{lem:bounded size}, there exists a small-sized set $F$ with $|\phi(X\setminus F)| \le k\Lambda = O(kd\log d)$. Therefore, by carefully choosing $T$, we can obtain a set $G$ with nearly the same result: $|\phi(X\setminus G)| \le 2k\Lambda$.

\begin{restatable}[Extract isolated points]{lemma}{ExtractIsolatedPoint}
    \label{lem:isolated}
    There exists a randomized algorithm that, given $0<\delta<1$, an integer $T >0$, a hash $\phi:\R^d\to\R^d$ such that the value for any point in $\R^d$ can be evaluated in space $\poly(d)$ and a dataset $X\subseteq [\Delta]^d$ presented as a dynamic stream, uses $\tilde O\left(T\poly(d\log(\delta^{-1}\Delta))\right)$ space to sample a random subset $G\subseteq X$ of size $\tilde O\left(T\poly(d\log(\delta^{-1}\Delta))\right)$ such that, for any subset $Y\subseteq X$ with $|Y|\le T$, $\Pr\left[|\phi(X\setminus G)|\le 2\cdot |\phi(X\setminus Y)|\right]\ge 1-\delta$.
\end{restatable}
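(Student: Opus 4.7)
The plan is to build on the two-level $\ell_0$-sampler tool of~\cite{arxiv.2204.02095}, combined with a structural observation that reduces the lemma to covering every ``small'' bucket of $X$ under $\phi$ with a single data-oblivious set $G$.

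First I would establish a structural reduction. For any $Y\subseteq X$ with $|Y|\le T$, the set $B_Y:=\phi(X)\setminus \phi(X\setminus Y)$ of buckets emptied by removing $Y$ satisfies $B_Y=\{y\in\phi(X):X\cap\phi^{-1}(y)\subseteq Y\}$. Hence their union $F_Y:=\bigcup_{y\in B_Y}(X\cap\phi^{-1}(y))$ has $|F_Y|\le |Y|\le T$, satisfies $\phi(X\setminus F_Y)=\phi(X\setminus Y)$, and every $y\in B_Y$ has $|X\cap \phi^{-1}(y)|\le T$. It therefore suffices to build a single $G$ that, with probability $1-\delta$, fully contains every bucket of $X$ of size at most $T$; for such a $G$, $B_Y$ is also emptied by $G$ for every $Y$ of size $\le T$, giving even $|\phi(X\setminus G)|\le |\phi(X\setminus Y)|$ (tighter than the claimed factor $2$, so the factor $2$ provides slack in case we only recover small buckets approximately).

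To realize this covering, I would use a multi-scale sparse-recovery sketch. For each scale $i=0,1,\dots,L:=\lceil\log(T\Delta^d)\rceil$, use $O(\log(\delta^{-1}\Delta^d))$-wise independent hashing to include each $x\in [\Delta]^d$ with probability $2^{-i}$ consistently across stream updates, obtaining a virtual substream $X_i\subseteq X$. On each $X_i$ maintain a sparse-recovery sketch keyed by the pair $(\phi(x),x)$, with sparsity $K=\tilde O(T)$ and failure probability $\delta/(LT)$; each such sketch uses $\tilde O(T)\poly(d\log(\delta^{-1}\Delta))$ bits via standard linear-sketch constructions. For any bucket $y$ with $n_y:=|X\cap\phi^{-1}(y)|\le T$, consider the scale $i(y)$ satisfying $2^{i(y)}\in [n_y,2n_y]$: in $X_{i(y)}$ the bucket $y$ contains $O(1)$ points in expectation; the at most $T$ small buckets each contribute $O(1)$ expected points; and larger buckets survive subsampling with correspondingly smaller rate so that their contribution to the distinct-bucket count is also $O(T)$ by Chernoff. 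Hence the sparse-recovery sketch at scale $i(y)$ receives a stream whose support stays within its capacity $K$ with high probability and returns all points of $X\cap\phi^{-1}(y)$.

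Taking $G$ to be the union of all recovered points across the $L$ scales gives $|G|$ and the total space usage both bounded by $\tilde O(T)\poly(d\log(\delta^{-1}\Delta))$, and the success probability is $1-\delta$ by a union bound over the at most $T$ small buckets and the $L$ scales. The hard part will be the scale-calibration in the middle step: at scale $i(y)$ the surviving bucket count must truly stay within the sketch capacity $K$ despite occasional surviving contributions from large buckets, which I expect to handle by choosing $K=\tilde O(T)$ with enough logarithmic slack and by using a limited-independence hash of bit-length $\poly(d\log(\delta^{-1}\Delta))$ so that the subsampling indicators and the $L$ sparse-recovery sketches jointly fit within the claimed streaming space budget.
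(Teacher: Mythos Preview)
Your structural reduction is logically correct as a sufficient condition, but it is too strong to be achievable within the stated size and space budget. The requirement that $G$ fully contain every bucket of size at most $T$ can force $|G|$ to be as large as $|X|$: take $X$ consisting of $n\ge 2T$ points each in its own bucket under $\phi$, so every bucket has size $1\le T$ and your $G$ must equal $X$. The lemma, however, is trivial here---for any $Y$ with $|Y|\le T$ one has $|\phi(X\setminus Y)|\ge n-T$, and even $G=\emptyset$ satisfies $|\phi(X\setminus G)|\le n\le 2(n-T)$. Your later assertion of ``the at most $T$ small buckets'' conflates the buckets of $B_Y$ for a \emph{fixed} $Y$ (whose union has at most $T$ points) with the set of \emph{all} buckets of size at most $T$, which can be arbitrarily many. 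There is a second, independent issue: subsampling each point with probability $2^{-i(y)}\approx 1/n_y$ leaves only $O(1)$ expected survivors in bucket $y$, so sparse recovery on the substream $X_{i(y)}$ can at best return $X_{i(y)}\cap\phi^{-1}(y)$, not all of $X\cap\phi^{-1}(y)$ as you claim.

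The paper's proof avoids both problems by never trying to cover all small buckets. It repeatedly draws points via the two-level $\ell_0$-sampler of~\cite{arxiv.2204.02095} (uniform over nonempty buckets, then uniform within the chosen bucket), sampling \emph{without replacement}, and takes $G$ to be the set of $O(T\log(T\delta^{-1}))$ samples. The analysis is adaptive: fix $Y$, and color the buckets entirely contained in $Y$ blue. As long as $|\phi(X\setminus G)|>2|\phi(X\setminus Y)|$, blue buckets outnumber the rest among the remaining buckets, so the next sample is blue with probability at least $1/2$; since blue points number at most $|Y|\le T$, they are exhausted after $O(T\log(T\delta^{-1}))$ samples with probability $1-\delta$, at which point the desired inequality holds. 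The factor $2$ in the lemma is exactly what makes this potential argument go through without needing $G$ to cover every small bucket.
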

The proof of~\Cref{lem:isolated}, provided in Appendix~\ref{apd:isolated}, relies on a two-level sampling procedure as follows:
we first choose a non-empty bucket $y\in \phi(X)$ uniformly at random (u.a.r.), then choose u.a.r. a point $x\in \phi^{-1}(y)\cap X$. 
The set $G$ is then constructed by repeating the two-level sampling to sample points from $X$ \emph{without replacement}.
Let $Y$ be a set such that $|Y|\le  T$, for simplicity, we assume that $\phi(Y)\cap \phi(X\setminus Y)=\emptyset$. 
The high-level idea is that, if $|\phi(Y)|\ge |\phi(X\setminus Y)|$, meaning that $Y$ occurs the majority of buckets, then the two-level sampling will, with constant probability, return a point that is from $Y$. 
Thus, by repeating two-level sampling without replacement enough times to obtain $G$, it is likely that $G$ contains most of the points from $Y$, resulting in $|\phi(Y\setminus G)|\le  |\phi(X\setminus Y)|$. 
Finally, we have $|\phi(X\setminus G) |\le |\phi(X\setminus Y)| + |\phi(Y\setminus G )|\le 2|\phi(X\setminus Y)|$. 
It remains to implement the two-level sampling \emph{without replacement} in the streaming setting, for which we employ a subroutine of two-level $\ell_0$-sampler proposed in~\cite[Lemma 3.3]{arxiv.2204.02095}.
\paragraph{Extract Sparse Subset}
We then present an algorithm that, given a mapping $\phi$ and a dataset $X$, if $|\phi(X)|$ is bounded, then the algorithm recovers all the parts of $\calP_{\phi}(X)$ with small sizes from the data stream. The algorithm is an application of a two-level extension of the sparse recovery (from a frequency vector); see e.g.~\cite{Cormode06Combinatorial}, and we provide a proof in Appendix~\ref{sec:find light parts}.
\begin{restatable}[Indentify Sparse Subsets]{lemma}{IndentifySparseSubsets}
    \label{lem:find light parts}
    There exists a streaming algorithm that, given integers $0<\delta<1$, $ N, M>0$, a mapping $\phi:\R^d\to\R^d$ such that the value for any point in $\R^d$ can be evaluated in space $\poly(d)$, and a dataset $X\subseteq [\Delta]^d$ represented as a dynamic stream, returns a collection of subsets of $X$ or $\perp$. 
If $|\calP_{\phi}(X)|\le  N$, the algorithm returns
    $\{P\in\calP_{\phi}(X): |P|\le  M\}$; otherwise, it returns $\perp$. The algorithm uses space $\tilde O\left( N M\cdot \poly(d\log(\delta^{-1}\Delta))\right)$ and fails with probability at most $\delta$.
\end{restatable}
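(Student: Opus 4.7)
The plan is to combine two dynamic sparse-recovery structures running in parallel over the stream: a top-level structure that identifies the non-empty buckets of $\phi$ together with their sizes, and a bottom-level structure that recovers the actual points inside each small bucket. This is a natural two-level analogue of the combinatorial sparse recovery of~\cite{Cormode06Combinatorial}.

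For the top level, I would maintain an $N$-sparse recovery structure over the vector $v \in \R^{\phi([\Delta]^d)}$ with $v_y = |\phi^{-1}(y) \cap X|$. For each stream update at $x$, evaluate $\phi(x)$ in $\poly(d)$ space (by hypothesis) and issue the corresponding $\pm 1$ update at coordinate $\phi(x)$. Standard BCH-based sparse recovery exactly recovers $v$ and its support whenever $\|v\|_0 \le N$, and otherwise reports ``too dense''; this uses $\tilde O(N) \cdot \poly(d\log(\delta^{-1}\Delta))$ bits because each coordinate label is a bucket identifier encodable in $\poly(d\log\Delta)$ bits. If the top level reports ``too dense'', we simply output $\perp$, which is correct since $|\calP_{\phi}(X)|=|\phi(X)|>N$ in that case.

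For the bottom level, draw $L = O(\log(N/\delta))$ pairwise-independent hashes $h_1,\ldots,h_L : \phi([\Delta]^d) \to [T]$ with $T = 8N$. For every pair $(l,j) \in [L]\times [T]$ maintain a $2M$-sparse recovery structure $R_{l,j}$ (with failure detection) over vectors indexed by points of $[\Delta]^d$, and on each stream update at $x$ forward the update to $R_{l,\,h_l(\phi(x))}$. After the stream, iterate over every bucket $y$ reported by the top level with $v_y \le M$; for each such $y$, scan the rows $R_{1,h_1(y)},\ldots,R_{L,h_L(y)}$ and accept the first row whose recovered point set $P$ passes the verification $|P|=v_y$ and $\phi(x)=y$ for all $x\in P$. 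This verification guarantees that only ``isolated'' rows, where $y$ is the sole non-empty bucket mapping to that row, are ever accepted; an isolated row with $v_y \le M$ is trivially handled by the $2M$-sparse recovery.

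The main obstacle will be ensuring isolation simultaneously for all at most $N$ light buckets while only paying $\tilde O(NM)$ space. Fixing a bucket $y$ in the support, pairwise independence gives expected collisions of at most $(N-1)/T \le 1/8$ per hash, so by Markov $y$ fails to be isolated in a single $h_l$ with probability $\le 1/8$ and in all $L$ hashes with probability $\le 8^{-L} \le \delta/(2N)$; a union bound over at most $N$ light buckets bounds the total isolation failure by $\delta/2$. Setting the per-instance sparse-recovery failure parameter to $\delta/\poly(NL)$ (at only an $O(\log(NL/\delta))$ multiplicative space cost) and union-bounding all recovery failures contributes at most $\delta/2$, giving total failure $\le \delta$. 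The overall space is $\tilde O(N) + L\cdot T \cdot \tilde O(M) = \tilde O(NM)$ counters, each storing a label from $[\Delta]^d$ and/or $\phi([\Delta]^d)$, i.e., $\tilde O(NM)\cdot \poly(d\log(\delta^{-1}\Delta))$ bits in total, as required.
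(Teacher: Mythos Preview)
Your proposal is correct and follows essentially the same two-level strategy as the paper: a top-level sparse recovery to enumerate the at most $N$ non-empty buckets (and output $\perp$ when $|\calP_\phi(X)|>N$), combined with $O(\log(N/\delta))$ pairwise-independent hashes into $\Theta(N)$ cells, each equipped with an $O(M)$-sparse recovery structure, so that every light bucket is isolated in some cell with high probability and can be read off exactly. The only cosmetic differences are your choice of constants ($T=8N$ cells and $2M$-sparsity versus the paper's $2N$ and $M$) and your explicit per-row verification $\phi(x)=y$ for all recovered $x$, which the paper handles implicitly by conditioning on the isolation event before reading the cell.
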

\begin{remark}
    A straightforward corollary of~\Cref{lem:bounded size} is that $|\calP_\phi(X)| = |\phi(X)|\le k\Lambda + m + (\Gamma/\lambda)^z\cdot\OPT_z^{(m)}(X)$. Hence, one might consider the naive approach of directly identifying all the sparse parts in $\calP_{\phi}(X)$.
However, we note that in the streaming setting, verifying whether a part $P\in\calP_\phi(X)$ contains more than $(1+\eps^{-1})m$ points may require $\Omega(m\eps^{-1})$ space. This would result in a factor of $m^2$ in the total space, which is not acceptable. 
\end{remark}

\paragraph{Putting Things Together} 
Now we are ready to prove~\Cref{thm:dynamic coresets} by combining all aforementioned subroutines together.
Let us assume momentarily that we have a guess $\GOPT$ of $\OPT_z^{(m)}(X)$, and we will remove this assumption later.
For simplicity, let $\calA$ and $\calB$ denote the algorithm of~\Cref{lem:isolated} and the algorithm of~\Cref{lem:find light parts}, respectively. Moreover, let $\calC$ denote a streaming algorithm that can construct an $\eps$-coreset for \kzC of a dataset from $[\Delta]^d$ presented as a dynamic stream, using space $W(d,\Delta,k,\eps^{-1},\delta^{-1})$ and failing with probability $\delta$.
Then, our algorithm is described in~\Cref{alg:streaming coreset}.

\begin{algorithm}
    \caption{Coreset construction for \kzmC of $X$ presented by a dynamic stream}
    \label{alg:streaming coreset}
    \begin{algorithmic}[1]
        \Require{a guess $\GOPT$ of $\OPT_z^{(m)}(X)$}
        \State construct a consistent hash $\phi$ with diameter bound $\lambda:=(z+1)^{-\xi}\cdot \left(\frac{\eps\cdot \GOPT}{m} \right)^{1/z}$ for sufficiently large constant $\xi>0$ using~\Cref{lem:consistent hashing}
        \State run $\calA$ on inputs $\delta, T := 2^{O(z\log(z+1))}\cdot md^z\eps^{-1}, \phi$ and the dynamic stream that represents $X$ to obtain a subset $G\subseteq X$.
\State run $\calB$ on inputs $\delta, N := O(kd\log d), M := (1+\eps^{-1})m, \phi$ and a dynamic stream that represents $X\setminus G$ \label{alg line:calB}
        \State if $\calB$ returns $\perp$, the algorithm returns $\perp$; otherwise, let $\calS\gets$ the output of $\calB$ and let $X_{\calS}\gets\bigcup_{P\in\calS}P$
\State run $\calC$ on a dynamic stream that represents $X_{\calD}:=X\setminus (G\cup X_{\calS})$ to obtain an $\eps$-coreset $S_{\calD}$ of $X_{\calD}$ \label{alg line:calC}
        \\
        \Return $S_{\calD}\cup X_{\calS}\cup G$ \Comment{algorithm fails if either $\calB$ or $\calC$ fails}
    \end{algorithmic}
\end{algorithm}

Clearly,~\Cref{alg:streaming coreset} uses space that is the cumulative space of $\calA$, $\calB$, and $\calC$, amounting to $2^{O(z\log z)}\cdot \tilde O\left((k+d^z)m\eps^{-1}\cdot \poly(d\log\Delta)\right) + W(d,\Delta,k,\eps^{-1},\delta^{-1})$.

~\Cref{alg:streaming coreset} can be implemented in one pass. 
Specifically, We initially run all three algorithms $\calA$, $\calB$ and $\calC$ on the input stream. Once the stream ends, we obtain $G$ from $\calA$, and proceed to run $\calB$ and $\calC$ on a stream composed of deletions of points in $G$. This is equivalent to running the algorithms on a stream representing $X\setminus G$, hence implementing Line~\ref{alg line:calB}. By applying the same approach, we can implement Line~\ref{alg line:calC}.

\subparagraph{Error Analysis} Let us first assume that the guess $\GOPT$ satisfies that $\OPT_z^{(m)}(X)/(z+1)\le \GOPT\le \OPT_z^{(m)}(X)$, and we will remove this assumption later.
\Cref{alg:streaming coreset} fails to return a coreset if and only if $\calB$ or $\calC$ fails, or if $\calB$ returns $\perp$. 
Applying union bound, both algorithms $\calB$ and $\calC$ succeed simultaneously with probability $1-2\delta$.
Furthermore, by~\Cref{lem:bounded size}, since $\GOPT\ge \OPT_z^{(m)}(X)/(z+1)$, there exists a set $F$ with $|F|\le 2^{O(z\log(z+1))}\cdot md^z\eps^{-1}$ such that $|\phi(X\setminus F)|\le O(kd\log d)$. In this case, $G$ returned by $\calA$ also satisfies that $|\phi(X\setminus G)|\le O(kd\log d)$ with probability $1-\delta$. 
Given this condition and the success of $\calB$, we can conclude, according to~\Cref{lem:find light parts}, that $\calB$ does not return $\perp$.
Consequently, the probability of that~\Cref{alg:streaming coreset} succeeds in returning a coreset is at least $1-3\delta$.

Conditioned on the success of~\Cref{alg:streaming coreset}, we have, as guaranteed by~\Cref{lem:find light parts}, that the set $X_{\calD}:=X\setminus (G\cup X_{\calS})$ satisfies the property that, for every $P\in \calP_\phi(X)$ with $P\cap X_{\calD}\neq\emptyset$, $|P\cap X_{\calD}|\ge (1+\eps^{-1})m$. 
Hence by~\Cref{thm:main1}, $S_{\calD}$ is also an $\left(O(\eps),O(\eps)\cdot \GOPT\right)$-coreset of $X_{\calD}$ for \kzmC, implying that $S:=G\cup X_{\calS}\cup S$ is an $\left(O(\eps),O(\eps)\cdot \GOPT\right)$-coreset of the original dataset $X$ for \kzmC due to the composability of the coreset (\Cref{fact:composability}).
Since the guess $\GOPT$ further satisfies that $\GOPT\le \OPT_z^{(m)}(X)$, $S$ becomes an $O(\eps)$-coreset for \kzmC. It suffices to scale both $\eps$ and $\delta$ by a constant factor.

\subparagraph{Removing Assumption of Knowing $\GOPT$}
Finally, we remove the assumption of knowing $\GOPT$ which satisfies that $\OPT_z^{(m)}(X)/(z+1)\le \GOPT\le \OPT_z^{(m)}(X)$ in advance.
We assume without loss of generality that $X$ contains more than $k + m + 1$ distinct points from $[\Delta]^d$, since otherwise one could use a sparse recovery structure to fully recover all the data points.
Then, we have $1\le \OPT^{(m)}_z(X)\le n\cdot (\sqrt d \Delta)^z$.
We run in parallel $\tau:=\lfloor \log_{z+1}(n\cdot (\sqrt d \Delta)^z)\rfloor + 1$ instances $\calD_0,\dots,\calD_{\tau-1}$ of~\Cref{alg:streaming coreset} (resulting in a space increase by a factor of only $\tau \le O(zd\log\Delta)$), where the $\calD_i$ tries $\GOPT_i:=(z+1)^i$. 
We set the failure probability of each instance to be $\delta/\tau$.
Then, with probability $1-\delta$, all instances succeeds, i.e., return a coreset or $\perp$.
We select the output of the $\calD_j$, where $j\ge 0$ is the smallest index such that $\calD_j$ does not return $\perp$, as the final output. 
Condition on that all instances do not fail, 
let $i^* := \lfloor \log_{z+1}(\OPT_z^{(m)}(X)) \rfloor$, we know that the instance $D_{i^*}$ must return a coreset.
Hence, we have $i\le i^*$, which implies that $\GOPT_i\le \OPT_z^{(m)}(X)$, and thus, $\calD_i$ returns an $\eps$-coreset.

      \subsection{$W_2\cdot\poly(d\log(\delta^{-1}\Delta))$-Space: Streaming Implementation of~\Cref{thm:reduction2}}
\label{sec:streaming alg}
We present the streaming implementation of \Cref{thm:reduction2} and achieve the $W_2\cdot \poly(d\log\Delta)$ space complexity.
As an important step in \Cref{thm:reduction2}, one needs to first run \Cref{alg:size-preserving} to turn a vanilla coreset to a coreset with a weaker size-preserving property.
Hence, we also start with the streaming implementation of this \Cref{alg:size-preserving}.
Then similar to \Cref{thm:reduction2}, we calibrate its weight to ensure it is truly size-preserving,
and we can conclude that this resultant coreset works for \kzmC by \Cref{cor:second}.
We would reuse the gadgets of streaming algorithm developed in \Cref{sec:streaming coreset} without mentioning how they are implemented again.

\paragraph{Streaming Implementation of~\Cref{alg:size-preserving}}
Recall that the input for~\Cref{alg:size-preserving} consists of a dataset $X$, which is presented as a dynamic point stream in this case, along with two parameters $\mu\geq 0$ and $k'\geq 1$. 
We would then discuss how each line of \Cref{alg:size-preserving} is implemented.

\subparagraph{Line~\ref{alg line:sparse partition}} 

In the streaming setting, we cannot expect to explicitly compute and store a sparse partition of the dataset. Therefore, instead, we compute a consistent hashing $\phi$ using~\Cref{lem:consistent hashing} with a diameter bound $\mu$ at the very beginning, which takes $\poly(d\log\Delta)$ space to store the description of $\phi$.
According to~\Cref{lem:consistent hashing} and~\Cref{def:sparse partition}, we know that the partition $\calP_\phi(X) = \{\phi^{-1}(y)\cap X: y\in \phi(X)\}$ is a $(\mu,\Lambda,\Gamma)$-sparse partition with $\Lambda = O(d)$ and $\Gamma = O(d\log d)$. 
Moreover, recall from~\Cref{lem:consistent hashing} that for any give point $x\in \R^d$, the value of $\phi(x)$ can be computed using space $\poly(d)$. 
Hence, the image $\phi([\Delta]^d)$ has a size of at most $\Delta^d \cdot 2^{\poly(d)}\le \Delta^{\poly(d)}$, since we can encode the input point and the computation process of $\phi(x)$ by a binary string of length $O(d\log\Delta) + \poly(d)$. 
Hence, we assume that the image of $\phi$ is $[\Delta^{\poly(d)}]$ instead of $\R^d$.

\subparagraph{Line~\ref{alg line:mapping} and Line~\ref{alg line:vanilla coreset}}
The main difficulty arises in implementing Line~\ref{alg line:mapping} and Line~\ref{alg line:vanilla coreset}, where we map the input points into an $O(z\eps^{-1}\cdot \diam(X)\cdot |X|^{1/z})$-separated $|\calP_\phi(X)|$-duplicated space $M^\dup$ of the discrete Euclidean space and construct a vanilla coreset on $M^\dup$.

To accomplish this in the streaming setting, we prove in~\Cref{lem:duplication for Euclidean case} that it suffices to map each point $x$ to some $(x,\phi(x)\cdot w)\in \R^{d+1}$ where $w = O(z\eps^{-1}\cdot \diam(X)\cdot |X|^{1/z})$, and then construct a vanilla coreset on $\R^{d+1}$.
Notice that $\diam(X)\le d\Delta$ and $|X|\le \Delta^d$, we pick $w':= z\eps^{-1}\cdot \Delta^{c\cdot d}$ for sufficiently large constant $c\ge 1$. 
Then, our implementation simply converts each insertion/deletion of a point $x$ to the insertion/deletion of the point $(x,\phi(x)\cdot w')$, and feeds these resulting insertions/deletions to a streaming algorithm for constructing an $O(\eps)$-coreset for \tzC{k'}.
Since $\phi(x)\cdot w' \le z\eps^{-1}\cdot \Delta^{\poly(d)}$, we have $(x,\phi(x)\cdot w')\in [z\eps^{-1}\cdot \Delta^{\poly(d)}]^{d+1}$. Hence, the implementation of Line~\ref{alg line:mapping} and Line~\ref{alg line:vanilla coreset} uses space $W\left(d+1, z\eps^{-1}\Delta^{\poly(d)}, k', O(\eps^{-1}),O(\delta^{-1})\right)$.

\subparagraph{Line~\ref{alg line:preimage}} 
Line~\ref{alg line:preimage} can be directly implemented after the stream ends. Hence, we complete the implementation of~\Cref{alg:size-preserving} with a space complexity of $W(d+1,z\eps^{-1}\Delta^{\poly(d)},k',O(\eps^{-1}), O(\delta^{-1})) + \poly(d\log\Delta)$. We denote by $\calA$ the streaming version of~\Cref{alg:size-preserving}.

\paragraph{Streaming Implementation of~\Cref{thm:reduction2}} 

Similar to the steps in \Cref{sec:streaming coreset}, we assume that we have a guess $\GOPT = \Theta\left(\OPT_z^{(m)}(X)\right)$. Set $\mu:=2^{-O(\log(z+1))}\cdot \eps\cdot \left(\frac{\GOPT}{m} \right)^{1/z}$, and let $\lambda:= O(\frac{\eps\mu}{z\Gamma}) = 2^{-O(\log(z+1))}\cdot \eps^{2}\cdot \left(\frac{\GOPT}{m} \right)^{1/z}\cdot d^{-1}$. 
Let $\phi$ and $\phi'$ be two consistent hashing with diameter bounds $\mu$ and $\lambda$, respectively.
According to~\Cref{lem:bounded size}, there exists a set $F\subseteq X$ with $|F|\le 2^{O(z\log z)}\cdot O(m(d/\eps)^{2z})$ such that $|\phi'(X\setminus F)|\le k\Lambda$. 
Therefore, we first apply the algorithm of~\Cref{lem:isolated} to extract $G\subseteq X$ such that $|\phi'(X\setminus G)|\le 2k\Lambda$ using space $2^{O(z\log z)}\cdot\tilde O(m(d/\eps)^{2z})\cdot \poly(d\log(\delta^{-1}\Delta))$. 
Then we have $X\setminus G$ admits a $\lambda$-bounded partition of size at most $2k\Lambda$.
Moreover, for every $y\in \phi'(X)$, since $\diam(\phi'^{-1}(y))\le \lambda\le \mu/\Gamma$, we have that $|\phi(\phi'^{-1}(y))|\le \Lambda$.
This implies that $|\phi(X\setminus G)|\le |\phi'(X\setminus G)|\cdot \Lambda \le 2k\Lambda^2$.

Set $k' = (k + 2k\Lambda^2 + 2k\Lambda)\Lambda = k\poly(d)$. 
We run $\calA$, the streaming version of~\Cref{alg:size-preserving}, on input consisting of a stream representing $X\setminus G$, $\mu$, and $k'$.
The implementation of running $\calA$ on a stream representing $X\setminus G$ is the same as that in the first algorithm of~\Cref{thm:dynamic coresets}. Specifically, we first run $\calA$ and an algorithm identifying $G$ in parallel. After the stream ends, we continue to run $\calA$ on a stream consisting of deletions of points in $G$.
By doing so, we obtain a weighted set $S$ such that $S$ is an $\eps$-coreset of $X\setminus G$ for \tzC{k+2k\Lambda^2} and is nearly size-preserving with respect to $\calP_\phi(X\setminus G)$ with $|\calP_{\phi}(X\setminus G)|\le 2k\Lambda^2$. This step uses space $W(d+1,z\eps^{-1}\Delta^{\poly(d)},k\poly(d),O(\eps^{-1}),O(\delta^{-1})) + \poly(d\log(\delta^{-1}\Delta))$.

In addition, we run a sparse recovery algorithm (\Cref{lem:sparse recovery}) in parallel with $\calA$ on a stream representing $\phi(X\setminus G)$, which uses space $\tilde O(k\cdot \poly(d\log(\delta^{-1}\Delta)))$ to return the frequencies of each $y \in \phi(X\setminus G)$ if $|\phi(X\setminus G)| \leq 2k\Lambda^2$ and fails with a probability of at most $\delta$.

Then, we can calibrate the weight of $S$ so that $S$ is exactly size-preserving with respect to $\calP_\phi(X\setminus G)$. By~\Cref{cor:second}, we have that $S$ is an $O(\eps)$-coreset of $X\setminus G$ for \kzmC. 
Finally, by composability of coresets, $S\cap G$ is the desired $O(\eps)$-coreset of $X$ for \kzmC. It remains to scale $\eps$ by a constant. The overall space is 
\begin{equation*}
    2^{O(z\log z)}\cdot\tilde O(m(d/\eps)^{2z})\cdot \poly(d\log(\delta^{-1}\Delta))+ W\left(d+1, z\eps^{-1}\Delta^{\poly(d)}, k\poly(d), O(\eps^{-1}), O(\delta^{-1})\right).
\end{equation*}
We complete the proof by removing the assumption of knowing $\GOPT$ using a similar approach as in the first algorithm of~\Cref{thm:dynamic coresets}, which needs to replace $\delta$ with $\delta/O(zd\log\Delta)$ and results in a space increase by a $O(zd\log\Delta)$ factor.

\bibliographystyle{alphaurl}
\bibliography{ref}

\begin{appendices}
     \appendixpage
     \section{Facts about Coresets}
\label{sec:composability}
\ComposabilityCoresets*

\begin{proof}
    For any $k$-point center set $C\subseteq V$ and real number $0\le h\le m$, we prove the direction $\cost_z^{(h)}(S_X\cup S_Y,C)\le (1+\eps)\cdot \cost_z^{(h)}(X\cup Y,C) + \eta_1 + \eta_2$; while for the reverse direction, $\cost_z^{(h)}(X\cup Y,C)\le (1+\eps)\cdot \cost_z^{(h)}(S_X\cup S_Y,C) + \eta_1 + \eta_2$, the proof is almost the same.
Let $h_1,h_2\ge 0$ such that $h_1+h_2 = h$ and $\cost_z^{(h_1)}(S_X,C) + \cost_z^{(h_2)}(S_Y,C) = \cost_z^{(h)}(S_X\cup S_Y, C)$. Then we have 
    \begin{align*}
        \cost_z^{(h)}(X\cup Y,C)\quad\le&\quad \cost_z^{(h_1)}(X,C) + \cost_z^{(h_2)}(Y,C)\\
        \le&\quad (1+\eps)\cdot \left(\cost_z^{(h_1)}(S_X,C) + \cost_z^{(h_2)}(S_Y,C) \right) + \eta_1+\eta_2\\
        =&\quad (1+\eps)\cdot \cost_z^{(h)}(S_X\cup S_Y,C) + \eta_1+\eta_2,
    \end{align*}
    where the first step is due to the optimality of $\cost_z^{(h)}$ and the second follows from the coreset guarantee of $S_X$ and $S_Y$.
\end{proof}

\ASufficientCondition*
\begin{proof}
    For every $0\le h\le m$, by definition of robust clustering, we have 
    \begin{equation}
        \cost_z^{(h)}(X,C) = \min_{a_1,\dots,a_t\ge 0:\sum_{i=1}^t a_i = h} \sum_{i=1}^t\cost_z^{(a_i)}(X_i,C),
    \end{equation}
    and the similar holds for $\cost_z^{(h)}(S,C)$. 
    We prove the direction $\cost_z^{(h)}(S,C)\le (1+\eps)\cdot \cost_z^{(h)}(X,C)$, and the proof for the other direction follows similarly.
    
    Let $l_1,\dots l_t\ge 0$ with $\sum_{i=1}^t l_i=h$ be real numbers such that 
    $\cost_z^{(h)}(X,C) = \sum_{i=1}^t \cost_z^{(l_i)}(X_i,C)$. Then we have 
    \begin{equation}
        \cost_z^{(h)}(S,C) \le \sum_{i=1}^t \cost_z^{(l_i)}(S_i,C)\le (1+\eps)\cdot\sum_{i=1}^t\cost_z^{(l_i)}(X_i,C) = (1+\eps)\cdot \cost_z^{(h)}(X,C).
    \end{equation}
    The claim follows.
\end{proof}

      \section{Separated Duplication of Various Metric Families}
\label{sec:application}
In this section, we discuss how \Cref{thm:reduction2} can be applied in various metric spaces. The main challenge lies in demonstrating that there exists a $w$-separated $h$-duplication $M^\dup_{h,w}$ of $M$ has a ``dimension'' that is almost equal to that of $M$ for any $w$ and $h$. We accomplish this on a case-by-case basis, examining metric spaces for which the construction of coresets is well-studied, including Euclidean space, doubling metric spaces, and graph metrics.

Without loss of generality, in the following presentation, we assume that both the coreset size $N$ and the runtime $T$ are monotonic with respect to $n,k,\eps^{-1}$.

\subsection{$\ell_p$ Metrics}

We begin by discussing the $\ell_p$ metric space for $p\ge 1$, which includes Euclidean space by setting $p=2$. The $\ell_p$ metric space $M=(\R^d, \ell_p)$ is defined on $\R^d$, with distance function computed using $\ell_p$ norm. 
Coresets for clustering have been extensively studied in $\ell_p$ metric spaces, especially in Euclidean space. The state-of-the-art coreset construction has already achieved a size bound of $\text{poly}(k\eps^{-1})$, which is independent of the input size and dimension (see, e.g.,~\cite{Cohen-addad2021New}). 

To apply \Cref{thm:reduction2} to $\ell_p$ metric spaces $M=(\R^d, \ell_p)$, we demonstrate that for any $h$ and $w$, there exists a $w$-separated $h$-duplication of $M$ that can be embedded, with no distortion, into the metric space $(\R^{d+1}, \ell_p)$, as stated in~\Cref{lem:duplication for Euclidean case}.

\begin{lemma}
    \label{lem:duplication for Euclidean case}
    Let $M=(\mathbb{R}^d,\ell_p)$ be an $\ell_p$ metric space, for integer $d\geq 1$ and real number $p\geq 1$.
For every integer $h\ge 1$ and real number $w\ge 0$, there exists a $w$-separated $h$-duplication $M^\dup$ of $M$ that can be embedded into $(\R^{d+1},\ell_p)$ with no distortion. Moreover, for any given point from $M^\dup$, its image under the embedding can be computed in $O(d)$ time.
\end{lemma}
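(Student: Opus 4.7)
The plan is to exhibit the $w$-separated $h$-duplication as the pullback of the $\ell_p$ metric on $\R^{d+1}$ under an explicit embedding that appends one extra coordinate encoding the copy index. Concretely, set $V^\dup := \R^d\times[h]$ and define the map $\iota:V^\dup \to \R^{d+1}$ by $\iota(x,i) := (x,\, i\cdot w)$. Let $M^\dup := (V^\dup, \dist')$ where $\dist'((x,i),(y,j)) := \|\iota(x,i)-\iota(y,j)\|_p = \bigl(\|x-y\|_p^p + |i-j|^p w^p\bigr)^{1/p}$. Since $\iota$ is injective and $\ell_p$ is a metric on $\R^{d+1}$, $\dist'$ is automatically a metric, so no separate verification of the metric axioms is required.

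The two conditions of Definition~\ref{def:separated duplication} will then follow from a single line of calculation each. For the same-copy case $i=j$, the last coordinate contributes $0$ and $\dist'((x,i),(y,i)) = \|x-y\|_p = \dist(x,y)$, giving condition~(1). For distinct copies $i\neq j$, one has $|i-j|\ge 1$, so
\[
\dist'((x,i),(y,j)) \;=\; \bigl(\|x-y\|_p^p + |i-j|^p w^p\bigr)^{1/p} \;\ge\; \bigl(\|x-y\|_p^p + w^p\bigr)^{1/p}.
\]
The elementary bound $(a^p+b^p)^{1/p}\ge \max\{a,b\}$ for $a,b\ge 0$ and $p\ge 1$ (immediate from monotonicity of $t\mapsto t^{1/p}$ applied to $a^p+b^p\ge \max\{a^p,b^p\}$) then yields $\dist'((x,i),(y,j))\ge \max\{\|x-y\|_p, w\} = \max\{\dist(x,y), w\}$, which is condition~(2). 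By construction, $\iota$ itself witnesses the distortion-free embedding into $(\R^{d+1},\ell_p)$.

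For the algorithmic claim, given any point $(x,i)\in V^\dup$, computing $\iota(x,i)$ requires only copying the $d$ coordinates of $x$ and appending the single scalar $i\cdot w$, which runs in $O(d)$ time. I do not expect any real obstacle; the construction is essentially forced by the definition, and the only point worth double-checking is the elementary $\ell_p$ inequality used above, which holds for every $p\ge 1$ (and in fact for all $p>0$).
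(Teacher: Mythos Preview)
Your proposal is correct and essentially identical to the paper's own proof: both define the embedding $(x,i)\mapsto (x,\,i\cdot w)$ into $\R^{d+1}$, take $\dist'$ as the pullback of $\ell_p$, and verify the two conditions of Definition~\ref{def:separated duplication} via the bound $(a^p+b^p)^{1/p}\ge \max\{a,b\}$. The only cosmetic difference is that you explicitly invoke injectivity of $\iota$ to get the metric axioms for free (note this needs $w>0$; for $w=0$ and $h>1$ the construction yields only a pseudometric, an edge case the paper also leaves implicit).
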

\begin{proof}
    Given $h$ and $w$, we define a distance function $\dist_{h,w}:(\R^d\times [h])\times (\R^d\times [h])\to \R_{\ge 0}$ as follows: for every $x,y\in \R^d$ and every $i,j\in [h]$, $\dist_{h,w}((x,i),(y,j)) = \left(\|x-y\|_p^p + (|i-j|\cdot w)^p \right)^{1/p}$. 
Since $\dist_{h,w}((x,i),(y,j)) = \left(\|x-y\|_p^p + (|i-j|\cdot w)^p \right)^{1/p}\ge \max\{\|x-y\|_p, |i-j|\cdot w\}$, it follows directly that the metric space $M^\dup = (\R^d\times [h], \dist_{h,w})$ is a valid $w$-separated $h$-duplication of $M$. 
Moreover, the mapping $f:\R^d\times [h]\to \R^{d+1}$ that maps a point $(x,i)\in \R^d\times [h]$ to $(x,i\cdot w)\in \R^{d+1}$ is a feasible embedding from $M^\dup$ into $(\R^{d+1}, \ell_p)$, whose value can be efficiently computed given $(x,i)$, $h$, and $w$.
\end{proof}

According to~\Cref{lem:duplication for Euclidean case}, 
if an algorithm $\calA$ can construct vanilla coreset for $(\R^{d+1},\ell_p)$, then there exists a family $\calM^\dup = \left\{M^\dup_{h,w} \right\}_{h\ge 1,w\ge 0}$ of separated duplication of $M$ and an algorithm $\calB$ such that $\calB$ can construct vanilla coreset for every $M^\dup_{h,w}\in\calM^\dup$ by embedding the input dataset into $(\R^{d+1}, \ell_p)$, running algorithm $\calA$, and then returning the pre-image of the constructed coreset.
Since \Cref{lem:duplication for Euclidean case} also guarantees that the embedding can be computed in $O(d)$ time, the additional runtime is only $O(nd)$.

Therefore, we obtain the following corollary, where for simplicity, we assume that the size bound $N$ only depends on $d,k$ and $\eps^{-1}$.

\begin{corollary}[Reduction on $\ell_p$ metric space]
    \label{cor:Euclidean space}
    Let $p\ge 1$ be a real number. 
    Assume there exists an algorithm that given $0<\eps<1$, integers $d, k,z\ge 1$ and an $n$-point dataset $X\subseteq \R^{d}$ as input, runs in time $T(d,n,k,\eps^{-1})$ to construct an $\eps$-coreset of $X$  for \kzC on $(\mathbb{R}^d, \ell_p)$ of size $N(d,k,\eps^{-1})$.

    Then, there exists an algorithm that, given $0<\eps<1$, integers $d,k,z,m\ge 1$, an $n$-point dataset $X\subseteq \R^d$ and a $(2^{O(z)},O(1),O(1))$-approximation solution $C^*$ to \kzmC on $X$ as input, runs in time
    \begin{equation*}
        \tilde O(nkd) + \poly(kdm\eps^{-1}) + 2\cdot T(d+1,n,O(k\log^2(km\eps^{-1})),O(\eps^{-1}))
    \end{equation*}
    to compute an $\eps$-coreset of $X$ for \kzC of size 
    \begin{equation*}
        2^{O(z\log z)}\cdot O\left(m\eps^{-2z}\log^z(km\eps^{-1})\right) + 2\cdot N\left(d+1,O(k\log^2(km\eps^{-1})), O(\eps^{-1})\right).
    \end{equation*}
\end{corollary}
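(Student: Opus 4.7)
The plan is to invoke \Cref{thm:reduction2} directly, so the main job is to exhibit a suitable family $\calM^\dup = \{M^\dup_{h,w}\}_{h \geq 1, w \geq 0}$ of separated duplications of $M = (\R^d, \ell_p)$ together with a vanilla coreset algorithm that works uniformly over this family. The proof then reduces to a bookkeeping of parameters. First, I would apply \Cref{lem:duplication for Euclidean case} to pick, for each $(h,w)$, the concrete duplication $M^\dup_{h,w}$ that embeds isometrically into $(\R^{d+1}, \ell_p)$ via the map $(x,i) \mapsto (x, i\cdot w)$. This ensures that $\calM^\dup$ is well-defined and, crucially, every $M^\dup_{h,w}$ is just a subset of $(\R^{d+1}, \ell_p)$ with the inherited metric.

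Next, I would define the black-box vanilla coreset algorithm $\calA'$ promised by \Cref{thm:reduction2} as follows: on input a dataset $Y \subseteq M^\dup_{h,w}$, compute the image $f(Y) \subseteq \R^{d+1}$ under the isometric embedding (costing $O(|Y| \cdot d)$ time), run the assumed algorithm on $f(Y)$ in ambient space $(\R^{d+1}, \ell_p)$ to obtain a coreset $S' \subseteq f(Y)$, and return $f^{-1}(S')$ with the same weights. Because $f$ is an isometry, the coreset guarantee transfers verbatim, yielding
\begin{equation*}
    N(M^\dup_{h,w}, n, k, \eps^{-1}) \le N(d+1, k, \eps^{-1})
    \quad\text{and}\quad
    T(M^\dup_{h,w}, n, k, \eps^{-1}) \le T(d+1, n, k, \eps^{-1}) + O(nd).
\end{equation*}
Similarly, the vanilla construction on $M = (\R^d, \ell_p)$ itself satisfies $N(M, n, k, \eps^{-1}) \le N(d+1, k, \eps^{-1})$ and an analogous time bound, by monotonicity of $N$ and $T$ in $d$ (i.e., viewing $\R^d$ as a subspace of $\R^{d+1}$).

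Finally, I would plug these bounds into \Cref{thm:reduction2} with parameters $h' = \min\{O(k\log(km\eps^{-1})), n\}$ and $w' = O(z\eps^{-1}\cdot \diam(X)\cdot n^{1/z})$. The additive term $A$ of \Cref{thm:reduction2} is already exactly $2^{O(z\log z)} \cdot O(m\eps^{-2z}\log^z(km\eps^{-1}))$. The two $N$-terms both collapse into $N(d+1, O(k\log^2(km\eps^{-1})), O(\eps^{-1}))$ (using monotonicity in $k$ to absorb the smaller $N(d+1, k, O(\eps^{-1}))$ contribution from $M$), giving the claimed factor of $2$. The runtime likewise follows from adding the two $T$-terms together with the $\tilde O(nk) + \poly(km\eps^{-1})$ overhead of \Cref{thm:reduction2} and the $O(nd)$ embedding overhead, yielding $\tilde O(nkd) + \poly(kdm\eps^{-1}) + 2\cdot T(d+1, n, O(k\log^2(km\eps^{-1})), O(\eps^{-1}))$.

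There is no real obstacle here; the whole argument is a substitution into \Cref{thm:reduction2}, and the only thing to check carefully is that $\calA'$ genuinely qualifies as a vanilla coreset algorithm for every $M^\dup_{h,w} \in \calM^\dup$ simultaneously (with the \emph{same} size/time bounds as in $\R^{d+1}$), which is immediate from the isometry guaranteed by \Cref{lem:duplication for Euclidean case}. The only subtle point worth highlighting in the write-up is the monotonicity assumption on $N$ and $T$ in $d, n, k, \eps^{-1}$, which is stated in the paragraph just above the corollary and is used to merge the two coreset-size contributions into a single term with a multiplicative factor of $2$.
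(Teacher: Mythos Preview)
Your proposal is correct and follows essentially the same approach as the paper: the paper's argument (given in the paragraph immediately preceding the corollary rather than as a separate proof) likewise uses \Cref{lem:duplication for Euclidean case} to realize every $M^\dup_{h,w}$ inside $(\R^{d+1},\ell_p)$, wraps the assumed vanilla algorithm via the embed--run--pullback construction, and then instantiates \Cref{thm:reduction2}. Your explicit handling of the two $N$-terms via monotonicity and the embedding overhead $O(nd)$ matches the paper's reasoning.
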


\subsection{Doubling Metrics}
We then consider the doubling metric space, which is an important generalization of $\ell_p$ metric spaces. Here, an important concept is the \emph{doubling dimension}~\cite{DBLP:conf/focs/GuptaKL03}, defined as the least integer $t\ge 0$, such that every ball can be covered by at most $2^t$ balls of half the radius. A metric space with a bounded doubling dimension is called a doubling metric, and we denote by $\ddim(M)$ the doubling dimension of the doubling metric $M$.
Coresets in doubling metrics have been studied in previous works~\cite{DBLP:conf/focs/HuangJLW18,Cohen-addad2021New}, which achieve a size that depends only on the doubling dimension, $k$, and $\eps^{-1}$.

For a doubling metric $M=(V,\dist)$, we have the following lemma, demonstrating that for every $h$ and $w$, there is a $w$-separated $h$-duplication of $M$ that has a doubling dimension $O(\ddim(M))$.

\begin{lemma}
    \label{lem:duplication doubling}
    For a doubling metric $M=(V,\dist)$, integer $h\ge 1$ and real number $w\ge 0$, there exists a $w$-separated $h$-duplication $M^\dup=(V\times [h],\dist')$ of $M$ such that $\ddim(M^\dup) \le 2\ddim(M) + 2$. Moreover, we can evaluate the distance $\dist'(x,y)$ in constant time for any $x,y\in V\times [h]$.
\end{lemma}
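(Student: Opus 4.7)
The plan is to define
\[
    \dist'((x,i),(y,j)) := \max\{\dist(x,y),\; w\cdot |i-j|\}
\]
and verify that $M^\dup:=(V\times [h],\dist')$ meets the three required properties. First, I would check that $\dist'$ is a valid $w$-separated $h$-duplication: being the pointwise maximum of two pseudometrics on $V\times [h]$ (the pullback of $\dist$ along the first projection, and the pullback of $(i,j)\mapsto w|i-j|$ along the second), $\dist'$ automatically inherits symmetry and the triangle inequality, while positivity for $w>0$ is immediate because distinct elements of $V\times [h]$ must differ in at least one coordinate. (The $w=0$ case only arises in the trivial setting $h=1$ of \Cref{def:separated duplication}, where one simply takes $\dist':=\dist$.) The within-copy and separation conditions then fall out by splitting on $i=j$ versus $i\neq j$.

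The bulk of the work is the doubling-dimension bound. My plan is to realise $M^\dup$ as a submetric of a product space and then chain two standard inequalities. Specifically, I would embed $(x,i)\mapsto (x,iw)$ into the metric space $N:=(V\times \mathbb{R},\dist_N)$ with $\ell_\infty$-type product metric $\dist_N((x,s),(y,t)):=\max\{\dist(x,y),|s-t|\}$, and observe that the image inherits exactly $\dist'$. The first inequality is $\ddim(N)\le \ddim(M)+1$, obtained by covering any ball $B_M(x,r)\times [s-r,s+r]$ in $N$ by the Cartesian product of a $2^{\ddim(M)}$-ball cover of the $V$-factor at radius $r/2$ and a $2$-interval cover of the $\mathbb{R}$-factor at half-length $r/2$. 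The second is the standard submetric bound $\ddim(A)\le 2\ddim(U)$ for any $A\subseteq U$: cover $B_U(x,r)$ with $2^{2\ddim(U)}$ balls of radius $r/4$, and for each one that meets $A$ pick a representative $y\in A$ inside it, so that the triangle inequality upgrades the cover to $B_U(z,r/4)\cap A\subseteq B_A(y,r/2)$. Combining these inequalities yields $\ddim(M^\dup)\le 2\ddim(N)\le 2\ddim(M)+2$, as required.

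Constant-time evaluation is immediate, since computing $\dist'((x,i),(y,j))$ costs a single oracle query to $\dist(x,y)$ together with $O(1)$ arithmetic. I do not expect a genuine obstacle, but the point I would make sure to articulate is \emph{why} placing the copies at positions $w,2w,\dots,hw$ on a line is the right move: with a naive definition such as $\dist'((x,i),(y,j))=\dist(x,y)+w\cdot\mathbf{1}[i\neq j]$, the case $|V|=1$ already reduces $M^\dup$ to a uniform metric on $h$ points and hence forces $\ddim(M^\dup)=\Omega(\log h)$. Embedding into $V\times \mathbb{R}$ lets the geometry of the line absorb the separation at a cost of only one extra dimension, after which the submetric inequality contributes the remaining factor of two.
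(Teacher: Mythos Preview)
Your proof is correct and reaches the same bound $2\ddim(M)+2$, but by a genuinely different route than the paper. The paper takes the $\ell_1$-type product $\dist'\big((x,i),(y,j)\big):=\dist(x,y)+|i-j|\cdot w$ and argues directly: given a ball of radius $r$ in $M^\dup$, it projects to $V$ and to the line $([h],\,w|\cdot|)$, covers each projection by balls of radius $r/4$ (costing $2^{2\ddim(M)}$ and $2^2$ balls respectively), and observes that the Cartesian product of these covers consists of $\dist'$-balls of radius at most $r/4+r/4=r/2$ because the two contributions add. Your argument is more modular: you use the $\ell_\infty$-type product, invoke the product formula $\ddim(V\times\mathbb{R})\le\ddim(M)+1$, and then the standard submetric inequality $\ddim(A)\le 2\ddim(U)$. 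The factor of two thus appears for different reasons in the two proofs---for you it is the submetric passage, for the paper it is the need to halve the radius twice so that the $\ell_1$ sum stays below $r/2$. Either metric is a legitimate separated duplication, and your approach has the mild advantage of citing two reusable black-box facts rather than redoing a covering computation.

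One small quibble: your parenthetical that ``the $w=0$ case only arises in the trivial setting $h=1$'' misreads \Cref{def:separated duplication}, which permits $w=0$ with any $h\ge 1$; for $h>1$ and $w=0$ your $\dist'$ degenerates to a pseudometric on $V\times[h]$. The paper's $\ell_1$ construction has exactly the same defect in that corner case, and every invocation in \Cref{thm:reduction2} has $w>0$, so this does not affect the argument.
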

\begin{proof}
    Consider the metric space $M^\dup=(V\times [h],\dist')$ where the distance function satisfies that for every $x,y\in V$ and every $i,j\in [h]$, $\dist'\big((x,i),(y,j) \big) = \dist(x,y) + |i-j|\cdot w$.
    Clearly, $M^\dup$ is a metric space,
    as well as a $w$-separated $h$-duplication of $M$, and the distance $\dist'$ can be evaluated in constant time. We then bound the doubling dimension of $M^\dup$.

    To this end, consider any ball $B$ from $M^\dup$ with radius $r>0$.
    We decompose $B$ into $B_0:=\{x\in V:\exists i\in [h], (x,i)\in B\}$ and $B_1:=\{i\in [h]: \exists x\in V, (x,i)\in B\}$. 

    For $B_0$, since the metric $M=(V,\dist)$ has a doubling dimension of $\ddim(M)$, we have that $B_0$ can be covered by at most $2^{2\ddim(M)}$ balls of radius $r/4$. Specifically, let $C_0\subseteq V$ be such a cover, i.e., for every $x\in B_0$, $\dist(x,C_0)\le r/4$.
    
    For $B_1$, consider the line metric $M^\mathrm{line}=([h],\dist'')$ such that for every $i,j\in [h]$, $\dist''(i,j)=|i-j|\cdot w$. Clearly, this line metric $M^\mathrm{line}$ has a doubling dimension of $1$. Hence, $B_1$ can be covered by at most $4$ balls of radius $r/4$. Let $C_1$ be such a cover.

    Let $C:=\{(x,i):x\in C_0,i\in C_1\}\subseteq V\times [h]$. For every $(x,i)\in B$, let $y:=\arg\min_{p\in C_0}\dist(x,p)$ and let $j:=\arg\min_{t\in C_1}|t-j|\cdot w$. By definition, we have $\dist(x,y)\le r/4$ and $|i-j|\cdot w\le r/4$, implying that $\dist'\big((x,i),(y,j) \big) = \dist(x,y) + |i-j|\cdot w\le r/2$. Therefore, $\dist'\big((x,i), C \big)\le \dist'\big((x,i),(y,j) \big)\le r/2$. Since $|C|=|C_0|\cdot |C_1| = 2^{2\ddim + 2}$, we conclude that $M^\dup$ has a doubling dimension of at most $2\ddim + 2$.
\end{proof}

\begin{corollary}
    \label{cor:doubline}
    Assume that there exists an algorithm such that, for every $0<\eps<1$, integers $d,k,z\ge 1$, metric space $M$ with $\ddim(M)\le d$ and $n$-point dataset from $M$, it computes in time $T(d,n,k,\eps^{-1})$ an $\eps$-coreset of size $N(d,k,\eps^{-1})$ for \kzC on $M$.
    
    Then, there is an algorithm that, given $0<\eps<1$, integers $d,k,z,m\ge 1$, a metric space $M$ with $\ddim(M)\le d$, an $n$-point dataset from $M$ and a $(2^{O(z)},O(1),O(1))$-approximation solution $C^*$ to \kzmC on $X$, computes in time 
    \begin{equation*}
        \tilde O(nk) + \poly(km\eps^{-1}) + 2\cdot T(O(d),n,O(k\log^2(km\eps^{-1})),O(\eps^{-1}))
    \end{equation*}
    an $\eps$-coreset for \kzmC on $M$ of size 
    \begin{equation*}
        2^{O(z\log z)}\cdot O\left(m\eps^{-2z}\log^z(km\eps^{-1})\right) + 2\cdot N\left(O(d), O(k\log^2(km\eps^{-1})), O(\eps^{-1})\right).
    \end{equation*}
\end{corollary}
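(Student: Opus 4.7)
The plan is to invoke \Cref{thm:reduction2} directly, using the family $\calM^\dup$ of doubling-dimension-preserving separated duplications supplied by \Cref{lem:duplication doubling}. The corollary is, in essence, a bookkeeping exercise once the right family $\calM^\dup$ is identified.

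First I would set $\calM^\dup = \{M^\dup_{h,w}\}_{h\ge 1, w\ge 0}$, where each $M^\dup_{h,w}$ is the $w$-separated $h$-duplication of $M$ guaranteed by \Cref{lem:duplication doubling}, satisfying $\ddim(M^\dup_{h,w}) \le 2\ddim(M) + 2 \le 2d+2$. Crucially, the assumed vanilla-coreset algorithm in the hypothesis works on every metric space whose doubling dimension is at most its first argument, so the same algorithm (used as a black box) applies both to $M$ itself (with parameter $d$) and to every $M^\dup_{h,w} \in \calM^\dup$ (with parameter $O(d)$). Moreover, \Cref{lem:duplication doubling} ensures that distances in $M^\dup_{h,w}$ can be evaluated in $O(1)$ time from distances in $M$, so the distance oracle required by the black-box vanilla-coreset subroutine is available with no additional overhead.

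Next I would plug this setup into \Cref{thm:reduction2}. The resulting coreset has size
\begin{equation*}
    A + N\bigl(d, k, O(\eps^{-1})\bigr) + N\bigl(2d+2,\ O(k\log^2(km\eps^{-1})),\ O(\eps^{-1})\bigr),
\end{equation*}
where $A = 2^{O(z\log z)} \cdot O(m\eps^{-2z}\log^z(km\eps^{-1}))$; here I use the specific form of $N$ in the hypothesis, which depends only on the doubling-dimension bound, $k$, and $\eps^{-1}$ (not on the ambient metric or $n$). By monotonicity of $N$ in each of its arguments, both $N$-terms are dominated by $N(O(d), O(k\log^2(km\eps^{-1})), O(\eps^{-1}))$, yielding the claimed size bound $A + 2\cdot N(O(d), O(k\log^2(km\eps^{-1})), O(\eps^{-1}))$. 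The runtime follows analogously from the runtime expression in \Cref{thm:reduction2}: both invocations of $T$ (one on $M$ and one on $M^\dup_{h',w'}$) are dominated by $T(O(d), n, O(k\log^2(km\eps^{-1})), O(\eps^{-1}))$ after applying monotonicity of $T$ in the dimension and $k$ arguments, and absorbing $h'\le n$.

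The essentially only nontrivial ingredient is \Cref{lem:duplication doubling}, which has already been established in the excerpt, so I do not anticipate any real technical obstacle. The mild subtlety to verify is that the black-box hypothesis on the vanilla-coreset algorithm is flexible enough: it must accept any metric specified via a distance oracle once a doubling-dimension bound is given, which is the standard setup for coreset algorithms in doubling metrics and is consistent with the way \Cref{thm:reduction2} consumes $\calA$.
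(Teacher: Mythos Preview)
Your proposal is correct and matches the paper's approach exactly: the corollary is stated in the paper without an explicit proof, as it follows directly from plugging the separated-duplication family of \Cref{lem:duplication doubling} into \Cref{thm:reduction2} and then invoking the monotonicity assumption on $N$ and $T$ (stated at the start of \Cref{sec:application}) to merge the two invocations into the single $2\cdot N(\cdot)$ and $2\cdot T(\cdot)$ terms.
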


\begin{remark}
    \label{remark:finite}
    A special case of a doubling metric is the general metric $M=(V,\dist)$ with finite ambient size, for which we have a well-known fact that $\ddim(M)\le O(\log |V|)$. Although this fact enables us to use doubling dimension to indicate the relationship between $M$ and its separated duplication, we still hope to establish the relationship based on the ambient size. To achieve this, observe that in \Cref{thm:reduction2}, we actually only apply the assumed algorithm to $M^\dup_{h,w}$ with $h$ less than the data size. 
    Therefore, it suffices to restrict the family $\calM^\dup$ to include only those $M^\dup_{h,w}$ with $1 \leq h \leq |V|$, which has an ambient size at most $|V|\cdot h\leq |V|^2$.
\end{remark}

\subsection{Graph Metrics}

Given an edge-weighted graph $G=(V,E)$, we consider the metric space $M=(V,\dist)$, where $\dist$ is the shortest-path distance on the graph $G$. Coresets for graph datasets have also gained researchers' interest in recent years \cite{DBLP:conf/icml/BakerBHJK020, BJKW21, Cohen-addad2021New,Cohen-AddadD0SS25}, where they relate the coreset size to some complexity measures of the graph, such as treewidth and the size of the excluded minor. Here, we hope to establish reductions from robust coresets to vanilla coresets on graph metrics.

Given a graph $G=(V,E)$, an integer $h\geq 1$, and a real number $w\geq 0$, the following lemma provides a construction of a new graph $G'=(V',E')$ such that the shortest-path metric of $G'$ is a $w$-separated $h$-duplication of that of $G$. Furthermore, $G'$ has similar complexity to $G$ in terms of both treewidth and the size of the excluded minor.

\begin{lemma}
    \label{lem:duplication graph}
    For any graph $G=(V,E)$, integer $h\geq 1$ and real number $w\geq 0$, there exists a graph $G'=(V\times [h], E')$ such that the shortest-path metric of $G'$ is a $w$-separated $h$-duplication of the shortest-path metric of $G$. Moreover, it holds that
    \begin{enumerate}
        \item $\tw(G')\leq \tw(G)$, where $\tw(G)$ denotes the treewidth of $G$; and
        \item if $G$ excludes a fixed minor $H$, then $G'$ excludes the same minor $H$.
    \end{enumerate}
\end{lemma}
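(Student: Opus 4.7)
The plan is to construct $G'$ explicitly by taking $h$ labeled copies of $G$ and linking the copies of a single anchor vertex $v_0\in V$ via a path of weight-$w$ edges. Concretely, I would fix an arbitrary $v_0\in V$ (the lemma is vacuous when $V=\emptyset$) and let $G'=(V\times[h],E')$, where $E'$ consists of (i) the within-copy edges $\{(u,i),(v,i)\}$ (each inheriting the weight of $\{u,v\}$) for every $\{u,v\}\in E$ and $i\in[h]$, and (ii) the $h-1$ \emph{link} edges $\{(v_0,i),(v_0,i+1)\}$ of weight $w$, for $i=1,\dots,h-1$. The resulting $G'$ is a ``caterpillar'' of $h$ copies of $G$ pinned along the path $(v_0,1),\dots,(v_0,h)$; the three required properties (separated duplication, bounded treewidth, and excluded minor) would then be verified in turn by tracking shortest paths, lifting a tree decomposition, and analyzing minor models.

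Next, I would verify the conditions of \Cref{def:separated duplication} directly from the shortest-path structure of $G'$. Every cross-copy path must use at least one link edge, and all link edges are incident to some $(v_0,\cdot)$, so the shortest path from $(x,i)$ to $(y,j)$ with $i\neq j$ has length exactly $\dist_G(x,v_0)+|i-j|\cdot w+\dist_G(v_0,y)$, which is $\geq w$ and also $\geq \dist_G(x,y)$ by the triangle inequality in $G$. Any path from $(x,i)$ to $(y,i)$ that leaves copy $i$ must traverse link edges at least twice, paying at least $2w$ extra, so the shortest in-copy path stays inside copy $i$ and has length exactly $\dist_G(x,y)$, as required.

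For the treewidth bound I would lift a width-$\tw(G)$ tree decomposition $(T,\{B_\tau\})$ of $G$. Take $h$ labeled copies $T_1,\dots,T_h$ with bags $B_\tau^{(i)}=\{(v,i):v\in B_\tau\}$; pick in each $T_i$ a node $\tau_0^{(i)}$ whose bag contains $(v_0,i)$; and insert $h-1$ new bridge nodes with bags $\{(v_0,i),(v_0,i+1)\}$, each attached to $\tau_0^{(i)}$ and $\tau_0^{(i+1)}$. A routine check shows that every edge of $G'$ is covered and that the bags containing any fixed vertex form a connected subtree, so this is a valid tree decomposition of width $\max(\tw(G),1)$; this matches $\tw(G)$ whenever $G$ has at least one edge.

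The minor-exclusion property is the main obstacle, and I would exploit the observation that every link edge is a bridge in $G'$, so the blocks of $G'$ are exactly the blocks of the $h$ copies of $G$ together with the $h-1$ link edges (each a trivial $K_2$-block). For $2$-connected $H$ (which covers the standard case $H=K_t$), I would invoke the classical fact that any $2$-connected minor of a graph lies inside a single block: a hypothetical $H$-minor model of $G'$ thus localizes to a single copy of $G$, yielding an $H$-minor of $G$ and a contradiction to the assumption. For general connected $H$ I would recurse along $H$'s own block-cut decomposition, arguing that each block of $H$ must be realized inside a single copy of $G$ while any cut vertices of $H$ are identified along the shared $v_0$-path; this mirrors standard preservation arguments for $H$-minor-freeness under $1$-sums in the structural graph theory literature and is the most delicate part of the argument.
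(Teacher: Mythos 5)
Your construction coincides with the paper's: $h$ labelled copies of $G$ glued along a path of weight-$w$ link edges through the copies of one anchor vertex, with essentially the same verification of the two duplication properties and the same lifted tree decomposition (copies of a width-$\tw(G)$ decomposition joined by size-2 bridge bags) for the treewidth bound; you are in fact a bit more careful than the paper about the bridge bags and the degenerate edgeless case.

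The gap is in the final step of your minor argument. For $2$-connected $H$ your block argument is correct, but the proposed extension to connected $H$ with cut vertices --- recursing on the block-cut tree of $H$ and appealing to ``standard preservation of $H$-minor-freeness under $1$-sums'' --- cannot be completed, because no such preservation holds. Concretely, take $G=K_3$, $h=2$, and let $H$ be the bowtie (two triangles sharing a vertex): $G$ has only three vertices and hence excludes $H$, yet contracting the single link edge of $G'$ yields exactly the bowtie, so $G'$ contains $H$ as a minor. So the statement itself fails for minors with cut vertices (and likewise for disconnected $H$), and the $1$-sum recursion you sketch is precisely where it breaks. The paper does not prove this part either --- it just asserts that $G'$ excludes the same minor ``by construction'' --- and that assertion is only valid for $2$-connected $H$, which is the case your block argument already handles and the only case needed for the applications: a graph excluding any fixed $H$ also excludes the $2$-connected graph $K_{|V(H)|}$, and $K_t$-minor-freeness is what the vanilla coreset bounds for excluded-minor graphs use, so the right fix is to restrict the minor claim to $2$-connected $H$ (or to $K_t$) rather than to try to push the recursion through.
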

\begin{proof}
    We construct the graph $G'=(V\times [h],E')$ by connecting the $n$ copies of $G$ in a ``line'' (see~\Cref{fig:SeparatedDuplication} for an illustration). Specifically, we pick an arbitrary vertex $v\in V$. Then, we define the edges of $G'$ to be $E':=\left\{\left((x,i),(y,i) \right):(x,y)\in E, i\in [h]\right\}\cup\left\{\left((v,i),(v,i+1) \right): i\in [h-1] \right\}$. For every $x,y\in V$, $i\in [h]$, we assign a weight to edge $((x,i),(y,i))$ that is equal to the weight of $(x,y)\in E$ in $G$. For every $i\in[h-1]$, we assign a weight $w$ to the edge $((v,i),(v,i+1))$.

    Let $\dist_G$ and $\dist_{G'}$denote the shortest-path distance function of $G$ and $G'$, respectively.
    It is easy to verify that for every $x,y\in V$ and $i\in [h]$, the length of the shortest-path between $(x,i)$ and $(y,i)$ in $G'$ is the same as that between $x$ and $y$ in $G$, i.e., $\forall x,y\in V,\forall i\in[h]$, $\dist_{G'}((x,i),(y,i)) = \dist(x,y)$. 
Then, consider any $i,j\in [h]$ with $i\neq j$, we have 
    \begin{align*}
        \dist_{G'}((x,i),(y,j)) \quad=&\quad \dist_{G'}((x,i),(v,i))+\dist_{G'}((v,i),(v,j)) + \dist_{G'}((y,j),(v,j))\\
        \ge&\quad \dist_{G}(x,v)+w + \dist_{G}(y,v)\\
        \ge&\quad w+\dist_{G}(x,y),
    \end{align*}
    where the last inequality is due to $\dist_G(x,y) = \min_{z\in V}\left(\dist_G(x,z) + \dist_G(y,z)\right)$. Hence, we certify that $(V\times [h],\dist_{G'})$ is a feasible $w$-separated $h$-duplication of $(V,\dist_G)$.

    For the size of the excluded minor, it is clear by the construction that, if $G$ excludes a fixed minor $H$, then $G'$ also excludes this fixed minor $H$.

    For treewidth, assume that $\tw(G) = t$.
    For every $i\in [h]$, let $G_i$ denote the $i$-th copy of $G$ in $G'$, namely, the subgraph induced by the vertex set $\{(x,i):x\in V\}$. Let $\mathcal{T}_i$ with node set $\calV_i\subseteq 2^V$ (we call each node in $\calV_i$ a bag) be a tree decomposition of $G_i$ such that the maximum bag size $\max_{U\in V_i}|U|$ is equal to $t + 1$ (see, e.g., Definition 2.1 of ~\cite{DBLP:conf/icml/BakerBHJK020} for the definition).

    Then, we can construct a tree decomposition for $G'$ by connecting these $\calT_i$'s. Specifically, for every $i$, we pick an arbitrary bag $U_i\in\calV_i$ such that $v\in U_i$. Then, for every $i\in [h-1]$, we introduce a new bag $\{(v,i),(v,i+1)\}$ and connect it to $U_i$ and $U_{i+1}$. It is easy to verify that this operation results in a valid 
    tree decomposition of $G'$ whose maximum bag size is still $t$. Hence, we conclude that $\tw(G') \leq t$.
\end{proof}

\begin{figure}[t]
	\centering
	\includegraphics[width=0.95\textwidth]{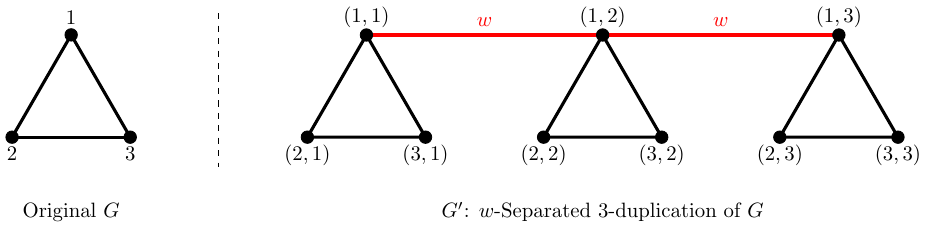}	
	\caption{
        Illustration of the construction of separated duplication of a graph with the weights of edges omitted. On the left, we show an original graph $G$, using a triangle graph as an example. On the right, we demonstrate our construction for the $w$-separated $3$-duplication of $G$, where the three triangle graphs connected by black edges represent the s of $G$, and red edges are weighted by $w$ to satisfy the separation requirement.    
	}
	\label{fig:SeparatedDuplication}
	
\end{figure}

According to~\Cref{lem:duplication graph},~\Cref{thm:reduction2} is applicable to graphs with bounded treewidth and minor-free graphs. Here, we state the reduction result only for graphs with bounded treewidth. 
The result for minor-free graphs is nearly identical, with the only difference lying in the dependence of $N$ and $T$.

\begin{corollary}
    \label{cor:treewidth}
    Assume that there exists an algorithm that, given $0<\eps<1$, integers $t,k,z\ge 1$, a graph $G=(V,E)$ with $\tw(G)\le t$ and an $n$-point dataset $X\subseteq V$ as input, runs in time $T(t,n,k,\eps^{-1})$ to compute an $\eps$-coreset of size $N(t,k,\eps^{-1})$ for \kzC on the shortest-path metric space of $G$.
    
    Then, there is an algorithm that, given $0<\eps<1$, integers $t,k,z\ge 1$, a graph $G=(V,E)$ with $\tw(G)\le t$, an $n$-point dataset $X\subseteq V$ and a $(2^{O(z)},O(1),O(1))$-approximation solution $C^*$ to \kzmC on $X$ as input, runs in time \begin{equation*}
        \tilde O(nk) + \poly(km\eps^{-1}) + 2\cdot T(t,n,O(k\log^2(km\eps^{-1})),O(\eps^{-1}))
    \end{equation*}
    to construct an $\eps$-coreset of size \begin{equation*}
        2^{O(z\log z)}\cdot O\left(m\eps^{-2z}\log^z(km\eps^{-1})\right) + 2\cdot N\left(t, O(k\log^2(km\eps^{-1})), O(\eps^{-1})\right)
    \end{equation*}
    for \kzmC on the shortest-path metric of $G$.
\end{corollary}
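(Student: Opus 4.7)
The plan is to derive \Cref{cor:treewidth} as a direct instantiation of \Cref{thm:reduction2}, supplying the required family of separated duplications through the graph-theoretic construction in \Cref{lem:duplication graph}. I would let $M$ be the shortest-path metric of the input graph $G$, and, for every integer $h \ge 1$ and real number $w \ge 0$, apply \Cref{lem:duplication graph} to $G$ to obtain a graph $G'_{h,w} = (V \times [h], E'_{h,w})$ whose shortest-path metric $M^{\dup}_{h,w}$ is a valid $w$-separated $h$-duplication of $M$ and satisfies $\tw(G'_{h,w}) \le \tw(G) \le t$. Collecting these metrics into $\calM^{\dup} := \{M^{\dup}_{h,w}\}_{h \ge 1,\, w \ge 0}$ yields the desired family.

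Next I would verify that the vanilla coreset algorithm assumed in the statement can serve as the algorithm $\calA$ required by \Cref{thm:reduction2} uniformly over $M$ and over every member of $\calM^{\dup}$. Indeed, each $M^{\dup}_{h,w} \in \calM^{\dup}$ is the shortest-path metric of a graph of treewidth at most $t$, so by hypothesis there is an algorithm that, for any $n'$-point dataset drawn from such a metric, outputs an $\eps$-coreset for \kzC of size $N(t, k, \eps^{-1})$ in time $T(t, n', k, \eps^{-1})$. Thus the size and time functions of \Cref{thm:reduction2} simplify to $N(M^{\dup}, n', k, \eps^{-1}) = N(t, k, \eps^{-1})$ and $T(M^{\dup}, n', k, \eps^{-1}) = T(t, n', k, \eps^{-1})$, uniformly over $M^{\dup} \in \calM^{\dup} \cup \{M\}$.

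With these ingredients, invoking \Cref{thm:reduction2} produces an $\eps$-coreset for \kzmC of size
\[
    A + N\bigl(t, k, O(\eps^{-1})\bigr) + N\bigl(t,\, O(k\log^2(km\eps^{-1})),\, O(\eps^{-1})\bigr),
\]
where $A = 2^{O(z\log z)} \cdot O(m \eps^{-2z} \log^z(km\eps^{-1}))$. By the monotonicity of $N$ in its $k$-argument (assumed at the top of \Cref{sec:application}), both vanilla-coreset terms are dominated by $2 \cdot N(t, O(k\log^2(km\eps^{-1})), O(\eps^{-1}))$, matching the claimed size. The runtime bound follows identically: the sum $T(M, n, k, O(\eps^{-1})) + T(M^{\dup}_{h', w'}, h', O(k\log^2(km\eps^{-1})), O(\eps^{-1}))$ from \Cref{thm:reduction2} is bounded by $2 \cdot T(t, n, O(k\log^2(km\eps^{-1})), O(\eps^{-1}))$, using $h' \le n$ and monotonicity of $T$ in $n$ and $k$; adding the $\tilde O(nk) + \poly(km\eps^{-1})$ overhead from \Cref{thm:reduction2} yields precisely the stated runtime.

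All heavy lifting is done by \Cref{thm:reduction2} and \Cref{lem:duplication graph}, so there is no real obstacle; the only delicate point is ensuring that the assumed vanilla algorithm is in fact applicable to the intermediate metrics arising inside the reduction, which is exactly what the treewidth-preservation guarantee of \Cref{lem:duplication graph} provides. The analogous statement for graphs excluding a fixed minor would follow by replacing \textbf{treewidth} with \textbf{minor-closedness} throughout, since \Cref{lem:duplication graph} also preserves the excluded minor.
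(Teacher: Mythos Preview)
Your proposal is correct and mirrors exactly what the paper does: the corollary is stated without a formal proof, with the surrounding text simply noting that \Cref{lem:duplication graph} supplies the separated-duplication family with preserved treewidth, so \Cref{thm:reduction2} applies directly. Your use of the monotonicity assumption on $N$ and $T$ (declared at the start of \Cref{sec:application}) to absorb the two vanilla-coreset terms into $2\cdot N(t,O(k\log^2(km\eps^{-1})),O(\eps^{-1}))$ and similarly for $T$ is precisely the intended reading.
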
      
\section{Missing Proofs of~\Cref{sec:streaming coreset}}
\subsection{Proof of~\Cref{lem:isolated}}
\label{apd:isolated}

\ExtractIsolatedPoint*

Firstly, let us consider an \emph{offline} two-level sampling procedure:
we first choose $y\in \phi(X)$ uniformly at random (u.a.r.), then choose u.a.r. a point $x\in \phi^{-1}(y)\cap X$. Then, $G$ is constructed by repeating the two-level sampling to sample points from $X$ \emph{without replacement}.
\begin{lemma}
    \label{lem:offline isolated}
    Let $G$ contain $O(T\log(T\delta^{-1}))$ points, where the points are sampled by the two-level uniform sampling on $X$ without replacement. Then for any subset $F\subseteq X$ with $|F|\le T$, it holds that $\Pr\left[|\phi(X\setminus G)|\le 2\cdot |\phi(X\setminus F)|\right]\ge 1-\delta$.
\end{lemma}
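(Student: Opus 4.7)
Let $\kappa := |\phi(X\setminus F)|$ and let $B' := \phi(F)\setminus \phi(X\setminus F)$ denote the set of ``$F$-only'' buckets of $X$, i.e.\ buckets $y$ with $X\cap\phi^{-1}(y)\subseteq F$. Since the total $F$-mass over these buckets is at most $|F|\le T$, we have $\sum_{y\in B'}|X\cap\phi^{-1}(y)|\le T$. The starting observation is the deterministic containment
\[
\phi(X\setminus G)\ \subseteq\ \phi(X\setminus F)\ \cup\ \bigl\{y\in B':\ X\cap\phi^{-1}(y)\not\subseteq G\bigr\},
\]
because any bucket in $\phi(X)\setminus\phi(X\setminus F)$ is $F$-only and so lands in $\phi(X\setminus G)$ only when one of its ($F$-)points misses $G$. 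Hence it suffices to show that, with probability at least $1-\delta$, the number of \emph{uncovered} buckets in $B'$ is at most $\kappa$.

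The plan is to split into cases. In the \emph{trivial case} $|B'|\le\kappa$, we have $|\phi(X)|=|B'|+\kappa\le 2\kappa$ so the bound holds deterministically. Otherwise we are in the \emph{interesting case} $|B'|>\kappa$, in which $|\phi(X)|\le 2|B'|\le 2T$; moreover, if $|X|\le r:=|G|$, then two-level sampling without replacement exhausts $X$ and $G=X$, making the claim trivial, so I may additionally assume $|X|>r$. In this remaining regime $|\phi(X)|\le 2T$ and duplicates produced by the two-level procedure are rare, so (up to a constant factor) the without-replacement sample behaves like the first $r$ distinct samples of the with-replacement version, and the inclusion probability of a specific point $x\in F\cap\phi^{-1}(y)$ can be lower bounded by $\tfrac{1}{|\phi(X)|k_y}\ge\tfrac{1}{2Tk_y}$, where $k_y:=|X\cap\phi^{-1}(y)|$.

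With these inclusion probabilities, I would use linearity of expectation: the expected number of uncovered $B'$ buckets is at most the expected number of $F$-points in $B'$ missing from $G$, which is at most
\[
\sum_{y\in B'} k_y\bigl(1-\tfrac{1}{|\phi(X)|k_y}\bigr)^{r}\ \le\ \sum_{y\in B'} k_y\, e^{-r/(2Tk_y)}.
\]
Setting $r=CT\log(T/\delta)$ for a sufficiently large absolute constant $C$ and using $\sum_{y\in B'}k_y\le T$ with $k_y\ge 1$, I would carry out an extremal analysis of this sum: small buckets (with $k_y=O(\log(T/\delta))$) each contribute an exponentially small amount, while large buckets are few because the $\sum k_y\le T$ budget severely limits them; the configurations where large buckets would dominate force $|B'|$ small and $|X|$ small, which is precisely the $|X|\le r$ subcase already handled. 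This gives $\mathbb{E}[\text{uncovered}]\le \delta\cdot\max(1,\kappa)$, and Markov's inequality yields that with probability at least $1-\delta$ the number of uncovered $B'$ buckets is at most $\kappa$, as required.

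The main obstacle I anticipate is the extremal analysis of $\sum_{y\in B'} k_y\, e^{-r/(2Tk_y)}$: the function $k\mapsto k e^{-A/k}$ is convex, so a naive extremization pushes the sum onto a single bucket of size $T$, which would be far too loose. Reconciling this with the $|X|\le r$ subcase (which rules out exactly this bad configuration) is the heart of the argument, and the without-replacement aspect of the sampling is what makes this reconciliation possible. A secondary technical point is to justify that the distinct-sampling procedure is no worse (up to constants) than the corresponding with-replacement procedure, which is clean in the regime $|X|>r$ but requires the $|X|\le r$ subcase to dispose of the other regime cleanly.
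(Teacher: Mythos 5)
Your setup (the ``$F$-only'' buckets $B'$, the deterministic containment, and the trivial case $|B'|\le\kappa$) matches the paper's proof, but the probabilistic core of your plan has a genuine gap, and it is exactly at the step you flag as the anticipated obstacle: the inequality you hope to extract from the extremal analysis is false, and the $|X|\le r$ subcase does not rescue it. Concretely, take $\kappa=\lfloor\sqrt{T}\rfloor-1$ red buckets each containing a huge number of points of $X\setminus F$ (so $|X|\gg r$), and let $B'$ consist of $b=\lfloor\sqrt{T}\rfloor>\kappa$ buckets: one of size $k\approx T$ and $b-1$ singletons. Then $|\phi(X)|\approx 2\sqrt{T}$, and in $r=CT\log(T/\delta)$ draws only about $\sqrt{T}\log(T/\delta)$ of them can land in the big bucket, so that bucket is essentially never fully covered; accordingly your surrogate quantity, the expected number of missed $F$-points (and your bound $\sum_{y\in B'}k_y e^{-r/(2Tk_y)}$, whose big-bucket term is $\approx T e^{-O(\log(T/\delta))/T}\approx T$), is of order $T$, far above the target $\delta\cdot\max(1,\kappa)=\delta\sqrt{T}$. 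The lemma is still true in this configuration, but only because it \emph{tolerates} up to $\kappa$ uncovered $F$-only buckets: here one big bucket stays uncovered while the singletons are all covered with high probability, so $|\phi(X\setminus G)|\le\kappa+1\le 2\kappa$. An accounting that charges uncovered buckets to missed \emph{points} cannot see this, because a single large bucket contributes up to $T$ missed points but only one uncovered bucket, and your Markov step at threshold $\kappa$ is applied after that slack has already been thrown away. (A minor separate point: your reduction to with-replacement sampling is fine, but not for the reason you give --- one should argue that, conditioned on the history, each without-replacement draw selects a specific still-unselected point of bucket $y$ with probability at least $1/(|\phi(X)|k_y)$, with early termination only when $G=X$; no ``duplicates are rare'' estimate or constant-factor loss is needed.)

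The paper's proof avoids point-level inclusion probabilities altogether and argues sequentially about the process. Call a draw ``blue'' if it lands in an $F$-only bucket; the number of $F$-only buckets not yet fully sampled is monotone non-increasing, so if it ever drops to at most $\kappa$ you are done. While it exceeds $\kappa$, it also exceeds the number of non-empty red buckets (at most $\kappa$), so each draw is blue with conditional probability at least $1/2$; since there are at most $T$ blue points in total, splitting the $r=O(T\log(T\delta^{-1}))$ draws into $T$ blocks of $O(\log(T\delta^{-1}))$ draws and union-bounding shows that, with probability $1-\delta$, either the monotone count drops to $\le\kappa$ or all blue points end up in $G$. Either way $|\phi(X\setminus G)|\le 2\kappa$. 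If you want to keep your per-bucket style, you would have to bound the probability that \emph{more than $\kappa$} of the $F$-only buckets remain not fully covered, which effectively forces an argument of this sequential/stopping-time type rather than an expectation-of-missed-points bound.
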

\begin{proof}[Proof Sketch]
    
Fix an arbitrary $F\subseteq X$ with $|F|\le T$, let $\beta:= |\phi(X\setminus F) |$.
    The high-level idea is that, if $|\phi(X)|\le 2\beta$ already holds, it suffices to return an arbitrary set. Otherwise, we color the points in buckets $\phi(X\setminus F)$ red, i.e., a red point $x$ satisfies that $\phi(x)\in \phi(X\setminus F)$; and color the other points blue.
    Let $R$ denote the red points, and $B$ denote the blue points. By definition, we have $\phi(R)=\phi(X\setminus F)$, and it is easy to verify that $B\subseteq F$, and thus $|B|\le |F|\le T$. Moreover, $|\phi(B)| = |\phi(X)| - |\phi(R)|\ge \beta = |\phi(R)|$. This implies that if we sample a point $x$ via the two-level sampling, $x$ is blue with probability at least $1/2$. Hence, by repeatedly sampling without replacement $O(\log(T\delta^{-1}))$ times via the two-level sampling procedure, with probability at least $1-\delta/T$, we will obtain at least one point from $B$. Since $|B|\le T$, let $G$ denote the set of $O(T \log(T\delta^{-1}))$ points sampled via two-level sampling without replacement, we have $|\phi(B\setminus G)|\le \beta$ with probability at least $1-\delta$. Hence, $|\phi(X\setminus G)| \le |\phi(A) | + |\phi(B\setminus G) |\le 2\beta$.
\end{proof}

    \paragraph{Streaming Implementation}
    We utilize an implementation proposed by~\cite{arxiv.2204.02095}.
    \begin{lemma}[Two-level $\ell_0$-sampler,~{\cite[Lemma 3.3]{arxiv.2204.02095}}]
        \label{lem:two-level sampler}
        There is a randomized algorithm, that given as input a matrix $M\in \R^{M\times N}$, with $M\le N$ and integer entries bounded by $\poly(N)$, that is present a stream of additive entry-wise updates, returns an index-pair $(i,j)$ of $M$, where $i$ is chosen u.a.r. from the non-zero rows, and then $j$ is chosen u.a.r. from the non-zero columns in that row $i$. The algorithm uses space $\poly(\log (\delta^{-1}N))$ and fails with probability at most $\delta$.
    \end{lemma}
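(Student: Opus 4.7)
The plan is to build the two-level sampler by composing two instances of the standard one-level $\ell_0$-sampler in a nested, linear-sketch-compatible way. Recall first the standard primitive: for a vector $v \in \R^N$ with $\poly(N)$-bounded integer entries presented as a dynamic additive stream, a uniformly random index from $\supp(v)$ can be returned in $\poly(\log(N/\delta))$ space via geometric coordinate subsampling (hash $h_\ell$ that keeps each index with probability $2^{-\ell}$) plus a $k$-sparse recovery sketch at each level for $k = O(\log(1/\delta))$; one decodes at the smallest level where the surviving support has size in $[1,k]$ and returns a uniform element.

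For the matrix input, I would instantiate the outer sampler on the ``row-indicator'' vector $u \in \{0,1\}^M$ with $u_i = \mathbf{1}[\text{row } i \text{ nonzero}]$. Directly maintaining $u$ is not possible under additive entry updates (increments may cancel), so at each row-subsampling level $\ell = 0, 1, \ldots, O(\log M)$ I would use a pairwise-independent hash $g_\ell:[M]\to[2^\ell]$ and maintain a $k$-sparse recovery sketch on the submatrix restricted to surviving rows, together with a per-row ``nonzero-witness'' subsketch (e.g., another $\ell_0$-sampler's low-level block) that certifies when a surviving row actually has a nonzero entry. Standard arguments show that, with probability $1-\delta/2$, there exists a level $\ell^*$ at which between $1$ and $k$ nonzero rows survive, and conditional on this event the set of survivors, decoded by the sparse recovery sketch, is uniformly distributed in a symmetric enough way that picking one of them uniformly yields a uniform sample from the set of nonzero rows of $M$.

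For the inner sample (a uniform column in the chosen row $i$), I cannot keep a separate column-$\ell_0$-sampler per row. Instead, at every level $\ell$, I would partition surviving rows into $\Theta(k^2)$ buckets with a second pairwise-independent hash and, in each bucket, maintain a standard one-dimensional $\ell_0$-sampler over the two-dimensional entry stream tagged by row index. Whenever at most $k$ rows survive at level $\ell^*$, with constant probability each decoded row $i$ is isolated in its bucket, so that bucket's $\ell_0$-sampler returns a uniformly random entry $(i,j^*)$ of row $i$; one can then read off $j^*$ directly. To drive the failure probability down to $\delta$, I would run $O(\log(1/\delta))$ independent repetitions of the bucketing stage, and combine with a union bound over the $O(\log N)$ levels.

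The space at every level is $k\cdot\poly(\log(N/\delta))$ for both the row-sparse-recovery sketch and the per-bucket column samplers; summed over $O(\log N)$ levels and $O(\log(1/\delta))$ repetitions, this yields total space $\poly(\log(N/\delta))$, as required. The main obstacle I anticipate is proving that the joint law of the returned pair $(i,j)$ matches the specification exactly, i.e.\ that (i)~the row $i$ is uniform over nonzero rows conditional on success and (ii)~given $i$, the column $j$ is uniform over $\supp(\text{row } i)$. Item (i) follows from the symmetry of the standard $\ell_0$-sampler applied to $u$; item (ii) is more delicate because the bucket-isolation event depends on the survival pattern at level $\ell^*$. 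I would handle this by re-running the inner sampler on a fresh independent bucketing \emph{after} $i$ is fixed, so that the column distribution is an unconditional $\ell_0$-sample of row $i$'s entries, decoupling the two stages and making the analysis reduce to two black-box invocations of the standard one-level guarantee.
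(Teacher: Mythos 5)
First, a point of comparison: the paper does not prove this statement at all — it is imported verbatim as Lemma 3.3 of the consistent-hashing paper [arxiv.2204.02095] and used as a black box. So there is no in-paper proof to match your argument against; your construction stands or falls on its own.

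As a blueprint it is in the right spirit, but two steps have genuine gaps. (1) Outer stage: a ``$k$-sparse recovery sketch on the submatrix restricted to surviving rows'' is not a well-defined primitive here, because what is $k$-sparse at the good subsampling level is the number of nonzero \emph{rows}, not of nonzero \emph{entries}; a single surviving row can have up to $N$ nonzero entries, so no $\poly\log$-space sketch can recover that submatrix. You need a linear reduction of each row to one coordinate — e.g.\ a random fingerprint $u_i=\sum_j r_j M_{ij}$ over a large field, which is updated linearly by the stream and is nonzero for every nonzero row except with negligible probability — and then run the sparse recovery / $\ell_0$-sampling machinery on the length-$M$ vector $u$. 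Your ``per-row nonzero-witness subsketch'' gestures at this, but as written the step does not specify how row identities are recovered in small space, and the sketch you describe does not exist. (2) Inner stage: your fix for the conditioning issue — re-running the inner sampler on a fresh independent bucketing \emph{after} $i$ is fixed — is not available in this model. The algorithm is one pass over a dynamic stream: all hashing/bucketing randomness must be fixed before the updates arrive, and once the stream ends you cannot reprocess the data under a new hash keyed to the realized $i$. The decoupling has to be obtained by preparing the inner structures in advance with randomness independent of the outer sampler, and then arguing that, conditioned on the outer output being $i$ and on the relevant bucket's returned entry carrying row tag $i$, the column is uniform over $\supp$ of row $i$; that conditional-uniformity argument is exactly the delicate part you flagged, and it does not reduce to two black-box invocations of the one-level guarantee. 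With these two repairs the approach can plausibly be completed, but as stated the proof is incomplete at both stages.
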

    Recall that for any point $x\in \R^d$, the value of $\phi(x)$ can be computed in space $\poly(d)$.
    Hence, the image $\phi([\Delta]^d)$ has a size of at most $\Delta^d \cdot 2^{\poly(d)}\le \Delta^{\poly(d)}$, since we can encode the input point and the computation process of $\phi(x)$ by a binary string of length $O(d\log\Delta) + \poly(d)$. 
    Therefore, we assume without loss of generality that the image of $\phi$ is $[\Delta^{\poly(d)}]$ instead of $\R^d$.
    
    We convert each update to the dataset given by the stream into an update to a frequency matrix $M$, where rows correspond to all images of $\phi([\Delta]^d)$, and columns correspond to all points in $[\Delta]^d$. This leads to $M\le N\le \Delta^{\poly(d)}$.
To achieve sampling $\sigma:=O(T\log(T\delta^{-1}))$ points without replacement, 
    we maintain $\sigma$ independent two-level $\ell_0$-samplers $l_1,\dots,l_\sigma$ during the stream simultaneously using~\Cref{lem:two-level sampler} with each failing with probability $\delta/\sigma$. At the end of stream, we begin to sample.
For the $i$-th sampling, we use $l_i$ to obtain an index pair $(\phi(x_i),x_i)$ (in case $l_i$ fails, no action is taken), and then we update $l_{i+1},\dots,l_{\sigma}$ by decreasing the frequency of $(\phi(x_i),x_i)$ by $1$, which means we remove the point $x_i$ from the dataset.
By union bound,
the probability that all samplers succeed is at least $1-\delta$, and hence, by~\Cref{lem:offline isolated}, the sampled $G:=\{x_1,\dots,x_{\sigma}\}$ satisfies $|\phi(X\setminus G)|\le 2\cdot |\phi(X\setminus F)|$ for a fixed $F$ with probability at least $1-2\delta$. We finish the proof by rescaling $\delta$.

\subsection{Proof of~\Cref{lem:find light parts}}
\label{sec:find light parts}
\IndentifySparseSubsets*

Similarly, we assume that the image of $\phi$ is $[\Delta^{\poly(d)}]$ instead of $\R^d$.
We first present an offline algorithm in~\Cref{alg:light parts} and then we discuss how to implement it in dynamic streams. Assume the algorithm succeeds in returning a collection $\calS$ of subsets, as can be seen in Line~\ref{alg line:find i}, we have that, for every $y\in \phi(X)$, there exists $i\in [w]$ such that $B^{(i)}_{h_i(y)} = \phi^{-1}(y)\cap X$. This implies that we recover the partition $\calP_\phi(X)=\{\phi^{-1}(y)\cap X:y\in \phi(X) \}$ of $X$ induced by $\phi$ exactly, and return $\calS:=\{P\in \calP_\phi(X): |P|\le  M\}$, hence achieving the guarantee of~\Cref{lem:find light parts}.

\begin{algorithm}
    \caption{Identify light parts (offline)}
    \label{alg:light parts}
    \begin{algorithmic}[1]
        \State let $w\gets \Theta(\log (\delta^{-1}N))$
        \State let $h_1,\dots,h_w:[\Delta^{\poly(d)}]\to [2N]$ be independent 2-universal hash functions. \label{alg line: hash}
        \For{$i\in [w], j\in [2N]$} \label{alg line: for}
            \State let $B^{(i)}_j\gets \{x\in X: h_i(\phi(x))=j\}$ \label{alg line:B^i_j}
        \EndFor \label{alg line: endfor}
        \State return $\perp$ if $|\phi(X)| > N$
        \State $\calS\gets \emptyset$ \label{alg line:remaining begins}
        \For{$y\in \phi(X)$} \label{alg line:phi(X)}
            \State find $i\in[w]$ s.t. $\forall y'\in \phi(X),y'\neq y$, it holds that $h_i(y')\neq h_i(y)$ \label{alg line:find i} 
            \State algorithm fails if no such $i$ exists
            \State if $|B^{(i)}_{h_i(y)}|\le M$, let $\calS\gets \calS\cup \{B^{(i)}_{h_i(y)}\}$ \Comment{$B^{(i)}_{h_i(y)} = \phi^{-1}(y)$} \label{alg line:|B^i_j|<= beta}
        \EndFor
        \State return $X_{\calS}$ \label{alg line:remaining ends}
    \end{algorithmic}
\end{algorithm}

\paragraph{Streaming Implementation}
Before the stream starts, the algorithm builds the 2-universal hash functions $h_1,\dots, h_w$, which can be implemented using space $\poly(\log (\Delta^{\poly(d)})) = \poly(d\log\Delta)$. 
Once the hash functions  have been constructed, it becomes straightforward to update and maintain $B^{(i)}_j$ for $i\in[w],j\in[2 N]$ and $\phi(X)$ during the stream if there is no space limit. 
When aiming to manage them with a limited space, we first notice that we actually need a subroutine that exactly maintains $B^{(i)}_j$ (or $\phi(X)$) only when its size does not exceed $ M$ (or $ N$, respectively).
In cases where the size exceeds the threshold, it suffices to instead return a symbol, such as $\perp$, to indicate this.
Therefore, we can similarly consider the task of maintaining $B_j^{(i)}$ as a sparse recovery problem, where the goal is to exactly recover a vector $v\in [0,1]^X$ with $v_x=\mathbf{1}[x\in B_j^{(i)}]$\footnote{$\mathbf{1}[\calE]$ is an indicator variable which equals $1$ if the event $\calE$ happens, and equals $0$ otherwise.} when the support of $v$ is no greater than $ M$.
\begin{lemma}[Sparse recovery,~\cite{Cormode06Combinatorial}]
    \label{lem:sparse recovery}
    There exists a streaming algorithm that, given $0<\delta<1$, integers $I,J,K\ge 1$, and a frequency vector $v\in [-J,J]^I$ presented as a dynamic stream, where we denote its support by $\supp(v):=\{i\in[I], v_i\neq 0\}$, uses space $O(K\cdot \poly(\log(\delta^{-1}IJ)))$. If $|\supp(v)|\le K$, then with probability $1-\delta$, the algorithm returns all the elements in the support $\supp(v)$ and their frequencies. Otherwise, it returns $\perp$.
\end{lemma}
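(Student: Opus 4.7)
The plan is to implement the classical linear-sketch based exact $K$-sparse recovery scheme (in the style of Ganguly, or Cormode--Muthukrishnan). Choose a prime $p = \Theta((IJ)^c \delta^{-1})$ for sufficiently large constant $c$, and set $B := 3K$ and $L := \Theta(\log(K\delta^{-1}))$. For each table $t \in [L]$, sample an independent pairwise-independent hash function $h_t : [I] \to [B]$ and a random field element $r_t \in \mathbb{F}_p$. For every bucket $b \in [B]$ maintain three linear sketches of the coordinates hashing to that bucket:
\begin{equation*}
s_0^{(t,b)} := \sum_{i : h_t(i) = b} v_i, \quad s_1^{(t,b)} := \sum_{i : h_t(i) = b} i \cdot v_i, \quad s_f^{(t,b)} := \sum_{i : h_t(i) = b} v_i \cdot r_t^{\, i} \bmod p.
\end{equation*}
All three sketches are linear in $v$, so each stream update $v_i \mathrel{+}= \Delta$ can be processed in $\poly(\log(IJ))$ time by updating a single bucket per table. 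In addition, maintain two global fingerprints $G_0 := \sum_i v_i \cdot r^i \bmod p$ and $G_1 := \sum_i v_i \cdot \tilde r^i \bmod p$ for fresh random $r, \tilde r \in \mathbb{F}_p$, which I will use to detect $|\supp(v)| > K$.

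At query time, a bucket $(t,b)$ is declared \emph{pure} if $s_0^{(t,b)} \neq 0$, the value $j^\star := s_1^{(t,b)}/s_0^{(t,b)}$ is an integer in $[I]$, $s_0^{(t,b)} \in [-J,J]$, and $s_f^{(t,b)} \equiv s_0^{(t,b)} \cdot r_t^{\, j^\star} \pmod{p}$. In that case we emit the candidate pair $(j^\star, s_0^{(t,b)})$. Let $\widehat S$ be the union of all candidates across tables. If $|\widehat S| > K$, or if $\widehat S$ fails to reproduce either global fingerprint $G_0, G_1$, output $\perp$; otherwise output $\widehat S$. Suppose first that $|\supp(v)| \le K$. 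For a fixed $i \in \supp(v)$, pairwise independence of $h_t$ gives $\Pr[\exists j \in \supp(v)\setminus\{i\}: h_t(j) = h_t(i)] \le (K-1)/B \le 1/3$, so $i$ is isolated in table $t$ with probability $\ge 2/3$; a Chernoff bound over the $L$ independent tables then ensures that every $i \in \supp(v)$ is isolated in some table with probability $\ge 1 - \delta/4$. For impure buckets containing at least two support elements, the fingerprint identity $s_f \equiv s_0 \cdot r_t^{\, j^\star} \pmod{p}$ is a polynomial equation in $r_t$ of degree at most $I$; by Schwartz--Zippel it holds with probability at most $I/p \le \delta/(4LB)$, and a union bound over all $LB$ buckets eliminates spurious decodings with probability $\ge 1 - \delta/4$. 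Thus $\widehat S = \{(i, v_i) : i \in \supp(v)\}$ holds with probability $\ge 1 - \delta/2$, and the global fingerprint test is passed.

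The remaining issue, which is the main subtlety, is the case $|\supp(v)| > K$: one must rule out that the decoder accidentally outputs a set $\widehat S$ with $|\widehat S| \le K$ that passes every verification test. This is precisely what the two global fingerprints $G_0, G_1$ prevent: if the true vector $v$ differs from the vector $\hat v$ supported on $\widehat S$ as read off from the candidates, then $\sum_i (v_i - \hat v_i) r^i$ is a nonzero polynomial in $r$ of degree at most $I$ with coefficients in $[-J,J]$, so by Schwartz--Zippel the equality $G_0 = \sum_i \hat v_i r^i \bmod p$ holds with probability at most $I/p \le \delta/4$; the same holds independently for $\tilde r$, so with probability $\ge 1 - \delta/2$ the algorithm correctly outputs $\perp$. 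For space, each of the $LB = O(K \log(K\delta^{-1}))$ buckets stores $O(\log(IJ) + \log p) = O(\log(\delta^{-1}IJ))$ bits, the $L$ hash functions and seeds take $O(L \log(Ip))$ bits, and the two global fingerprints are $O(\log p)$ bits, for a total of $O(K \cdot \poly(\log(\delta^{-1}IJ)))$ bits as claimed. The whole construction is a linear sketch and thus handles insertions and deletions symmetrically in the dynamic streaming model.
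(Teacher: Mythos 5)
Your construction is correct, but note that the paper does not prove \Cref{lem:sparse recovery} at all: it is imported as a black box from the cited reference~\cite{Cormode06Combinatorial}, so there is no in-paper argument to compare against. What you have written is essentially the standard construction underlying that citation: hash into $B=O(K)$ buckets over $L=O(\log(K\delta^{-1}))$ independent tables, keep the count, index-weighted count, and a polynomial fingerprint per bucket, decode from pure buckets, and use a fresh global fingerprint to certify the reconstructed vector and hence to output $\perp$ when $|\supp(v)|>K$. All the steps check out: isolation of each support element in some table follows from pairwise independence (your invocation of ``Chernoff'' is really just independence across tables, $(1/3)^L\le\delta/(4K)$, but the bound is right); the purity test is a Schwartz--Zippel argument on a polynomial that is genuinely nonzero whenever a bucket holds two or more support elements, valid because $r_t$ is independent of the bucket contents; and the $\perp$-detection works because $r,\tilde r$ are independent of $\widehat S$ and the difference polynomial has a coefficient that is nonzero modulo $p$ (its coefficients lie in $[-2J,2J]$, not $[-J,J]$ as you wrote, but $p\gg IJ$ makes this immaterial). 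The space accounting matches the claimed $O(K\cdot\poly\log(\delta^{-1}IJ))$, and linearity of all sketches gives the dynamic-stream guarantee, so your proof is a faithful self-contained substitute for the citation rather than a genuinely different route.
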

As a result, we can run $2 N w$ instances of the sparse recovery algorithm of~\Cref{lem:sparse recovery} in parallel, with each instance corresponding to one of the sets $B_i^{(j)}$.
Similarly, we can employ a separate instance for the purpose of maintaining $\phi(X)$. The total space is $2 N w\cdot  M\poly(d\log(\delta^{-1}\Delta)) +  N\poly(d\log(\delta^{-1}\Delta)) = \tilde O( N M \poly(d\log(\delta^{-1}\Delta)))$. The remaining steps from Line~\ref{alg line:remaining begins} to Line~\ref{alg line:remaining ends} are easy to implement after the stream ends.
\paragraph{Success Probability} The algorithm fails if and only if one of the instances of sparse recovery fails, or it fails to find an $i\in [w]$ satisfying the condition of Line~\ref{alg line:find i} for some $y\in \phi(X)$. 
For the former, the probability that one instance of sparse recovery fails can be reduced to $\delta/(2 N w + 1)$, at the cost of increasing the space by factor of $O(\log ( N w)) = O(\log  N)$. Therefore, applying union bound, the probability that one of the $(2 N w + 1)$ instances fails is at most $\delta$.

As for the latter, we first consider a fixed $y\in \phi(X)$, and $i\in [w]$, the probability of the collision of $h_i(y)=h_i(y')$ for some $y'\in \phi(X)\setminus \{y\}$ is 
\begin{equation*}
    \Pr[h_i(y)=h_i(y')]\le \frac{1}{2N}.
\end{equation*}
By union bound over all $y'\in \phi(X)\setminus\{y\}$, we have 
\begin{equation*}
    \Pr[\exists y'\in \phi(X)\setminus\{y\}, h_i(y)=h_i(y')]\le \frac{|\phi(X)|-1}{2N}\le \frac{1}{2}
\end{equation*}
Recall that $h_1,\dots,h_w$ are independent hash function, then the probability of that such event happens for every $i\in[w]$ simultaneously is at most $(1/2)^w\le \delta/N$. As a result, we can find a desired $i\in[w]$ for $y\in \phi(X)$ that satisfies the condition of Line~\ref{alg line:find i} with probability at least $1-\delta/N$. Applying union bound again, we have that with probability at least $1-\delta$, we can find a desired $i\in[w]$ for every $y\in\phi(X)$ simultaneously. Overall, the success probability of the streaming version of~\Cref{alg:light parts} is at least $1-2\delta$. It remains to scale $\delta$ by a constant. \section{Streaming Lower Bounds Based on INDEX}
\begin{claim}
    \label{claim:lb_m}
    For every integer $m\ge 10$, any algorithm that, with constant probability, computes an $m$-approximation $g>0$ to the optimal objective for \ProblemName{$(1,m)$-Median} of a dataset $X\subseteq [2m^{10}]^2$ presented as an insertion-only point stream must use space $\Omega(m)$.
\end{claim}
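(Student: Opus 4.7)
My plan is to reduce the one-way communication problem $\mathrm{INDEX}_m$ (Alice holds $x \in \{0,1\}^m$, Bob holds $i \in [m]$, Bob must output $x_i$) to streaming $m$-approximation of $(1,m)$-Median; $\mathrm{INDEX}_m$ has the classical $\Omega(m)$ lower bound against constant-error randomized one-way protocols. Given a streaming algorithm that uses $s$ bits of memory, Alice will insert her part of the gadget, forward the $s$-bit memory state to Bob, and let Bob complete the insertion stream and query the estimate; this yields a one-way protocol for $\mathrm{INDEX}_m$ using $s$ bits of communication, so $s = \Omega(m)$.

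The gadget uses $n = m + 2$ points so that with $m$ outliers exactly $2$ inliers remain. For $z = 1$, the optimal $1$-median of two points $p,q$ has cost $\dist(p,q)$ (achieved by placing the center anywhere on the segment $[p,q]$), so $\OPT$ for the $(1,m)$-Median instance equals the minimum pairwise distance in the dataset. Set $L := m^3$, and for each $j \in [m]$ let $P_j^1 := (3jL, 0)$ and $P_j^0 := (3jL + L, 0)$. Alice inserts the $m$ distinct points $P_1^{x_1}, \ldots, P_m^{x_m}$. Bob then inserts a probe $q := (3iL + 1, 0)$ and a filler $f := (2m^5, 0)$. All coordinates are positive integers bounded by $2m^5 \le 2 m^{10}$, so $X_{x,i} \subseteq [2m^{10}]^2$.

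A short case analysis verifies four distance bounds: between any two Alice points the distance is at least $2L$; between $q$ and any Alice point with index $j \ne i$ it is at least $2L + 1$; between $f$ and any other point it exceeds $m^4$; and between $q$ and $P_i^{x_i}$ it equals $1$ if $x_i = 1$ and $L - 1 = m^3 - 1$ if $x_i = 0$. Hence $\OPT(X_{x,i}) = 1$ in the YES case ($x_i = 1$) and $\OPT(X_{x,i}) = m^3 - 1$ in the NO case, a multiplicative gap exceeding $m^2$. Since $\OPT \ge 1$ in both cases, any $m$-approximation $g$ is automatically positive, and under the standard convention $g \in [\OPT/m, \, m \cdot \OPT]$ we get $g \le m$ in the YES case and $g \ge (m^3 - 1)/m > m$ in the NO case for $m \ge 10$. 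Consequently the threshold test ``output $x_i = 1$ iff $g \le m$'' decides $\mathrm{INDEX}_m$ with the same success probability as the streaming algorithm, which by the $\Omega(m)$ lower bound for $\mathrm{INDEX}_m$ forces $s = \Omega(m)$.

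The only non-routine step is the geometric design of the gadget: the separation parameter $L$ must be chosen large enough that the distance from $q$ to $P_i^0$ exceeds a factor of $m$ over the distance from $q$ to $P_i^1$, yet small enough that all coordinates remain in $[2m^{10}]^2$. The choice $L = m^3$ balances both; verifying the four distance bounds listed above is then a direct computation, and reducing streaming space to one-way communication complexity is standard.
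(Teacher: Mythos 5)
Your proposal is correct and follows essentially the same route as the paper: a reduction from INDEX in which Alice encodes her bits as stream points, Bob inserts a probe so that the $(1,m)$-Median optimum on the resulting $(m+2)$-point set reduces to a closest-pair distance with a gap exceeding the approximation factor $m$ (the paper encodes the bit in a $y$-coordinate shift giving $1$ versus $2m^2$, while you encode it in an $x$-offset with a filler point, giving $1$ versus $m^3-1$). One cosmetic fix: since the paper's convention is $[n]=\{1,\dots,n\}$, your points should use $y$-coordinate $1$ rather than $0$ to lie in $[2m^{10}]^2$; this changes nothing else in the argument.
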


We prove~\Cref{claim:lb_m} based on the INDEX problem, as stated below.

\begin{definition}[INDEX problem]
    Alice is given a vector $x\in \{0,1\}^n$, and Bob is given an index $i\in[n]$. Alice can send Bob exactly one message $M$, and Bob needs to use his input $i$ and this message $M$ to compute $x_i\in \{0,1\}$.
\end{definition}
The well-known fact is that achieving constant probability success in the INDEX problem requires a communication complexity of at least $|M|=\Omega(n)$ (see e.g.,~\cite{DBLP:books/daglib/0011756, DBLP:journals/cc/KremerNR99, DBLP:journals/toc/JayramKS08}).

\begin{proof}[Proof of~\Cref{claim:lb_m}]
    We reduce the INDEX problem to \ProblemName{$(1,m)$-Median} problem. Assume that there exists a streaming algorithm $\calA$ that, with constant probability, reports a $m$-approximation to the optimal objective for \ProblemName{$(1,m)$-Median} of a dataset from discrete Euclidean space $[2m^{10}]^2$ presented as an insertion-only stream using space $o(m)$.
Given a vector $x\in\{0,1 \}^{m+1}$, Alice constructs an insertion-only point stream as follows: for every $i\in [m+1]$, if $x_i = 0$, she adds a point $y_i:=(m^5\cdot i,1)$ to the stream; otherwise, she adds a point $y_i:=(m^5\cdot i,2m^2 + 2)$. The resulting dataset is $Y:=\{y_1,\dots, y_{m+1}\}\subseteq [2m^{10}]^2$. Then, Alice runs algorithm $\calA$ on the stream representing $Y$ and sends the internal state of $\calA$ to Bob. 
    
    Suppose Bob is given an index $i\in[m+1]$.
    Bob resumes algorithm $\calA$ and continues to run $\calA$ on a stream consisting of only one insertion, the point $y_i':=(m^5\cdot i,2)$. Hence, algorithm $\calA$ is, in fact, run on a dataset $Y':=\{y_1,\dots, y_{m+1}, y_i'\}$. We can check that the optimal objective of \ProblemName{$(1,m)$-Median} on $Y'$ is $\dist(y_i,y_i')$, which is
    \begin{equation*}
    \dist(y_i,y_i') = \begin{cases}
    1 & \text{if } x_i = 0\\
    2m^2 & \text{if } x_i = 1
    \end{cases}.
    \end{equation*}
    Therefore, when querying $\calA$, with constant probability, $\calA$ will return a value $g>0$ such that $g\le m$ if $x_i=0$ and $g\ge 2m$ otherwise. Such a separation allows Bob to determine the value of $x_i$. This provides a communication protocol for the INDEX problem with a communication complexity equal to the space complexity of $\calA$, which is $o(m)$. This contradicts the $\Omega(m)$ lower bound for the INDEX problem, concluding the proof.
\end{proof}
 \end{appendices}
\end{document}